\documentclass[12pt]{article}
\usepackage{float}
\usepackage{adjustbox}
\usepackage{setspace,graphicx,epstopdf,amsmath,amsfonts,amssymb,amsthm,versionPO,bm}
\usepackage{marginnote,datetime,enumitem,subfigure,rotating,fancyvrb}
\usepackage[hyperfootnotes=false]{hyperref}
\usepackage[longnamesfirst]{natbib}
\usepackage{relsize}
\usepackage{lipsum}
\usepackage[toc,page]{appendix}
\usepackage{tikz}
\usepackage{comment}
\usetikzlibrary{positioning}
\usetikzlibrary{arrows.meta}
\usepackage{pgfplots}
\pgfplotsset{compat=1.18}
\tikzset{main node/.style={circle,fill=blue!20,draw,minimum size=1cm,inner sep=0pt},
}
\usepackage{subcaption}
\newlength{\PlotWidth}
\newlength{\PlotHeight}

\usdate
\excludeversion{notes}  
\includeversion{links}      
\iflinks{}{\hypersetup{draft=true}}
\ifnotes{%
  \usepackage[margin=1in,paperwidth=10in,right=2.5in]{geometry}%
  \setlength{\marginparwidth}{2cm}
  \usepackage[textwidth=1.4in,shadow,colorinlistoftodos]{todonotes}%
}{%
  \usepackage[margin=1in]{geometry}%
  \setlength{\marginparwidth}{2cm}
  \usepackage[disable]{todonotes}%
}
\usepackage{bbm}

\makeatletter\let\chapter\@undefined\makeatother

\usepackage{indentfirst} 
\usepackage{wrapfig,lipsum,booktabs}
\usepackage{dsfont}

\newtheorem{proposition}{Proposition}

\newtheorem{lemma}{Lemma}
\newtheorem{defn}{Definition}

\usepackage{amsmath,amssymb,amsthm}
\newtheorem{theorem}{Theorem}
\newtheorem{corollary}{Corollary}

\theoremstyle{definition}

\DeclareMathOperator{\Dist}{Dist}
\DeclareMathOperator{\Agg}{Agg}

\newrobustcmd*{\citefirstlastauthor}{\AtNextCite{\DeclareNameAlias{labelname}{given-family}}\citeauthor}
\begin{document}
\title{Misrepresentation in District-Based Elections\thanks{We are grateful to Doron Ravid and Nicholas Stephanopoulos for helpful discussions and comments.}}
\author{Yunus C. Aybas\thanks{Texas A\&M University, Department of Economics.}
  \and Oguzhan Celebi\thanks{University of Michigan, Ross School of Business.}
\and Surabhi Dutt\footnotemark[2]}
\onehalfspacing 
\maketitle
\vspace{-.2in}
\begin{abstract}
State delegations are often chosen through single-member district elections, creating a tension between respecting district majorities and reflecting the statewide electorate. 
First-past-the-post (FPTP) follows each district’s majority but can yield a delegation seat share far from the party’s statewide vote share. In contrast proportional representation (PR)---making a party’s seat share correspond its statewide vote share---requires departing from local majorities in some districts. We measure misrepresentation as a weighted sum of within-district misrepresentation, measured by the share of voters locally represented by their non-preferred party, and statewide misrepresentation, measured by the deviation of a party’s seat share from its statewide vote share.  The misrepresentation-minimizing rule is a cutoff rule determined by the relative weight of statewide misrepresentation. As this weight rises, the cutoff continuously shifts from FPTP’s 50\% to the PR cutoff that aligns the delegation’s seat share with statewide vote shares. This shift makes gerrymandering harder, offering an alternative lever to limit gerrymandering. Using a majorization-based metric of geographic concentration, we show that concentrating support reduces misrepresentation only under the misrepresentation-minimizing rule. Within this class, FPTP and PR are uniquely characterized by the absence of cross-district spillovers and by gerrymandering-proofness, respectively. Using U.S. House elections, we infer the weights that rationalize outcomes, offering a novel metric for evaluating representativeness of district boundaries and electoral reform proposals. 
\end{abstract}
\clearpage
\section{Introduction}
In a representative democracy, elections translate votes into representation. In many legislatures, representation is determined seat by seat in single member districts, yet the overall partisan composition of a state’s delegation matters for policy. This is the case, for example, in U.S. House elections within a state and in elections for U.S. state legislatures, where each district elects one representative and the winners collectively form the state’s delegation. These structure raises two competing representational criteria: alignment with district level majorities and alignment between the delegation and the statewide popular vote. A rule that awards each seat to the district majority can yield a delegation far from the statewide vote, while a rule that enforces statewide proportionality can require departing from district level majorities in some districts.

Despite the importance of this tension, much of the existing literature either proposes descriptive measures of partisan bias and seats–votes distortion or develops axiomatic characterizations of familiar electoral rules, such as first past the post and proportional representation, by identifying properties they satisfy. But neither makes explicit the central tradeoff between district-level outcomes and statewide proportionality, or how a rule should balance the two.

This paper develops a tractable model that makes the tradeoff between local district representation and statewide proportionality explicit. We evaluate any seat allocation using two measures of distortion. First, district misrepresentation is the mass of voters whose district seat is held by their non preferred party. Second, statewide misrepresentation measures the gap between a party’s statewide vote share and its seat share in the delegation. We combine the two using a weighted sum. The weight captures the importance of statewide proportionality relative to district representation: it is the number of additional district level misrepresented voters the the society or the planner is willing to accept to move the statewide seat total one seat closer to proportionality.

Our first main result is a structural characterization. For any relative weight on statewide proportionality, the misrepresentation-minimizing rule is a vote-share cutoff applied uniformly across districts. The cutoff depends on the weight and on vote distribution. When the weight on statewide proportionality is zero, the optimal cutoff coincides with first-past-the-post (a 50\% threshold). As the weight increases, the cutoff shifts away from 50\% in a way that endogenously compensates for statewide under- or over-representation. If the party is short of proportionality, a higher weight pushes the cutoff downward and counts more narrow losses (under first-past-the-post) as wins, increasing the seat total. Equivalently, for the party above proportionality, a higher weight pushes the cutoff upward, requiring a larger margin for victory. In the limit, the rule delivers a proportional seat allocation. Taken together, the cutoffs therefore define a simple one parameter family that moves smoothly from first past the post to proportional representation.

Beyond the cutoff characterization, we show that varying the weight traces the Pareto frontier of feasible tradeoffs between district and statewide misrepresentation, indexing the efficient set of delegations and making transparent the normative choice between local and statewide representation. It also places different rules on a common scale, since each rule corresponds to an implicit substitution rate between district-level misrepresentation and statewide proportionality. Finally, the framework yields an empirical tool. Given a state's district vote shares and an observed seat outcome, one can recover the set of weights under which that outcome is misrepresentation-minimizing, and use it to compare outcomes across time within a state and across states.

Having characterized the misrepresentation-minimizing family, we turn to its properties. Holding a party's statewide vote share fixed, we ask how changes in the geographic concentration of support affect misrepresentation. The key object is cumulative support in the party's strongest districts, which governs within-district misrepresentation for a fixed seat total. This leads naturally to the majorization order as a notion of greater concentration: after ranking districts from strongest to weakest, a profile is more concentrated if cumulative support in the top $k$ districts is higher for every $k$, without changing the statewide average. We show that this ordering has a clear implication: for any fixed seat total, a more concentrated profile yields weakly lower misrepresentation. The same comparison continues to hold when the seat total is chosen optimally by the misrepresentation-minimizing rule. By contrast, it can fail for fixed-threshold rules such as first-past-the-post, since the induced seat total can jump across profiles with the same statewide average. More generally, within our misrepresentation-minimizing rule family, majorization monotonicity holds if and only if the rule is optimally designed for the weight used to measure misrepresentation.

Two axioms characterize the widely used election rules within the misrepresentation-minimizing family. The first is \emph{strong monotonicity}: increasing a party's support district by district should not cause it to lose any district it previously won. Within the family, this property holds only at the first-past-the-post endpoint, where the proportionality weight is zero and districts winners are determined by simple majority. Once the weight is positive, the statewide term links districts, so gains in one district can change which set of districts is selected, violating strong monotonicity on some profiles. The second axiom is \emph{gerrymandering-proofness}: no party should be able to increase its seat total by reshuffling its support across districts while holding its statewide vote share fixed. Within the family, this property holds only at the proportional-representation endpoint, where the seat total is directly determined by the statewide vote share. Together, these results make transparent the guiding objectives embodied by widely used election rules.

We also move beyond the binary notion of gerrymandering-proofness by quantifying a rule's vulnerability to gerrymandering. Fixing a party's statewide vote share, we define \emph{gerrymandering cost} as the minimal change in district vote shares needed to obtain a target seat outcome. For a party that is overrepresented relative to their statewide vote-share, we show that this cost rises strictly as the rule moves away from first-past-the-past along the misrepresentation-minimizing family, placing more weight on statewide proportionality. This points to the misrepresentation-minimizing family as a complementary lever to proposals that focus on district boundaries. One can make gerrymandering harder without completely abandoning district-based representation and still selecting a delegation on the Pareto frontier of feasible district–statewide misrepresentation levels.

Finally, we take the model to the data. Using U.S. House elections, we treat each state-year’s district vote shares as a profile and invert our characterization to recover the range of weights under which the observed first-past-the-post seat outcome is misrepresentation-minimizing. This yields a profile-based measure of how low the weight on statewide representation must be for the realized seat total to remain misrepresentation-minimizing. We compute these implied weights by state and year and track how they evolve across election cycles. In the data, the implied cutoffs distinguish years in which FPTP remains misrepresentation-minimizing for a wide range of weights from years in which even a modest weight on statewide representation shifts the optimal seat total away from the FPTP outcome. This provides a transparent way to compare states and to monitor how redistricting and changing vote distributions shift the implied representational tradeoff, in light of ongoing debates over district design.

\paragraph{Related Literature.}
A large political-economy literature studies how electoral institutions shape party competition, political selection, and policy outcomes. A recurring theme is that majoritarian and proportional rules induce different incentives for parties and politicians, with implications for government spending, accountability, and polarization \citep{persson2003economic,milesiferretti2002electoral, cox1997making}.

A complementary literature develop measures of how votes translate into representation. \cite{loosemore1971theoretical} measures disproportionality using the vote--seat gap. Our statewide distortion is the vote--seat gap scaled by  number of districts. We pair it with a district-level misrepresentation measure---the mass of voters whose district seat is held by their nonpreferred party---which is closely related to, but conceptually distinct from, wasted-vote measures such as the efficiency gap \citep{stephanopoulos2015partisan}. Our combination of the two measures puts local representation and statewide proportionality on a common scale, allows us to analyze the tradeoff jointly instead of considering the two distortions in isolation and contribute to the literature on election rules.

Our work also relates to social choice, implementation, and electoral-rule design, where axioms such as  monotonicity, consistency, strategy-proofness, and neutrality play a central role \citep{arrow1951socialchoice,gibbard1973manipulation,satterthwaite1975strategy,maskin1999nash,balinski1977apportionment,balinski1979criteria,balinski1982fair}. Our approach of starting from an explicit objective and characterizing the full set of misrepresentation  minimizing rules is complementary to axiomatization. We also axiomatize two widely used election rules as the two endpoints of this family---first-past-the-post and proportionality, and show there is a family of optimal rules that interpolate between them.

The role of district boundaries and geographic concentration in generating partisan bias has been studied across diverse fields. In political science, \citet{chen2013unintentional} emphasizes that geographic clustering can create vote--seat distortions even absent intentional manipulation. Operations research treats gerrymandering primarily as a districting problem and designs district maps subject to constraints such as contiguity, compactness, and population equality. For example, \cite{mehrotra1998optimization} formulate districting as constrained graph partitioning and develop an optimization-based heuristic to generate acceptable plans. In economics, models study optimal (partisan or social) districting and the induced seat--vote curve, including \citet{coateknight2007socially} and \citet{friedman2008optimal}. Relative to this work, our focus is different. We do not study the optimal \emph{redistricting} problem from the perspective of the political parties. Instead, we take district boundaries and the induced district vote distribution as given and ask what \emph{seat-allocation rule} best maps that profile into representation, given society's preferences over different dimensions of representation.

Relatedly, \cite{shechter2025congressional} studies congressional apportionment across states and proposes a multiobjective optimization framework that traces Pareto-optimal allocations across competing fairness criteria. A parallel multiobjective perspective appears in work on political districting. \cite{swamy2023multiobjective} formulates districting as a scalable multiobjective optimization problem and studies Pareto-efficient plans, and \cite{ricca2013political} surveys classical and modern approaches to political districting, including multiobjective models. While these papers focus on district design and apportionment rather than our within-state rule-design problem, they share our emphasis on characterizing tradeoffs through an efficient frontier.

Our characterization of the family of misrepresentation minimizing rules also enables our empirical inversion strategy---recovering the implicit weight on statewide representation that rationalizes an observed seat outcome given a state’s district vote shares---is in the spirit of revealed preference. It infers the tradeoff embedded in observed institutional outcomes from choice data and shows how small the (relative) weight on the statewide representation must be to rationalize the current majority rule used in U.S. House of Representatives election. This parallels classic revealed-preference programs that recover preferences from choices \citep{richter1966revealed,afriat1967efficiency}.

\section{The Model}
We study a two-party election between parties $A$ and $B$ that determines the composition of a statewide delegation consisting of $N\ge 3$ single-member district seats. Districts have equal population, and we take the distribution of electoral support across districts as given. In district $d\in\{1,\dots,N\}$, party $A$'s vote share is $p_d\in[0,1]$, so party $B$'s vote share is $1-p_d$. The statewide vote share of $A$ is the district average $\bar{\mathbf{p}} \;:=\; \frac{1}{N}\sum_{d=1}^N p_d$, and we use $a:= N \bar{\mathbf{p}}$ to denote aggregate vote share of Party $A$.

A \emph{seat-allocation rule} assigns the $N$ district seats as a function of the vote-share profile $\mathbf p=(p_1,\ldots,p_N)$. We denote the order statistics of $\mathbf p$ by $p_{(1)} > \cdots > p_{(N)}$.\footnote{We assume all districts have different vote shares and no district is tied, that is, $p_i \neq 1/2$ for all $i$, only for notational convenience.} Formally, for each $\mathbf p$, the rule $R$ selects a subset \(R(\mathbf p)\subseteq \{1,\dots,N\},\) where $d\in R(\mathbf p)$ means that party $A$ wins district $d$ and party $B$ wins the complement. It is convenient to represent the same allocation by an indicator vector. Let $x_d\in\{0,1\}$ equal $1$ if $d\in R(\mathbf p)$ and $0$ otherwise, and write $\mathbf x=(x_1,\dots,x_N)$. The total number of seats awarded to $A$ is
\[  S(\mathbf p)\;:=\;|R(\mathbf p)|\;=\;\sum_{d=1}^N x_d \in \{0,1,\dots,N\}.\]

We evaluate an electoral outcome by how accurately it translates votes into political representation. In district-based elections, two objectives are natural and often in tension. First, \emph{within each district}, the winning party should represent as many local voters as possible. Second, \emph{statewide}, the party composition of the delegation should reflect the statewide popular vote. We treat $\mathbf p$ as fixed until Section~\ref{sec:comparative-statics}, where we study comparative statics, and therefore suppress this dependence in the notation.

\paragraph{District-level misrepresentation.}
Each district contributes a single seat to the delegation. If district $d$ is awarded to Party~$A$, then the voters in $d$ who prefer $B$---a mass $1-p_d$---are represented by the \emph{opposite} party. If district $d$ is awarded to $B$, then the misrepresented voters are the $A$ voters, with mass $p_d$. Hence, for an allocation $\mathbf x$, total district-level misrepresentation is
\begin{equation*}   
\Dist(\mathbf x)  :=\sum_{d=1}^N \Big( x_d(1-p_d) + (1-x_d)p_d\Big).
\end{equation*}
Since each district is normalized to have equal population, $\Dist(\mathbf x)\in[0,N]$ is the statewide mass of voters whose district seat is held by their nonpreferred party. 

\paragraph{Statewide (Aggregate) vote--seat misrepresentation.}
The same election determines the party composition of the statewide delegation. If party $A$ is awarded $S=\sum_{d=1}^N x_d$ seats, then its seat share is $S/N$, while its statewide vote share is $\mathbf{\bar{p}}$. A natural benchmark is proportionality, which requires $S/N=\mathbf{\bar{p}}$. We measure the departure from proportionality by the vote--seat gap $|\mathbf{\bar{p}} - S/N|$.\footnote{With more than two parties, a standard analogue is the \cite{loosemore1971theoretical} index.} Because $|\mathbf{\bar{p}} - S/N|$ is expressed in shares while $\Dist(\mathbf x)$ is expressed in (normalized) voters, we scale the gap by $N$ and define statewide misrepresentation as the deviation of the seat total from the proportional seat count:
\begin{equation*} 
  \Agg(\mathbf x)  := N\left|\mathbf{\bar{p}}-\frac{S}{N}\right| = \big|a - S\big|.
\end{equation*}
Thus, $\Agg(\mathbf x)\in[0,N]$ is the number of seats by which the delegation deviates from proportionality.

\paragraph{Total misrepresentation.}
We combine the two distortions using the weighted sum
\begin{equation*} 
\Phi(\mathbf x; w_A)  := \Dist(\mathbf x) + w_A \Agg(\mathbf x),
\end{equation*}
where $w_A\in[0,\infty)$ can be interpreted as the normative weight used by a designer, court, or society when balancing local majorities against statewide proportionality. At $w_A=0$, the objective reduces to pure district-level representation. As $w_A$ increases, it places greater weight on matching seat shares to vote shares.\footnote{Proposition \ref{prop:frontier_scalarization} shows that varying $w_A$ is sufficient to trace the Pareto frontier.}  Since both components lie in $[0,N]$, $w_A$ can be interpreted as the amount of within-district misrepresented voters the designer is willing to accept in order to reduce the vote--seat gap by one seat (and, in the data, it can be inferred as the weight that rationalizes an observed seat outcome as misrepresentation-minimizing).

\section{Misrepresentation-Minimizing Allocation Rules}\label{minimizing_rule}

In this section, we study allocation rules that minimize total misrepresentation for a given weight $w_A \ge 0$. In particular, for each district vote-share profile $\mathbf p$, we characterize the allocations that minimize total misrepresentation and the rules that implement them.

Formally, for a given $w_A\ge 0$ and profile $\mathbf p$, an allocation $x$ \textit{minimizes misrepresentation} at weight $w_A$ if $x \in \arg\min_{\mathbf x\in\{0,1\}^N}\ \Phi(\mathbf x;w_A)$. A \emph{misrepresentation-minimizing rule at weight $w_A$} is any $R_{w_A}$ such that for every profile $\mathbf p$, the induced indicator vector $\mathbf x(\mathbf p)$ minimizes misrepresentation.\footnote{When the set of minimizers is not single-valued, one may fix any deterministic selection. None of our structural results depend on how ties among minimizers are resolved.}

Our goal is to characterize the \emph{misrepresentation-minimizing family} of rules $\{R_{w_A}\}_{w_A\ge 0}$. The key observation is that the minimization separates into two steps. First, choose how many seats Party~$A$ receives. Second, conditional on that seat total, choose which districts Party~$A$ wins. We solve the conditional assignment problem first and then determine the optimal seat total.

\subsection{Optimal District Allocation Given Seat Total}

Fix a seat total $S\in\{0,1,\dots,N\}$ and consider allocations $\mathbf x$ satisfying $\sum_{d=1}^N x_d=S$. For any such allocation, the statewide term satisfies $\Agg(\mathbf x)=|a-S|$, so it is constant within this class. Therefore, conditional on awarding Party~$A$ exactly $S$ seats, minimizing $\Phi(\mathbf x;w_A)$ is equivalent to minimizing $\Dist(\mathbf x)$.

The district-level tradeoff is immediate. If district $d$ is awarded to $A$, the misrepresented mass is $1-p_d$. If it is awarded to $B$, it is $p_d$. Relative to awarding district $d$ to $B$, awarding it to $A$ changes district-$d$ misrepresentation by
\[  (1-p_d)-p_d \;=\; 1-2p_d. \]
This incremental cost is lower when $p_d$ is higher. Intuitively, if Party~$A$ is awarded $S$ seats, it is optimal to take those seats from districts where $A$ is strongest. An allocation $\mathbf x$ is a \emph{top-$S$ allocation} if it awards Party~$A$ the $S$ districts with the highest vote shares.

\begin{lemma}\label{lem:TopS}
  Among all allocations $\mathbf x$ with $\sum_{d=1}^N x_d=S$, total misrepresentation $\Phi(\mathbf x;w_A)$ is minimized by $\mathbf{x}$ if and only if $\mathbf{x}$ is a top-$S$ allocation.
\end{lemma}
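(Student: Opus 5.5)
Conditional on $\sum_d x_d=S$, the statewide term satisfies $\Agg(\mathbf x)=|a-S|$ for every allocation in the class, so it is a constant. Minimizing $\Phi(\mathbf x;w_A)$ over the class is therefore the same as minimizing $\Dist(\mathbf x)$. I would rewrite the district term as
\[
\Dist(\mathbf x)=\sum_{d=1}^N p_d+\sum_{d=1}^N x_d(1-2p_d)=a+S-2\sum_{d\in R}p_d ,
\]
where $R=\{d:x_d=1\}$ and $|R|=S$. Within the class, $a+S$ is also fixed. So the problem reduces to maximizing $\sum_{d\in R}p_d$ over subsets $R$ of size $S$.

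For the ``if'' direction, I would order the districts as $p_{(1)}>\cdots>p_{(N)}$ and let $T$ be the set of districts carrying the top $S$ order statistics. Take any $R$ with $|R|=S$. Then $|R\setminus T|=|T\setminus R|$. Every element of $T\setminus R$ has a vote share at least $p_{(S)}$, and every element of $R\setminus T$ has a vote share at most $p_{(S+1)}<p_{(S)}$. Pairing them off gives $\sum_{T}p_d\ge\sum_R p_d$, so the top-$S$ allocation is a minimizer. The same step can be phrased as an exchange argument: swapping a non-top district in $R$ for a top district outside $R$ changes $\Dist$ by $2(p_j-p_i)<0$.

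For the ``only if'' direction, suppose $R\neq T$. Then $R\setminus T$ is nonempty. The pairing above then yields a strict inequality, because the paper's convention is that all $p_d$ are distinct, so $p_{(S)}>p_{(S+1)}$. Hence $\Dist$ at $R$ is strictly larger than at $T$, and $R$ is not a minimizer. The boundary cases $S=0$ and $S=N$ are trivial, since only one allocation exists in each.

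The main point needing care is the strictness in the ``only if'' direction. It relies on the distinct-vote-share assumption from the footnote; with ties at the $S$-th position, uniqueness would fail and the statement would have to be read up to ties. Otherwise the argument is routine.
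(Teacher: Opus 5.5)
Your proof is correct and follows the paper's argument. Both proofs note that $\Agg$ is constant once the seat total is fixed, rewrite $\Dist(\mathbf x)=a+S-2\sum_{d\in R}p_d$, and reduce the problem to maximizing $\sum_{d\in R}p_d$ over sets with $|R|=S$; you additionally spell out the pairing/exchange step and the role of distinct vote shares in the strictness, which the paper simply asserts.
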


\subsection{Optimal Seat Total}
By Lemma~\ref{lem:TopS}, in the optimal allocation Party~$A$ is assigned its $S$ strongest districts. Therefore, for the resulting district-level misrepresentation we define

\[ \Dist(S):=\Dist(\mathbf x^{\,S})  = \sum_{i\le S}(1-p_{(i)})+\sum_{i>S}p_{(i)} = S + N\mathbf{\bar{p}} - 2\sum_{i\le S} p_{(i)},
\]
where $\mathbf{x}^S$ is a top-$S$ allocation.

Since the statewide term depends on an allocation only through the seat total, the problem reduces to choosing the seat total $S$. Define
\begin{equation*} 
  \Phi(S;w_A)  := \min_{\mathbf x:\,\sum_d x_d=S}\big\{\Dist(\mathbf x)+w_A\,\Agg(\mathbf x)\big\}  =  \Dist(S) + w_A\,\big|a - S\big|.
\end{equation*}

Fix $S\in\{0,1,\dots,N-1\}$ and consider increasing Party~$A$'s seat total from $S$ to $S+1$ under the top-$S$ allocation. Define the forward difference \[ \Delta^+(S;w_A)\;:=\;\Phi(S+1;w_A)-\Phi(S;w_A). \] with the endpoint conventions $\Delta^{+}(-1;w_A):=-\infty$ and $\Delta^{+}(N;w_A):=+\infty$. This one-seat change has two components: a statewide effect through $|a-S|$, and a district effect through the marginal district that flips from $B$ to $A$.

\paragraph{Statewide effect.}
Holding $\mathbf{\bar{p}}$ fixed, increasing $S$ changes the statewide misrepresentation term by
\begin{equation*} 
  |a-(S+1)|-|a-S| =
  \begin{cases}
    -1, & S\le a-1,\\[2pt]
    1 - 2(a-\lfloor a \rfloor), & S=\lfloor a\rfloor  ,\\[2pt]
    +1, & S\ge a.
  \end{cases}
\end{equation*}

When $S+1\le a$, Party~$A$ is underrepresented by at least one seat relative to proportionality, so awarding one more seat moves the delegation one seat \emph{closer} to proportionality. When $S\ge a$, Party~$A$ is at or above the proportional seat count, so awarding one more seat moves the delegation one seat \emph{farther} from proportionality. In the edge case $S=\lfloor a\rfloor$, adding one seat switches the proportionality gap from $|a-\lfloor a\rfloor|$ to $|\lceil a\rceil-a|$. Thus the change depends on whether $a$ is closer to $\lfloor a\rfloor$ or to $\lceil a\rceil$, but it is always between $-1$ and $+1$.

\paragraph{District effect.}
Under the top-$S$ allocation, increasing the seat total from $S$ to $S+1$ flips the $(S{+}1)$st-ranked district from $B$ to $A$. In that district, the misrepresented mass changes from $p_{(S+1)}$ to $1-p_{(S+1)}$, so district-level misrepresentation changes by
\[ \Dist(S+1)-\Dist(S) \;=\; (1-p_{(S+1)})-p_{(S+1)} \;=\; 1-2p_{(S+1)}. \] 
This change is negative when $p_{(S+1)}>\tfrac12$ (adding the seat awards it to the local majority) and positive when $p_{(S+1)}<\tfrac12$ (adding the seat awards it to the local minority).

\paragraph{Total effect.}
Combining the two components yields the forward difference representation
\begin{equation*} 
  \Delta^+(S;w_A)
  = \bigl(1-2p_{(S+1)}\bigr)
  + w_A\Big(|a-(S+1)|-|a-S|\Big)
\end{equation*}
or, equivalently

\[
  \Delta^+(S;w_A)=
  \begin{cases}
    -w_A + (1-2p_{(S+1)}), & S\le a-1,\\[2pt]
    w_A\bigl((2\lfloor a \rfloor+1)-2a\bigr) + (1-2p_{(\lceil a \rceil)}), & S=\lfloor a \rfloor ,\\[2pt]
    +w_A + (1-2p_{(S+1)}), & S\ge a.
  \end{cases}
\]

Two monotonicity forces shape $\Delta^+(S;w_A)$ as $S$ increases. First, because districts are ordered by vote share, $p_{(S+1)}$ is weakly decreasing in $S$, so the district term $1-2p_{(S+1)}$ is weakly increasing: each additional seat must come from a weaker district and is therefore locally more costly or less beneficial. Second, the statewide increment $|a-(S+1)|-|a-S|$ is the discrete slope of the convex function $S\mapsto |a-S|$ and is therefore weakly increasing in $S$, transitioning from $-1$ below proportionality to $+1$ above it. As a result, for each fixed $w_A\ge 0$, the forward differences $\Delta^+(S;w_A)$ are weakly increasing in $S$. This discrete single-crossing property implies that $\Phi(S;w_A)$ is unimodal in $S$ yielding the following discrete first-order optimality condition. 

\begin{lemma}\label{lem:WeightIntervals}
  $S$ minimizes total misrepresentation $\Phi(S;w_A)$ if and only if $\Delta^{+}(S-1;w_A)\le 0\le \Delta^{+}(S;w_A)$.
\end{lemma}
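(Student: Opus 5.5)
The plan is to reduce the lemma to a standard fact about discrete convex sequences on $\{0,1,\dots,N\}$, with the endpoint conventions covering the boundary cases. The first step is to establish rigorously that the forward differences $\Delta^+(S;w_A)$ are weakly increasing in $S$ for $S=0,\dots,N-1$. This is the main point that needs care. The text argues it informally, but the middle case $S=\lfloor a\rfloor$ must be slotted correctly between the other two. I will use the decomposition
\[
\Delta^+(S;w_A)=\bigl(1-2p_{(S+1)}\bigr)+w_A\bigl(|a-(S+1)|-|a-S|\bigr).
\]
The first term is increasing because the order statistics are decreasing. The second term is $w_A\ge 0$ times the forward difference of the convex function $S\mapsto|a-S|$ on the integers. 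That forward difference equals $g(S):=|a-S-1|-|a-S|$. By convexity of $|a-\cdot|$ on $\mathbb R$, its chord slopes over consecutive unit intervals are nondecreasing, so $g(S)\le g(S+1)$. This holds directly, without splitting into the three displayed cases. The sum of two weakly increasing sequences is weakly increasing. With the conventions $\Delta^+(-1)=-\infty$ and $\Delta^+(N)=+\infty$, monotonicity then extends to the indices $-1,\dots,N$.

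The second step is sufficiency. Suppose $\Delta^+(S-1)\le 0\le\Delta^+(S)$. For any $T>S$, I will telescope:
\[
\Phi(T)-\Phi(S)=\sum_{k=S}^{T-1}\Delta^+(k).
\]
Each term satisfies $\Delta^+(k)\ge\Delta^+(S)\ge0$, so $\Phi(T)\ge\Phi(S)$. For $T<S$, I will write $\Phi(S)-\Phi(T)=\sum_{k=T}^{S-1}\Delta^+(k)$, where each term is at most $\Delta^+(S-1)\le 0$, so $\Phi(T)\ge\Phi(S)$. The infinite conventions only enter when $S=0$ or $S=N$, and in those cases the corresponding range of $T$ is empty. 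Hence $S$ is a global minimizer over $\{0,\dots,N\}$. Combined with Lemma~\ref{lem:TopS}, this gives a minimizer of $\Phi(\mathbf x;w_A)$.

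The third step is necessity. If $S$ minimizes $\Phi$, then $\Phi(S+1)\ge\Phi(S)$ whenever $S<N$, which is exactly $\Delta^+(S)\ge0$. Likewise $\Phi(S-1)\ge\Phi(S)$ whenever $S>0$, which gives $\Delta^+(S-1)\le 0$. At the endpoints the conventions make the corresponding inequality hold trivially. Necessity thus uses only local optimality, and sufficiency is where monotonicity is essential. I therefore expect the only real obstacle to be stating the monotonicity of the statewide increment cleanly, including the non-integer $a$ case, which the convexity argument handles uniformly.
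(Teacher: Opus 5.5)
Your proposal is correct and follows essentially the same route as the paper: show that $\Delta^+(\cdot;w_A)$ is weakly increasing by splitting it into the district term and the statewide term, telescope to prove sufficiency, and use only the comparisons with $S\pm 1$ (plus the endpoint conventions) to prove necessity. Your convexity argument for the statewide increment justifies the monotonicity step that the paper only asserts, and it avoids the three-case display, whose middle case is awkward when $a$ is an integer.
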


Solving these inequalities yields an interval characterization of the weights $w_A$ for which a given seat total $S$ is optimal. We define the set of weights for which seat total $S$ is optimal by
\[
  \mathcal W(S)\;:=\;\bigl\{\,w_A\ge 0:\ S\in\arg\min_{s\in\{0,\ldots,N\}}\Phi(s;w_A)\,\bigr\}.
\]
Away from proportionality, the absolute-value increments in Lemma~\ref{lem:WeightIntervals} equal $\pm 1$, yielding closed-form weight intervals. When the proportional seat count lies within one seat of $S$ (i.e., $S-1<a < S+1$), the same discrete optimality condition yields a pair of linear inequalities in $w_A$ rather than a constant $\pm 1$ increment.\footnote{For completeness, if $S=\lfloor a\rfloor$, then
\( \mathcal W(\lfloor a\rfloor)= \{w_A\ge 0:\   w_A \ge 1-2p_{(\lfloor a\rfloor)}
    \ \ \text{and}\ \   \bigl(2\lfloor a\rfloor+1-2a\bigr)\,w_A \ge 2p_{(\lfloor a\rfloor+1)}-1 \}.\) If  $S=\lceil a\rceil$, then 
\(  \mathcal W(\lceil a\rceil)=\{w_A\ge 0:\  w_A \ge 2p_{(\lceil a\rceil+1)}-1  \ \ \text{and}\ \   \bigl(2\lceil a\rceil-1-2a\bigr)\,w_A \le 2p_{(\lceil a\rceil)}-1  \}.\)
These reduce to the closed-form intervals in Corollary~\ref{cor:WeightIntervals_linear} when $a\ge S+1$ or $a\le S-1$. \label{fn:near}
}

\begin{corollary}\label{cor:WeightIntervals_linear}
  For $S\in\{1,\dots,N-1\}$,
  \[  \mathcal W(S)=
  \begin{cases}
      \left[\,1-2 p_{(S)},\ 1-2 p_{(S+1)}\,\right],     & \text{if } S+1\le a, \\[6pt]
      \left[\,2 p_{(S+1)}-1,\ 2 p_{(S)}-1\,\right], & \text{if } S-1\ge a.
    \end{cases}
  \]
\end{corollary}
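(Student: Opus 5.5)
The plan is to apply Lemma~\ref{lem:WeightIntervals} directly: $S$ is optimal at weight $w_A$ if and only if $\Delta^{+}(S-1;w_A)\le 0\le \Delta^{+}(S;w_A)$. So $\mathcal W(S)$ is the set of $w_A\ge 0$ satisfying these two inequalities. The whole argument is to substitute the piecewise formula for $\Delta^+$ in each of the two regimes. In both regimes, both relevant increments of $|a-\cdot|$ turn out to be the constant $\pm1$.

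\emph{Case $S+1\le a$.} Then $S-1\le a-1$ and $S\le a-1$, so both $S-1$ and $S$ fall in the first branch, where the statewide increment is $-1$. This gives
\[\Delta^{+}(S-1;w_A)=-w_A+1-2p_{(S)},\qquad \Delta^{+}(S;w_A)=-w_A+1-2p_{(S+1)}.\]
The condition $\Delta^{+}(S-1;w_A)\le 0$ becomes $w_A\ge 1-2p_{(S)}$. The condition $\Delta^{+}(S;w_A)\ge 0$ becomes $w_A\le 1-2p_{(S+1)}$.

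\emph{Case $S-1\ge a$.} Then both $S-1\ge a$ and $S\ge a$ hold, so both increments fall in the third branch, where the statewide increment is $+1$. This gives
\[\Delta^{+}(S-1;w_A)=w_A+1-2p_{(S)},\qquad \Delta^{+}(S;w_A)=w_A+1-2p_{(S+1)}.\]
The two conditions become $w_A\le 2p_{(S)}-1$ and $w_A\ge 2p_{(S+1)}-1$. Because $S\in\{1,\dots,N-1\}$, the indices $S$ and $S+1$ are valid, so the endpoint conventions for $\Delta^+$ never come into play.

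The one point needing care is the side constraint $w_A\ge 0$ in the definition of $\mathcal W(S)$, since the closed intervals should be read as subsets of $[0,\infty)$. In the first case $S<a$, so the statewide average lies above the $S$-th district's share in a sense that typically makes $1-2p_{(S+1)}\ge 0$, but the lower endpoint $1-2p_{(S)}$ may be negative. I would state the identity as the interval intersected with $[0,\infty)$, or equivalently note that negative endpoints are vacuous. I would also remark that the interval may be empty when the upper endpoint is negative, which simply means $S$ is never optimal. Beyond this interpretive point, the proof is the two substitutions above, so the only real obstacle is checking the branch assignment of $S-1$ and $S$, which the hypotheses $S+1\le a$ and $S-1\ge a$ are designed to guarantee.
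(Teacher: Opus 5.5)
Your proposal is correct and matches the paper's proof. Both substitute the constant $\pm1$ statewide increments into the discrete first-order condition of Lemma~\ref{lem:WeightIntervals} and solve the resulting two linear inequalities in $w_A$. Your remark that each closed interval must be intersected with $[0,\infty)$, and may be empty (for instance when $p_{(S+1)}>\tfrac12$), is a valid refinement that the paper's statement and proof leave implicit.
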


\subsection{Misrepresentation Minimizing Rule}

Up to now we have characterized which seat totals $S$ minimize $\Phi(\cdot;w_A)$ for a given profile $\mathbf p$. We now translate this seat-total characterization into an allocation rule: for each $\mathbf p$, the rule selects an optimal seat total and then awards Party~$A$ its strongest districts. Corollary~\ref{cor:WeightIntervals_linear} admits an \emph{effective winning threshold} interpretation. An allocation $\mathbf x$ is a \emph{cutoff allocation} if there exists a threshold $t$ such that Party~$A$ is awarded every district where their vote share exceeds $t$.\footnote{For notational convenience, we break ties in favor of Party~$A$: if $p_d=t$, then $x_d=1$. None of our results depends on this tie-breaking convention.}

\begin{defn}\label{def:cutoff_alloc}
Given a profile $\mathbf p$, an allocation $\mathbf x$ is a \emph{cutoff allocation} at threshold $t\in[0,1]$ if $x_d = 1 \iff p_d \ge t$.
\end{defn}

For example, conditional on any seat total $S$, the top-$S$ allocation is a cutoff rule for some cutoff in the interval $\big[p_{(S+1)},\,p_{(S)}\big]$. To formalize commonly used rules and characterize the optimal one, we begin by defining a class of cutoff rules.

\begin{defn}\label{def:cutoff_rule}
A rule $R$ is a \emph{cutoff rule} if for every profile $\mathbf p$ there exists $t(\mathbf p)\in[0,1]$ such that $R(\mathbf p)$ is a cutoff allocation at threshold $t(\mathbf p)$.
\end{defn}

We can represent many canonical rules as cutoff rules. For instance, \textbf{first-past-the-post} rule $R_F$ corresponds to the fixed majority threshold $t(\mathbf p)=\tfrac12$ for all $\mathbf p$: Party~$A$ is awarded district $d$ if and only if $p_d \geq \tfrac12$. Its implied seat total and seat share are 

\[ S_F\;:=\;\bigl|\{d:\ p_d \ge \tfrac12\}\bigr|\] 

By contrast, \textbf{proportional representation} assigns Party~$A$ the (rounded) proportional seat total 
\[ S_{PR}\ = \max \left\{ \arg\min_{S\in\{0,1,\dots,N\}} \ \bigl|a - S\,\bigr| \right \} \] 
It can be represented as a cutoff rule $R_{\mathrm{PR}}$ with profile-dependent threshold
\[
  t_{\mathrm{PR}}(\mathbf p)\;:=\;
  \begin{cases}
    1, & \text{if } S_{PR}=0,\\
    p_{(S_{PR})}, & \text{if } S_{PR}\in\{1,\ldots,N\},
  \end{cases}
\]
so that, $x_d \in R_{\mathrm{PR}}$ whenever $p_d \ge t_{\mathrm{PR}}(\mathbf p)$.

\paragraph{Far from proportionality.} To characterize the optimal rule, suppose that Party~$A$ is \textit{underrepresented} at $S$ relative to proportionality, i.e., $S+1\le a$.
Then $|a-(S+1)|-|a-S|=-1$, so the discrete first-order condition in Lemma~\ref{lem:WeightIntervals} reduces to
\[
  1-2p_{(S)} \ \le\ w_A\ \le\ 1-2p_{(S+1)}
  \quad\Longleftrightarrow\quad
  p_{(S+1)}\ \le\ t_-(w_A):=\frac{1-w_A}{2}\ \le\ p_{(S)}.
\]
Thus an optimal allocation can be implemented by the cutoff $t_-(w_A)$. Since $t_-(w_A)\le\tfrac12$, the allocation admits an effective threshold \emph{lower than}  $50\%$ when $A$ is underrepresented, allowing some narrow losses under first-past-the-post to be converted into seats and thereby moving the outcome toward proportionality.

If instead Party~$A$ is \textit{overrepresented} by at least one seat ($S-1\ge a$), the same logic runs in reverse, and the optimal allocation can be implemented by the cutoff
\[
  t_+(w_A):=\frac{1+w_A}{2}\ge \tfrac12,
\]
which \emph{raises} the effective winning threshold and removes some narrow wins.

\paragraph{Near proportionality.} We now consider seat totals in $(S-1,S+1)$. The only candidate seat totals are $\lfloor a\rfloor$ and $\lceil a\rceil$. For the rest of the section, we focus on the case in which Party~$A$ is \emph{underrepresented} under first-past-the-post relative to proportionality, that is, $S_F \le S_{PR}$. If instead $S_F>S_{PR}$, then Party~$B$ is underrepresented and all results below apply verbatim with parties relabeled.

We proceed by characterizing the two relevant transition weights for these two candidates. First, we define the smallest weight at which the optimal seat total reaches \(\lfloor a\rfloor\) via Corollary~\ref{cor:WeightIntervals_linear}:
\[ w_{\lfloor a\rfloor}\;:=1-2\,p_{(\lfloor a\rfloor)}. \]
If the rounded proportional seat total satisfies \(S_{PR}=\lfloor a\rfloor\), then once \(w_A\) reaches \(w_{\lfloor a\rfloor}\) there is no further transition, since the optimizer has already arrived at proportional representation.\footnote{For any $S$, \(\max\Bigl\{0,\ \frac{a-(S-1)}{\,N-S+1\,}\Bigr\}\ \le\ p_{(S)} \le \min\Bigl\{1,\ \frac{a}{S}\Bigr\}.\) In particular,  $ 1-2p_{(\lfloor a\rfloor)}\ \le\ 1-2\,\frac{a-\lfloor a\rfloor}{\,N-\lfloor a\rfloor+1\,}.$ Both bounds are sharp. The result is a standard extremal consequence of majorization \citep[Ch.~1]{MarshallOlkinArnold2011}.}

If instead \(S_{PR}=\lceil a\rceil\), then there is one final transition to proportional representation. In this case, define the smallest weight at which \(\lceil a\rceil\) weakly dominates \(\lfloor a\rfloor\) by
\[
  w_{\lceil a\rceil}
  \;:=\;
  \frac{1-2\,p_{(\lceil a\rceil)}}{\,2(a-\lfloor a\rfloor)-1}.
\]
This expression is obtained by solving the indifference condition \(\Phi(\lceil a\rceil;w_A)=\Phi(\lfloor a\rfloor;w_A)\) for \(w_A\). It is exactly the weight at which the objective is willing to pay the additional within-district distortion required to move from \(\lfloor a\rfloor\) to \(\lceil a\rceil\) in order to reduce the aggregate seat-gap.

The preceding cases, together with the fact that \(S_{PR}\) is the integer closest to \(a\), imply that the \emph{smallest} weight at which the rounded proportional seat total \(S_{PR}\) is optimal is
\[
w_{PR} = \begin{cases}
w_{\lfloor a \rfloor} & \text{if } a -\lfloor a \rfloor \le \tfrac{1}{2}, \\[8pt]
w_{\lceil a \rceil} & \text{if } a -\lfloor a \rfloor > \tfrac{1}{2}.
\end{cases}
\]
Moreover, beyond  \(w_{PR}\) the optimal seat total can be held fixed at \(S_{PR}\), and the implementing cutoff can be taken constant. This yields a clean characterization of the cutoff implementing the optimal rule.

\begin{theorem}\label{thm:cutoff_near_prop}
For every $w_A$, there exists an optimal cutoff rule $R_{w_A}$ with the cutoff:
\[
  t(w_A)=
  \begin{cases}
    \max\!\left\{\dfrac{1-w_A}{2},\ \dfrac{1-w_{\lfloor a\rfloor}}{2} \right\}, & 0\le w_A<w_{PR},\\[8pt]
    t_{PR}(\mathbf p), & w_A\ge w_{PR}.
  \end{cases}
\]
\end{theorem}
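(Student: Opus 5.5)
The argument works directly with the seat-total reduction. By Lemma~\ref{lem:TopS}, any optimal allocation is a top-$S$ allocation, so it suffices to show that for each $w_A$ the cutoff $t(w_A)$ selects exactly $S^*(w_A)$ districts, where $S^*(w_A)$ is an optimal seat total. Concretely, the selected districts are those with $p_d\ge t(w_A)$, and we need
\[
p_{(S^*+1)}<t(w_A)\le p_{(S^*)}
\]
(with the usual conventions at $S^*=0$ and $S^*=N$). Throughout we work in the maintained case $S_F\le S_{PR}$, and we use the discrete first-order condition of Lemma~\ref{lem:WeightIntervals} to certify optimality. Unimodality guarantees that any seat total satisfying $\Delta^+(S-1;w_A)\le 0\le \Delta^+(S;w_A)$ is a global minimizer.

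\textbf{Step 1 (far region).} Take $0\le w_A< w_{\lfloor a\rfloor}=1-2p_{(\lfloor a\rfloor)}$. Let $S$ be the number of districts with $p_d\ge (1-w_A)/2$.
\begin{itemize}
\item Since $w_A<1-2p_{(\lfloor a\rfloor)}$, we have $(1-w_A)/2>p_{(\lfloor a\rfloor)}$, so $S\le \lfloor a\rfloor-1$ and hence $S+1\le a$.
\item The inequality $p_{(S+1)}<(1-w_A)/2\le p_{(S)}$ is then exactly the interval condition of Corollary~\ref{cor:WeightIntervals_linear} (equivalently, the ``far from proportionality'' computation giving the cutoff $t_-(w_A)$).
\item At $w_A=0$ the cutoff is $1/2$, which recovers $S_F$. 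This is consistent because $S_F\le S_{PR}$; and if $S_F\ge\lfloor a\rfloor$, the interval is empty or degenerate and the next case applies.
\end{itemize}
In this range $(1-w_A)/2>(1-w_{\lfloor a\rfloor})/2$, so the $\max$ in the theorem equals $(1-w_A)/2$, as claimed.

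\textbf{Step 2 (near region).} Take $w_{\lfloor a\rfloor}\le w_A<w_{PR}$. This region is nonempty only when $S_{PR}=\lceil a\rceil$, i.e.\ $a-\lfloor a\rfloor>1/2$. Here the $\max$ equals $(1-w_{\lfloor a\rfloor})/2=p_{(\lfloor a\rfloor)}$, which selects exactly the top $\lfloor a\rfloor$ districts. We must show $S=\lfloor a\rfloor$ is optimal by checking both sides of Lemma~\ref{lem:WeightIntervals}.
\begin{itemize}
\item Left side: $\Delta^+(\lfloor a\rfloor-1;w_A)=-w_A+1-2p_{(\lfloor a\rfloor)}\le 0$, which is precisely $w_A\ge w_{\lfloor a\rfloor}$.
\item Right side: $\Delta^+(\lfloor a\rfloor;w_A)=w_A(2\lfloor a\rfloor+1-2a)+1-2p_{(\lceil a\rceil)}\ge0$. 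Because $2\lfloor a\rfloor+1-2a<0$, this rearranges to $w_A\le (1-2p_{(\lceil a\rceil)})/(2(a-\lfloor a\rfloor)-1)=w_{\lceil a\rceil}$, which holds since $w_A<w_{PR}=w_{\lceil a\rceil}$.
\end{itemize}
Note that $\Phi(\lceil a\rceil)-\Phi(\lfloor a\rfloor)=\Delta^+(\lfloor a\rfloor)$, so this is the same indifference condition that defined $w_{\lceil a\rceil}$.

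\textbf{Step 3 (proportional region).} Take $w_A\ge w_{PR}$. We show $S_{PR}$ satisfies the first-order condition, after which the cutoff $t_{PR}(\mathbf p)=p_{(S_{PR})}$ selects the top $S_{PR}$ districts.
\begin{itemize}
\item If $S_{PR}=\lfloor a\rfloor$: the left inequality is as in Step 2. For the right inequality, the coefficient $2\lfloor a\rfloor+1-2a\ge 0$, and the term $1-2p_{(\lceil a\rceil)}$ must also be handled.
\item If $S_{PR}=\lceil a\rceil$: the left inequality is $w_A\ge w_{\lceil a\rceil}$. The right inequality is $w_A+1-2p_{(\lceil a\rceil+1)}\ge 0$, and the same sign question arises for $p_{(\lceil a\rceil+1)}$.
\end{itemize}

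\textbf{Main obstacle.} The hard part is the sign control in Step 3 and the consistency of the $\max$ with Step 2. One must show that $p_{(\lceil a\rceil)}\le 1/2$ (or that the relevant forward difference is nonnegative) in the underrepresented case $S_F\le S_{PR}$. First I would try to use $S_F\le S_{PR}$: every district ranked beyond $S_F$ has $p<1/2$, so $1-2p_{(S+1)}>0$ for all $S\ge S_F$. The case $S_F=S_{PR}$, where the marginal terms could have either sign, needs separate checking via the footnote's explicit $\mathcal W(\lfloor a\rfloor)$ and $\mathcal W(\lceil a\rceil)$ conditions. Monotonicity of $\Delta^+$ in $w_A$ on the region $S\ge a$ then keeps $S_{PR}$ optimal for all larger $w_A$, so the cutoff can be held constant.
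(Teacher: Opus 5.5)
Your three-region decomposition matches the paper's Cases~1--3, and Steps~1 and~2 are correct. In Step~1 you argue constructively: you define $S$ from the cutoff and check Corollary~\ref{cor:WeightIntervals_linear}. The paper instead starts from an optimal $S^*$ and bounds it by $S^*\le S_{PR}$; your route is fine.

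The gap is Step~3. You never establish the right-hand condition $\Delta^+(S_{PR};w_A)\ge 0$, and your ``main obstacle'' paragraph misidentifies what is needed. You do not need $p_{(\lceil a\rceil)}\le\tfrac12$. That is false when $S_F=S_{PR}=\lceil a\rceil$, and in that case the left-hand condition is automatic anyway, since $w_{\lceil a\rceil}<0$. Nor does the case $S_F=S_{PR}$ need separate treatment.

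The only sign fact required is $p_{(S_{PR}+1)}<\tfrac12$. This holds in every case, because $S_{PR}+1>S_F$ and exactly $S_F$ districts have $p_d\ge\tfrac12$. Because $S_{PR}$ is an integer nearest to $a$, the statewide increment $|a-(S_{PR}+1)|-|a-S_{PR}|$ is nonnegative. Hence $\Delta^+(S_{PR};w_A)>0$ for every $w_A\ge 0$. This is the upper half of the auxiliary lemma $S_F\le S^*\le S_{PR}$ that the paper proves first; it shows $\Delta^+(S;w_A)>0$ for all $S\ge S_{PR}$. It is what underwrites the paper's Case~3 claim that $S_{PR}$ stays optimal for all $w_A\ge w_{PR}$.

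With that in hand, Step~3 closes at once. The left-hand condition $\Delta^+(S_{PR}-1;w_A)\le 0$ is linear in $w_A$ with slope $-1$ or $1-2(a-\lfloor a\rfloor)<0$. You already showed it is equivalent to $w_A\ge w_{PR}$, so no further monotonicity argument is needed.

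The ``consistency of the max'' is also not a real obstacle. Steps~1--2 already evaluate the max on each subrange. You only need the two subranges to cover $[0,w_{PR})$, i.e.\ $w_{\lfloor a\rfloor}\le w_{PR}$ whenever $w_{PR}>0$. This follows from $p_{(\lceil a\rceil)}\le p_{(\lfloor a\rfloor)}$ and $0<2(a-\lfloor a\rfloor)-1<1$, and the paper uses it silently in Case~1.

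Finally, both your argument and the paper's tacitly need $a-\lfloor a\rfloor\neq\tfrac12$. At a half-integer, $S_{PR}=\lceil a\rceil$ under the max-argmin convention, while $w_{PR}=w_{\lfloor a\rfloor}$. The left-hand condition for $\lceil a\rceil$ then reduces to $p_{(\lceil a\rceil)}\ge\tfrac12$, which fails whenever $S_F\le\lfloor a\rfloor$.
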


Intuitively, at $w_A = 0$, state-wide representation does not matter and FPTP with a cutoff $1/2$ is the optimal rule. As $w_A$ increases,  the effective winning threshold for the underrepresented party decreases from simple majority until the allocation reaches the proportional representation seat total. 

\begin{figure}[h!!]
    \centering
    \includegraphics[width=1\linewidth]{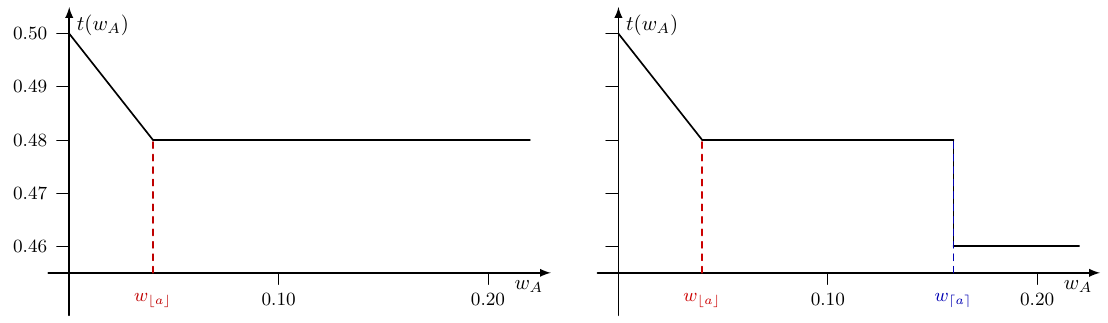}

\caption{Optimal cutoff rule $t(w_A)$ as the weight on aggregate misrepresentation $w_A$ varies. Left: Proportionality rounds down. Right: Proportionality rounds up.\protect \footnotemark}
     \label{fig:mainthm}
\end{figure}

\footnotetext{$N=12$. Left panel: $\mathbf p=(0.65,0.58,0.49,0.485,0.48,0.47,0.42,0.40,0.39,0.38,0.33,0.325)$, $a=5.40$, $S_{\mathrm{PR}}=\lfloor a\rfloor=5$. Right panel: $\mathbf p=(0.64,0.58,0.54,0.501,0.48,0.46,0.45,0.44,0.43,0.42,0.41,0.40)$, $a=5.75$,  $S_{\mathrm{PR}}=\lceil a\rceil=6$.}

Figure~\ref{fig:mainthm} visualizes Theorem~\ref{thm:cutoff_near_prop}. In both panels, the cutoff initially coincides with the linear path \(\frac{1-w_A}{2}\). The red dashed line marks \(w_{\lfloor a\rfloor}\) where \(\lfloor a\rfloor\) becomes optimal. From that point onward the cutoff is held fixed at \(\frac{1-w_{\lfloor a\rfloor}}{2}\). The panels differ only in whether proportional representation rounds down or up. In the left panel, \(S_{PR}=\lfloor a\rfloor\), so this first transition already attains proportional representation. In the right panel, \(S_{PR}=\lceil a\rceil\), so the cutoff remains pinned until the blue dashed line \(w_{\lceil a\rceil}=w_{PR}\), at which \(\lceil a\rceil\) becomes optimal. The cutoff then drops to the constant \(t_{PR}(\mathbf p)\) that implements \(S_{PR}\)
and stays there thereafter.

Theorem~\ref{thm:cutoff_near_prop} characterizes an optimal cutoff implementation for each weight \(w_A\). We can translate this into a statement about the induced seat total itself.
Because the seat total changes only when the cutoff crosses a district vote share \(p_{(S)}\), the optimal \(S\) varies with \(w_A\) as a step function with finitely many switch points. This is illustrated in Figure \ref{fig:weight-seats}.

\begin{corollary}\label{thm:SeatRuleCDF}
There exist ordered weights
  \[
    0=:w_{S_F}<w_{S_F+1}\le\cdots\le w_{S_{PR}}
  \]
  such that the misrepresentation-minimizing seat total is a step function of $w_A$ such that
  \begin{enumerate}[label=(\roman*)]
    \item  For each $S\in\{S_F,\dots,S_{PR}-1\}$, if $w_A\in (w_{S},\,w_{S+1})$, then the optimal seat total is unique and equals $S$.
    \item If $w_A > w_{S_{PR}}$, then the minimizer is unique and equals $S_{PR}$.
    \item At each cutoff $w_A=w_{S+1}$, both $S$ and $S+1$ minimize $\Phi(\cdot;w_A)$.
  \end{enumerate}
\end{corollary}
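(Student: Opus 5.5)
The plan is to build the step function directly from Lemma~\ref{lem:WeightIntervals} and Corollary~\ref{cor:WeightIntervals_linear}, working in the maintained case $S_F\le S_{PR}$.

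First I define the switch points. Set $w_{S_F}:=0$. For $S$ with $S_F<S\le\lfloor a\rfloor$, set $w_S:=1-2p_{(S)}$, as in Corollary~\ref{cor:WeightIntervals_linear} and the definition of $w_{\lfloor a\rfloor}$. If $S_{PR}=\lceil a\rceil$, set $w_{S_{PR}}:=w_{\lceil a\rceil}$.

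Next I check the ordering. For $S>S_F$ we have $p_{(S)}<1/2$, so each $w_S$ is strictly positive. The order statistics decrease in $S$, so the values $1-2p_{(S)}$ are nondecreasing in $S$. The last inequality $w_{\lfloor a\rfloor}\le w_{\lceil a\rceil}$ holds when $a-\lfloor a\rfloor>1/2$, and needs a short computation. Write $\theta=a-\lfloor a\rfloor\in(1/2,1)$. Then $w_{\lceil a\rceil}=(1-2p_{(\lceil a\rceil)})/(2\theta-1)$. Its numerator is at least $1-2p_{(\lfloor a\rfloor)}$, and its denominator is at most $1$.

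Then I prove the uniqueness claims (i)--(ii) from the forward differences. The key fact is that $\Delta^+(S;w_A)$ is strictly increasing in $w_A$ whenever $S\ge a$ or $S=\lfloor a\rfloor$ with $\theta<1/2$. It is strictly decreasing in $w_A$ when $S\le a-1$. For $w_A\in(w_S,w_{S+1})$ I verify two strict inequalities.
\begin{itemize}
\item $\Delta^+(S-1;w_A)<0$. For $S\le\lfloor a\rfloor$ this reads $-w_A+1-2p_{(S)}<0$, which is exactly $w_A>w_S$.
\item $\Delta^+(S;w_A)>0$. For $S+1\le a$ this reads $w_A<1-2p_{(S+1)}=w_{S+1}$. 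For $S=\lfloor a\rfloor<S_{PR}$ it reads $w_A(1-2\theta)+(1-2p_{(\lceil a\rceil)})>0$, which is exactly $w_A<w_{\lceil a\rceil}$.
\end{itemize}
Given these strict inequalities, the monotonicity of $\Delta^+(\cdot;w_A)$ in $S$ makes $\Phi$ strictly decreasing up to $S$ and strictly increasing afterwards. This yields a unique minimizer. The same argument with $w_A>w_{S_{PR}}$ gives (ii). There I also need $\Delta^+(S_{PR};w_A)>0$. If $S_{PR}\ge a$, the coefficient of $w_A$ is $+1$. If $S_{PR}=\lfloor a\rfloor$ with $\theta\le 1/2$, the coefficient $1-2\theta$ is nonnegative and the constant $1-2p_{(\lceil a\rceil)}$ is positive, since $\lceil a\rceil>S_F$.

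For (iii), at $w_A=w_{S+1}$ the defining equation gives $\Delta^+(S;w_A)=0$, that is, $\Phi(S;w_A)=\Phi(S+1;w_A)$. The neighbouring differences keep their weak signs, so Lemma~\ref{lem:WeightIntervals} shows both $S$ and $S+1$ are minimizers.

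I expect the main obstacle to be strictness and degenerate cases. Ties $w_S=w_{S+1}$ can occur when order statistics coincide; they are excluded here by the distinct-shares assumption, but coincidences at $\theta=1/2$ need care. The boundary $S_F=\lfloor a\rfloor$ also needs care, because then the weights below $w_{\lceil a\rceil}$ collapse to a single interval. I would handle these by a separate check of the $\theta\le1/2$ and $\theta>1/2$ branches that define $w_{PR}$.
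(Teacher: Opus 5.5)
Your proof is correct, but it takes a more direct route than the paper. The paper never writes the switch points in closed form. It takes the cutoff $t(w_A)$ from Theorem~\ref{thm:cutoff_near_prop} and defines $w_{S+1}$ as the infimum of weights with $t(w_A)\le p_{(S+1)}$. It then reads off the seat total from $p_{(S+1)}<t(w_A)\le p_{(S)}$, using that $t$ is weakly decreasing. Part (iii) comes from a one-sided limit argument combined with continuity of $\Delta^+(S;\cdot)$ in $w_A$.

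You bypass Theorem~\ref{thm:cutoff_near_prop} entirely. You write the switch points explicitly: $w_S=1-2p_{(S)}$ for $S_F<S\le\lfloor a\rfloor$, and $w_{\lceil a\rceil}$ for the last step. These are what the paper's infima evaluate to. You then check strict signs of $\Delta^+(S-1;w_A)$ and $\Delta^+(S;w_A)$. Together with monotonicity of $\Delta^+(\cdot;w_A)$ in $S$, this gives (i)--(ii), and (iii) is immediate from $\Delta^+(S;w_{S+1})=0$.

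The paper's route buys a picture: the step function is literally the cutoff sweeping past the order statistics. Yours buys rigor where the paper is thin.
\begin{itemize}
\item The paper only shows that its particular cutoff rule selects $S$, not that the minimizer is unique. Uniqueness requires exactly the strict inequalities you verify.
\item Your explicit formulas sidestep the paper's restriction of the infimum to $w_A\in[0,1]$. That restriction fails to capture $w_{\lceil a\rceil}$ whenever it exceeds $1$.
\end{itemize}

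Two small points.
\begin{itemize}
\item For $S=S_F$, the inequality $-w_A+1-2p_{(S_F)}<0$ holds for every $w_A\ge 0$, rather than being ``exactly'' $w_A>w_{S_F}$.
\item Your flag on $\theta=1/2$ is warranted. There $\Delta^+(\lfloor a\rfloor;w_A)=1-2p_{(\lceil a\rceil)}>0$ for all $w_A$ whenever $S_F<\lceil a\rceil$. So under the paper's max-rounding convention, $S_{PR}=\lceil a\rceil$ is never optimal and (ii) fails. The statement needs $S_{PR}$ rounded down at half-integers, which is what the paper's own definition of $w_{PR}$ (using $w_{\lfloor a\rfloor}$ when $a-\lfloor a\rfloor\le 1/2$) implicitly assumes.
\end{itemize}
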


\begin{figure}[ht!] 
  \centering
  \includegraphics[width=.8\linewidth]{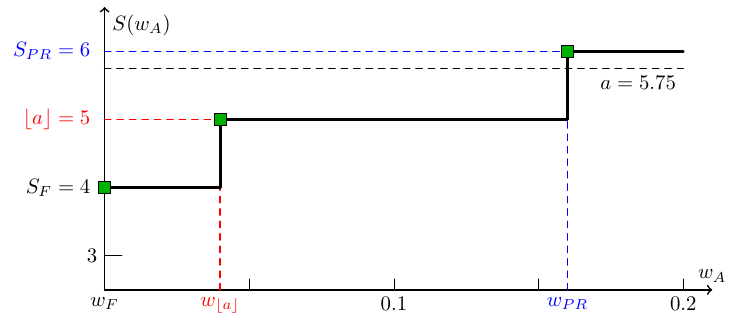}
\caption{Optimal seat total $S^*(w_A)$ for the profile in the right panel of Figure \ref{fig:mainthm}. }
\label{fig:weight-seats}
\end{figure}

\subsection{Efficient tradeoffs and the Pareto frontier}\label{subsec:frontier}

An allocation $\mathbf x$ is \emph{Pareto efficient} if no alternative allocation can reduce one type of misrepresentation without increasing the other, that is, if there is no other allocation $\mathbf x'$ such that $\Dist(\mathbf x')\le \Dist(\mathbf x)$ and $\Agg(\mathbf x')\le \Agg(\mathbf x)$, with at least one inequality strict.

The weight $w_A$ indexes efficient tradeoffs between the two distortions. For any $w_A>0$, every minimizer of $\Phi(\cdot;w_A)$ lies on the misrepresentation frontier. The converse is not automatic in discrete problems since a linear weighted sum can in principle miss Pareto-efficient points. In our setting, this does not happen as Lemma~\ref{lem:TopS} reduces the problem to choosing a single seat total $S$, and Lemma~\ref{lem:WeightIntervals} characterizes the weights for which each $S$ minimizes $\Phi(\cdot;w_A)$ using the single crossing property. 

\begin{proposition} \label{prop:frontier_scalarization}
Fix an allocation $\mathbf x$ with $\sum_{d=1}^{N} x_d = \hat S$. $\mathbf x$ is Pareto efficient \emph{if and only if} there exists a weight $w_A>0$ such that $\hat S\in\arg\min_{s\in\{0,1,\dots,N\}}\Phi(s;w_A)$ and $\mathbf{x}$ is a top-$S$ allocation.
\end{proposition}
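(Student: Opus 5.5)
The plan is to prove the two directions separately. Sufficiency is a one-line scalarization argument. Necessity reduces, via Lemma~\ref{lem:TopS}, to exhibiting a positive weight that satisfies the discrete first-order condition of Lemma~\ref{lem:WeightIntervals}.

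\emph{Sufficiency.} Suppose $w_A>0$, $\hat S\in\arg\min_s\Phi(s;w_A)$, and $\mathbf x$ is a top-$\hat S$ allocation. By Lemma~\ref{lem:TopS} and the definition of $\Phi(S;w_A)$, $\mathbf x$ minimizes $\Phi(\cdot;w_A)$ over all of $\{0,1\}^N$. Suppose some $\mathbf x'$ weakly dominated $\mathbf x$ in both $\Dist$ and $\Agg$, with one inequality strict. Because $w_A>0$, this gives $\Phi(\mathbf x';w_A)<\Phi(\mathbf x;w_A)$, a contradiction.

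\emph{Necessity, step 1 (top-$\hat S$).} Let $\mathbf x$ be Pareto efficient. If $\mathbf x$ were not a top-$\hat S$ allocation, the top-$\hat S$ allocation would have the same seat total and hence the same $\Agg$. By the ``only if'' part of Lemma~\ref{lem:TopS} it would also have strictly smaller $\Dist$, contradicting efficiency. So $\mathbf x$ is top-$\hat S$, and it remains to find $w_A>0$ with $\Delta^+(\hat S-1;w_A)\le 0\le\Delta^+(\hat S;w_A)$.

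\emph{Necessity, step 2 (a valid positive weight).} Write
\[
\Delta^+(S;w)=d_S+w\,g_S,\qquad d_S:=1-2p_{(S+1)},\qquad g_S:=|a-(S+1)|-|a-S|.
\]
Both $d_S$ and $g_S$ are weakly increasing in $S$. Efficiency of $\hat S$ against the adjacent totals $\hat S\pm1$, using top allocations there, gives sign restrictions on these increments:
\begin{itemize}
\item if $g_{\hat S}<0$, then $d_{\hat S}>0$, since otherwise $\hat S+1$ dominates;
\item if $g_{\hat S}=0$, then $d_{\hat S}\ge 0$;
\item symmetrically, if $g_{\hat S-1}>0$, then $d_{\hat S-1}<0$, and if $g_{\hat S-1}=0$, then $d_{\hat S-1}\le 0$.
\end{itemize}
Each of the two first-order inequalities is linear in $w$, so each defines a half-line, all of $[0,\infty)$, or, after the sign restrictions, never the empty set. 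I then split on the signs of $g_{\hat S-1}\le g_{\hat S}$.

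In the case $g_{\hat S-1}=g_{\hat S}=-1$ (far below proportionality), the conditions become $d_{\hat S-1}\le w\le d_{\hat S}$. This interval is nonempty because $p_{(\hat S)}>p_{(\hat S+1)}$, and it meets $(0,\infty)$ because $d_{\hat S}>0$. This is exactly Corollary~\ref{cor:WeightIntervals_linear}. The case $g_{\hat S-1}=g_{\hat S}=+1$ is symmetric.

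In the straddling case $g_{\hat S-1}<0<g_{\hat S}$, both inequalities are lower bounds on $w$, so every sufficiently large $w$ works. In the cases where one of the $g$'s equals $0$ or is a fractional value, that inequality is either automatic by the sign restrictions or again a lower bound, while the other supplies a positive upper bound that is compatible with it. The endpoint conventions $\Delta^+(-1)=-\infty$ and $\Delta^+(N)=+\infty$ handle $\hat S\in\{0,N\}$.

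\emph{Main obstacle.} The delicate step is the near-proportionality bookkeeping, where $\hat S\in\{\lfloor a\rfloor,\lceil a\rceil\}$ and the increments are fractional, as in footnote~\ref{fn:near}. There I must check that any upper bound on $w$ exceeds the lower bound and is strictly positive. The plan is to show that an upper bound arises only when $g_{\hat S}<0$, and that it equals $d_{\hat S}/|g_{\hat S}|>0$ by the sign restriction. A lower bound from the left side arises when $g_{\hat S-1}<0$, namely $d_{\hat S-1}/|g_{\hat S-1}|$. Since then $g_{\hat S-1}\le g_{\hat S}<0$, we have $|g_{\hat S-1}|\ge|g_{\hat S}|$, and combined with $d_{\hat S-1}\le d_{\hat S}$ this yields
\[
\frac{d_{\hat S-1}}{|g_{\hat S-1}|}\le\frac{d_{\hat S}}{|g_{\hat S}|}
\]
whenever $d_{\hat S-1}\ge 0$. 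If instead $d_{\hat S-1}<0$, the left constraint holds for every $w\ge 0$ and nothing needs checking. In all cases the admissible set of weights therefore contains a strictly positive $w_A$.
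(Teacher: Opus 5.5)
Your argument is correct up to one misstatement, noted below, but your necessity direction takes a different route from the paper. Sufficiency and the reduction to a top-$\hat S$ allocation match the paper's; the difference is in how you get a single positive weight satisfying both first-order conditions.

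\textbf{The paper's route.} It avoids case analysis entirely. Non-domination by the adjacent top allocations yields some $w_1>0$ with $\Delta^+(\hat S-1;w_1)\le 0$ and some $w_2>0$ with $\Delta^+(\hat S;w_2)\ge 0$. Indeed, if either failed for every $w>0$, sending $w\to 0$ and $w\to\infty$ in the affine map $w\mapsto d+wg$ would show that the neighbouring top allocation dominates. Then either $w_1$ already works, or the intermediate value theorem gives $w^*$ between $w_1$ and $w_2$ with $\Delta^+(\hat S;w^*)=0$. Single crossing, $\Delta^+(\hat S-1;w^*)\le\Delta^+(\hat S;w^*)$, then delivers the left condition for free.

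\textbf{Your route.} You write $\Delta^+(S;w)=d_S+wg_S$, turn efficiency into sign restrictions, and compare the resulting bounds by hand. Your inequality $d_{\hat S-1}/|g_{\hat S-1}|\le d_{\hat S}/|g_{\hat S}|$ is just single crossing unpacked into its two components. Your approach gives an explicit description of the whole set of rationalizing weights, in effect re-deriving Corollary~\ref{cor:WeightIntervals_linear} and footnote~\ref{fn:near}, and that set is what the empirical inversion uses. The paper's approach is shorter and uses only affinity in $w$ and monotonicity of $\Delta^+$ in $S$, so it would survive replacing $|a-S|$ by any convex seat penalty.

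\textbf{The correction.} It is not true that an upper bound on $w$ arises only when $g_{\hat S}<0$. The left condition $d_{\hat S-1}+w\,g_{\hat S-1}\le 0$ is an upper bound whenever $g_{\hat S-1}>0$. This happens when $\hat S-1\ge a$, and also near proportionality when $\hat S=\lfloor a\rfloor+1$ with $a-\lfloor a\rfloor<\tfrac12$, so your near-proportionality bookkeeping covers only one orientation. The missing orientation needs the mirror-image check:
\begin{itemize}
\item The upper bound $-d_{\hat S-1}/g_{\hat S-1}$ is positive by your sign restriction $d_{\hat S-1}<0$.
\item If $d_{\hat S}\ge 0$, the right-side lower bound is nonpositive and there is nothing to check.
\item If $d_{\hat S}<0$, then $-d_{\hat S}/g_{\hat S}\le -d_{\hat S-1}/g_{\hat S}\le -d_{\hat S-1}/g_{\hat S-1}$, because $d_{\hat S-1}\le d_{\hat S}$ and $0<g_{\hat S-1}\le g_{\hat S}$.
\end{itemize}
Since $g_{\hat S-1}\le g_{\hat S}$, at most one of the two conditions is ever an upper bound. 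So your orientation and this one exhaust the cases, and with this addition the case analysis is complete.
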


Proposition~\ref{prop:frontier_scalarization} shows that our objective $\Phi(\cdot;w_A)$ provides a complete way to select among the efficient tradeoffs: varying $w_A$ moves the optimum along the frontier, and no Pareto-efficient seat total is ruled out by using a linear weighted sum. Therefore, $w_A$ measures how much additional district-level misrepresentation we are willing to accept in exchange for reducing the statewide vote-seat gap by one seat. This interpretation is the bridge to our empirics. In Section \ref{sec:empirics}, we ask which $w_A$ weights \textit{rationalize} the current FPTP rule in United States House of Representatives elections, characterizing the implied tradeoff between within-district and statewide representation for different states.

\begin{figure}[h!]
\centering
\input{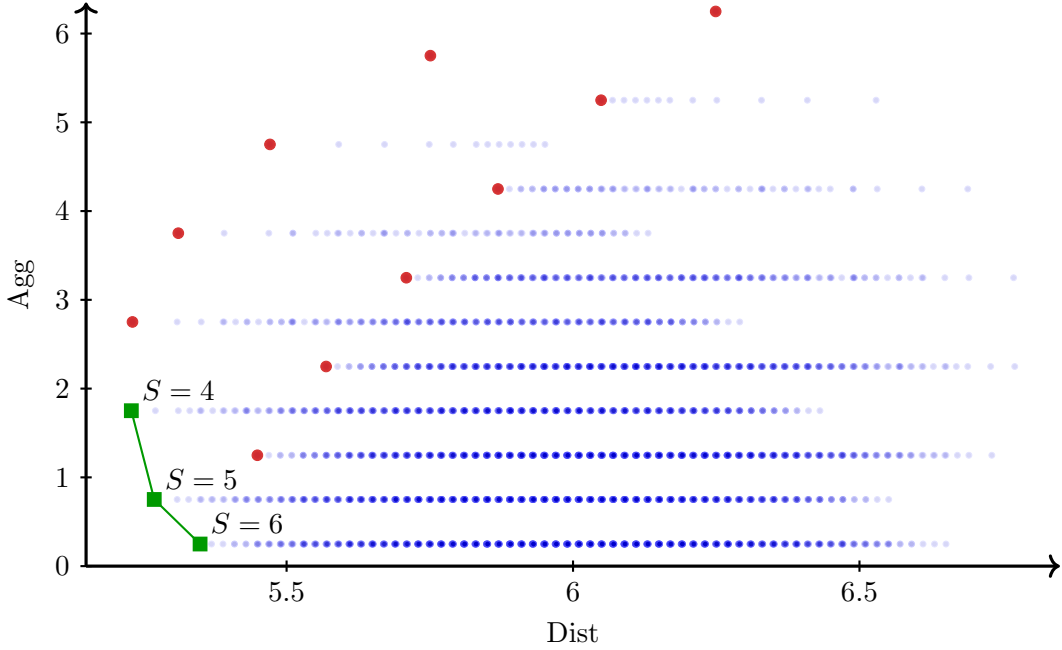}
\caption{Feasible misrepresentation pairs $(\Dist,\Agg)$ for the profile in the right panel of Figure~\ref{fig:mainthm}. 
Red points are the top-$S$ allocations.  Green squares trace the misrepresentation-minimizing allocation as $w_A$ varies. Blue points are all other feasible allocations.}

\label{fig:dist-agg-all-allocations}
\end{figure}

Figure~\ref{fig:dist-agg-all-allocations} visualizes the Pareto frontier of feasible misrepresentation pairs $(\Dist,\Agg)$ via the green squares. As $w_A$ increases, the misrepresentation-minimizing allocation moves along this frontier by jumping between successive green points. At any switching weight $w_A$ between two consecutive green squares, the indifference condition implies that the connecting segment has slope $-\frac{1}{w_A},$ which is the marginal rate of substitution between $\Agg$  and  $\Dist$ at the switch.

\section{Properties of the Optimal Family}
\subsection{Concentration of Support and Majorization Monotonicity}\label{sec:comparative-statics}

In this section, we study how misrepresentation changes as the geographic concentration of support changes. We fix Party~$A$’s statewide vote share $\mathbf{\bar{p}}$ and vary only how this support is distributed across districts. Holding $\mathbf{\bar{p}}$ and the number of  Party $A$ districts fixed leaves the statewide term $w_A|a-S|$ unchanged, so differences in total misrepresentation across profiles come entirely from the within-district term. Under the top-$S$ allocation, within-district misrepresentation is decreasing in $\sum_{i=1}^S p_{(i)}$, the cumulative support in Party~$A$’s $S$ strongest districts. This dependence on cumulative support in the strongest districts points to majorization---which compares these cumulative sums at every cutoff---as a natural notion of greater geographic concentration of support for a party.

Formally, given $\mathbf p,\mathbf q$ with the same mean $\bar{\mathbf p}=\bar{\mathbf q}$, we say that $\mathbf p$ \emph{majorizes} $\mathbf q$ if
\[
  \sum_{i=1}^k p_{(i)}\ \ge\ \sum_{i=1}^k q_{(i)}\ \ \text{for every }k=1,\dots,N.
\]

To compare different voter distributions, it is convenient to index the objective by the vote-share profile. For a profile $\mathbf r$ we write $\Phi_{\mathbf r}(S;w_A):=\Dist_{\mathbf r}(S)+w_A|a-S|$.\footnote{Where $\Dist_{\mathbf r}(S)$ corresponds to the district level misrepresentation under vote profile $R$ and a top-$S$ allocation with $S$ districts awarded to Party $A$.} The next lemma shows that misrepresentation satisfies a natural monotonicity property with respect to majorization for a fixed seat count $S$.

\begin{lemma}\label{lem:maj_fixedS}
  If $\mathbf p$ majorizes $\mathbf q$, then $   \Phi_{\mathbf p}(S;w_A)\ \le\ \Phi_{\mathbf q}(S;w_A)$ for every $S$.
\end{lemma}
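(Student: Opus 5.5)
The plan is to use the closed form for district-level misrepresentation under a top-$S$ allocation, derived right after Lemma~\ref{lem:TopS}:
\[
\Dist_{\mathbf r}(S)= S + N\bar{\mathbf r} - 2\sum_{i\le S} r_{(i)} .
\]
For a profile $\mathbf r$ and seat total $S$, the objective $\Phi_{\mathbf r}(S;w_A)$ is the constrained minimum of $\Phi$ over allocations with $S$ seats. By Lemma~\ref{lem:TopS}, this minimum is attained at the top-$S$ allocation. Hence
\[
\Phi_{\mathbf r}(S;w_A)= S + N\bar{\mathbf r} - 2\sum_{i\le S} r_{(i)} + w_A\,\bigl|N\bar{\mathbf r}-S\bigr|.
\]

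Next, apply this formula to $\mathbf p$ and $\mathbf q$ at the same $S$ and subtract. Majorization is defined only for profiles with equal means, so $\bar{\mathbf p}=\bar{\mathbf q}$. Therefore $a$ is the same for both profiles, and the terms $S$, $N\bar{\mathbf r}$ and $w_A|a-S|$ cancel. This leaves
\[
\Phi_{\mathbf p}(S;w_A)-\Phi_{\mathbf q}(S;w_A) = -2\Bigl(\sum_{i\le S} p_{(i)}-\sum_{i\le S} q_{(i)}\Bigr).
\]
By the definition of majorization with $k=S$, the bracket is nonnegative for every $S\in\{1,\dots,N\}$. So the difference is $\le 0$ for all such $S$, and the sign does not depend on $w_A\ge 0$.

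It remains to treat the endpoint $S=0$. There the partial sum is empty and both sides equal $N\bar{\mathbf p}+w_A a$, so equality holds.

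I expect no real obstacle. The one point needing care is that the order statistics $p_{(i)}$ and $q_{(i)}$ are taken separately for each profile. The top-$S$ districts can therefore differ between $\mathbf p$ and $\mathbf q$. This does not matter, because Lemma~\ref{lem:TopS} applies profile by profile and the formula depends only on each profile's own sorted partial sums, which is exactly what majorization compares.
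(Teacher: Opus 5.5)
Your proof is correct and matches the paper's argument: the closed form for $\Dist_{\mathbf r}(S)$ reduces $\Phi_{\mathbf p}(S;w_A)-\Phi_{\mathbf q}(S;w_A)$ to $-2$ times the difference of the top-$S$ partial sums, the aggregate term cancels because the means are equal, and majorization fixes the sign. Your explicit check of $S=0$ and your remark that the order statistics are taken separately for each profile are small clarifications that the paper leaves implicit.
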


Intuitively, majorization increases Party~$A$'s cumulative support in its strongest districts, so for any fixed $S$, the top-$S$ allocation places seats where fewer voters are represented by their non-preferred party. 

We now move from fixing the seat total $S$ to evaluating cutoff rules, which choose the seat total endogenously as a function of the vote-share profile. For a weight $w_A\ge 0$, define the misrepresentation induced by rule $R$ under profile $\mathbf r$ as
\[
  \Phi_{\mathbf r}(R;w_A)\ :=\ \Phi_{\mathbf r}(S_R(\mathbf r);w_A).
\]

It turns out that the ordering in Lemma~\ref{lem:maj_fixedS} is preserved when the seat total is not fixed but instead is chosen optimally.

\begin{proposition}\label{prop:maj_opt}
If $\mathbf p$ majorizes $\mathbf q$, then $ \Phi_{\mathbf p}(R_{w_A};w_A) \le \Phi_{\mathbf q}(R_{w_A};w_A)$
\end{proposition}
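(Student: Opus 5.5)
The plan is to compare the two optimal values through a single seat total: the seat total chosen under the less concentrated profile $\mathbf q$. Let $S^*_{\mathbf q}:=S_{R_{w_A}}(\mathbf q)$ be the seat total selected by $R_{w_A}$ at $\mathbf q$. By definition of $R_{w_A}$ it minimizes $\Phi_{\mathbf q}(\cdot;w_A)$ over $\{0,\dots,N\}$. By Lemma~\ref{lem:TopS}, the misrepresentation of the rule's allocation equals this reduced value, so
\[
\Phi_{\mathbf q}(R_{w_A};w_A)=\min_{s}\Phi_{\mathbf q}(s;w_A).
\]
The identical statement holds for $\mathbf p$:
\[
\Phi_{\mathbf p}(R_{w_A};w_A)=\min_{s}\Phi_{\mathbf p}(s;w_A).
\]
This holds regardless of how ties among minimizers are broken, since all minimizers give the same value.

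The key step is the chain
\[
\Phi_{\mathbf p}(R_{w_A};w_A)=\min_s\Phi_{\mathbf p}(s;w_A)\le \Phi_{\mathbf p}(S^*_{\mathbf q};w_A)\le \Phi_{\mathbf q}(S^*_{\mathbf q};w_A)=\Phi_{\mathbf q}(R_{w_A};w_A).
\]
The first inequality is optimality of $R_{w_A}$ at $\mathbf p$. The second is Lemma~\ref{lem:maj_fixedS} applied at $S=S^*_{\mathbf q}$.

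One point needs checking: the statewide term $w_A|a-S|$ must be the same function of $S$ under both profiles. This holds because majorization requires $\bar{\mathbf p}=\bar{\mathbf q}$, so $a$ is common to both. If one wants this explicit rather than hidden inside Lemma~\ref{lem:maj_fixedS}, write
\[
\Dist_{\mathbf r}(S)=S+a-2\sum_{i\le S}r_{(i)}.
\]
The difference $\Phi_{\mathbf q}(S;w_A)-\Phi_{\mathbf p}(S;w_A)$ is then
\[
2\Big(\sum_{i\le S}p_{(i)}-\sum_{i\le S}q_{(i)}\Big)\ge 0.
\]

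I do not expect a real obstacle here; the argument is a standard envelope comparison. The only subtlety is conceptual. The optimal seat totals at $\mathbf p$ and $\mathbf q$ may differ, so one cannot compare the two values at a common optimal $S$. The fix is to evaluate $\mathbf p$ at $\mathbf q$'s optimizer rather than the reverse. This ordering is exactly what fails for fixed-threshold rules such as FPTP: there $S_F(\mathbf p)$ need not minimize $\Phi_{\mathbf p}$, so the first inequality in the chain is unavailable.
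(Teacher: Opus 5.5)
Your proposal is correct and is essentially the paper's argument. The paper also evaluates $\Phi_{\mathbf p}$ at $\mathbf q$'s optimizer $S^*$ and uses the chain $\min_S \Phi_{\mathbf p}(S;w_A)\le \Phi_{\mathbf p}(S^*;w_A)\le \Phi_{\mathbf q}(S^*;w_A)=\min_S \Phi_{\mathbf q}(S;w_A)$, with the middle inequality coming from Lemma~\ref{lem:maj_fixedS}.
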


This conclusion need not extend to rules that keep the district winning threshold fixed. For example, under first past the post the threshold is $\tfrac12$, so the induced seat total can vary across profiles even when the statewide mean $\mathbf{\bar{p}}$ is unchanged. When $w_A>0$, these discrete seat swings can raise statewide distortion enough to offset the within district improvement implied by majorization. We provide examples of this failure in the appendix.

This motivates the next question: when do the majorization comparative statics continue to hold once a \emph{rule} chooses the seat total endogenously?

\paragraph{Majorization monotonicity.} We say that a rule $R$ satisfies \emph{majorization monotonicity at} $w_A$ if for all profiles $\mathbf p,\mathbf q$ with $\mathbf p$ majorizing $\mathbf q$, 

\begin{equation}\tag{MM}\label{eq:MM} 
\Phi_{\mathbf p}\!\big(R; w_A\big) \ \le\ \Phi_{\mathbf q}\!\big(R; w_A\big).
\end{equation} 

We already showed majorization monotonicity when the seat total is fixed (Lemma~\ref{lem:maj_fixedS}) and when the seat total is chosen optimally (Proposition~\ref{prop:maj_opt}). The next theorem characterizes majorization monotonicity within the misrepresentation-minimizing family. In this family, monotonicity holds only when the rule uses the same weight as the one used to evaluate misrepresentation.

\begin{theorem}\label{thm:maj_char_weight_family_lambda}
  Consider the misrepresentation minimizing family $\{R_\lambda:\lambda\ge 0\}$. The majorization monotonicity property \eqref{eq:MM} holds if and only if $\lambda= w_A$.
\end{theorem}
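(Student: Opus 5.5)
The sufficiency direction ($\lambda=w_A$) is exactly Proposition~\ref{prop:maj_opt}, so the work is in necessity: for every $\lambda\neq w_A$ we will exhibit profiles $\mathbf p$ majorizing $\mathbf q$ with $\Phi_{\mathbf p}(R_\lambda;w_A)>\Phi_{\mathbf q}(R_\lambda;w_A)$. Throughout, we treat $\lambda$ and $w_A$ as the parameters of the rule and of the evaluation respectively. We will use Corollary~\ref{cor:WeightIntervals_linear} and Theorem~\ref{thm:cutoff_near_prop} to control which seat total $R_\lambda$ picks on a given profile.

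The plan is to fix the mean so that $a$ is an integer, say $a=k$ with $N\ge 3$. Then $\Agg$ increments are exactly $\pm1$ away from $k$, and the optimal seat total under weight $\lambda$ is determined by where the effective cutoff $(1\mp\lambda)/2$ falls among the order statistics. The idea is to take $\mathbf q$ to be the equal profile $q_d=k/N$, or a slight perturbation of it. Under any $R_\lambda$ with $\lambda>0$ this profile is assigned the proportional total $k$, since all districts are nearly equal and moving toward $k$ is almost free in $\Dist$. Next we construct $\mathbf p$ majorizing $\mathbf q$ by moving support in a controlled way. We choose $\mathbf p$ so that exactly one marginal district has vote share $p_{(k)}$ or $p_{(k+1)}$, with $1-2p$ lying strictly between $\lambda$ and $w_A$. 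Then $R_\lambda$ and the evaluator disagree about whether that marginal seat should be flipped.

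\textbf{Case $\lambda<w_A$.} The rule puts too little weight on proportionality. We build $\mathbf p$ with $p_{(k)}$ slightly below $1/2$, so that $1-2p_{(k)}\in(\lambda,w_A)$. Then $R_\lambda$ chooses $k-1$ seats, while the evaluator prefers $k$. The loss is
\[
\Phi_{\mathbf p}(k-1;w_A)-\Phi_{\mathbf p}(k;w_A)=w_A-(1-2p_{(k)})>0.
\]
Meanwhile $\mathbf q$ is handled optimally at seat total $k$, and by Lemma~\ref{lem:maj_fixedS} we have $\Phi_{\mathbf p}(k;w_A)\le\Phi_{\mathbf q}(k;w_A)$. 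The key is to make $\mathbf p$ only a small majorizing perturbation of $\mathbf q$, so that the gap $\Phi_{\mathbf q}(k)-\Phi_{\mathbf p}(k)=2\bigl(\sum_{i\le k}p_{(i)}-\sum_{i\le k}q_{(i)}\bigr)$ is smaller than the loss $w_A-(1-2p_{(k)})$.

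\textbf{Case $\lambda>w_A$.} Symmetrically, the rule over-pursues proportionality. Here we use a profile where the evaluator prefers stopping short of $k$ but $R_\lambda$ pushes to $k$, at a loss of $(1-2p_{(k)})-w_A>0$. A convenient choice is to let $\mathbf q$ be slightly less concentrated and have both the rule and the evaluator agree on it.

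The main obstacle is the tension between two requirements. Majorization must hold with a small cumulative-sum gap, yet the marginal district must have $1-2p_{(k)}$ strictly inside the interval between $\lambda$ and $w_A$. For $\mathbf q$ near-uniform at $k/N$, we need $k/N$ close enough to the relevant threshold. We therefore expect to choose the mean $\bar{\mathbf p}$ itself, making $1-2k/N$ lie in the open interval, and to perturb only two districts around it.

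When $N$ is too small to make the interval reachable at integer $a$, we would use non-integer $a$ together with the near-proportionality formulas in footnote~\ref{fn:near}. This is the step where we expect to spend the most care, including the endpoint case $w_A=0$ or $\lambda=0$, where FPTP's threshold $1/2$ must be handled directly.
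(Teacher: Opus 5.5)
Your sufficiency direction is fine, but the necessity construction does not go through as planned.

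First, the claim that the uniform profile $q_d=k/N$ receives $k$ seats under every $R_\lambda$ with $\lambda>0$ is false. On a uniform profile, each seat moved toward $k$ changes $\Dist$ by $|1-2k/N|$, so $R_\lambda$ picks $k$ only when $\lambda\ge|1-2k/N|$. Your later choice $1-2k/N\in(\lambda,w_A)$ is exactly where this breaks in the case $\lambda<w_A$. There $R_\lambda(\mathbf q)=0$. Since $\Phi_{\mathbf q}(\cdot;w_A)$ decreases on $\{0,\dots,k\}$, Lemma~\ref{lem:maj_fixedS} gives $\Phi_{\mathbf q}(0;w_A)\ge\Phi_{\mathbf p}(k-1;w_A)$, so there is no violation. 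Also, lowering the $k$-th order statistic of a near-uniform profile requires moving $N-k+1$ entries, not two.

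Second, and decisively, an integer $a$ can never witness a failure of \eqref{eq:MM} once $\lambda>1$, whatever $N$ and $w_A$ are. For $S\le a-1$, $\Delta^+(S;\lambda)=1-2p_{(S+1)}-\lambda<0$. For $S\ge a$, $\Delta^+(S;\lambda)=1-2p_{(S+1)}+\lambda>0$. So $R_\lambda$ selects $S=a$ on every such profile, and Lemma~\ref{lem:maj_fixedS} forbids $\Phi_{\mathbf p}(a;w_A)>\Phi_{\mathbf q}(a;w_A)$. Even for $\lambda<1$, your case $\lambda<w_A$ needs $k\ge 2$ and $p_{(k)}\ge 1/(N-k+1)$. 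That restricts it to $\lambda<1-2/(N-1)$ and empties it for $N=3$. So the non-integer-$a$ step you defer is not a small-$N$ corner case. It carries the proof for an unbounded set of weights, and it is not supplied.

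The missing idea is this: for large weights, $R_\lambda$ and the evaluator can only disagree about $\lfloor a\rfloor$ versus $\lceil a\rceil$. That requires a fractional part $f:=a-\lfloor a\rfloor\in(\tfrac12,1)$. There the switching weight $(1-2p_{(\lceil a\rceil)})/(2f-1)$ sweeps all of $(0,\infty)$ as $f\downarrow\tfrac12$.

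The paper builds this in by reusing the gerrymandering-proofness profiles (with $a=N-\tfrac32+\varepsilon$). It then argues by continuity rather than explicit construction. Along the majorization chain $\mathbf r(u)=(1-u)\mathbf m+u\mathbf p$ it finds a switch point of $R_\lambda$. There $\Dist$ moves continuously, but the $w_A$-objective jumps by $(w_A-\lambda)\bigl(|a-S'|-|a-S_\lambda(\mathbf m)|\bigr)\neq 0$. Your orientation bookkeeping is exactly what decides whether such a jump contradicts \eqref{eq:MM}: for $\lambda<w_A$ the rule must move away from proportionality as concentration rises, and for $\lambda>w_A$ toward it. The paper's ``opposite case is analogous'' in fact needs a path chosen with that orientation.

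Your explicit route can also be completed directly, as follows.
\begin{itemize}
\item Take $a=N-2+f$ with $f\in(\tfrac12,1)$ close to $\tfrac12$. Put the top $N-2$ districts above $\tfrac12$ and the last below $\tfrac12$, so only the seat totals $N-2$ and $N-1$ compete.
\item Obtain $\mathbf p$ from $\mathbf q$ by one regressive transfer: from rank $N-1$ to rank $1$ if $\lambda<w_A$, and from rank $N$ to rank $N-1$ if $\lambda>w_A$.
\item Choose the transfer so that $(1-2p_{(N-1)})/(2f-1)$ lies strictly between $\lambda$ and $w_A$, while $(1-2q_{(N-1)})/(2f-1)$ lies strictly on the other side of $\lambda$.
\end{itemize}
The transfer leaves the top-$(N-1)$ sum (respectively the top-$(N-2)$ sum) unchanged. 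Hence $\Phi_{\mathbf p}(\cdot;w_A)$ and $\Phi_{\mathbf q}(\cdot;w_A)$ agree at the seat total $R_\lambda$ picks for $\mathbf q$. The evaluator's strict preference for that seat total at $\mathbf p$ is then itself the violation, with no smallness tradeoff needed.
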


Majorization makes Party~A’s support more concentrated while holding the statewide mean fixed. By Lemma \ref{lem:maj_fixedS}, for any fixed  seat total $S$, this concentration can only strengthen the top-$S$ districts and therefore weakly improves the district component of the objective. Thus any failure of majorization monotonicity must come from how the rule chooses the statewide seat total. When $\lambda=w_A$, the rule selects $S$ by minimizing the same weighted criterion we use to evaluate outcomes, so greater concentration cannot induce a switch to a worse $S$. When $\lambda\neq w_A$, the rule is minimizing a different tradeoff.  We prove Theorem \ref{thm:maj_char_weight_family_lambda} by first showing that one can find a more concentrated profile at which the $\lambda$ calibrated rule becomes indifferent between different seat totals, but that indifference does not hold under $w_A$. Thus, the $\lambda$ calibrated rule picks a seat total that is strictly worse at the more concentrated profile, and choosing the two profiles close enough to minimize the change in district component, we obtain a contradiction to \eqref{eq:MM}. Hence majorization monotonicity uniquely pins down $\lambda=w_A$.

\subsection{Characterization of Misrepresentation Minimizing Rules}

\paragraph{A monotonicity characterization of first--past--the--post.} We now provide an axiomatic identification of first--past--the--post (FPTP) \emph{within the misrepresentation-minimizing family}
$\{R_{\lambda}\}_{\lambda \ge 0}$. The guiding idea is that FPTP is the only member of this family that treats districts as independent in the sense that increasing Party~$A$'s support in a subset of districts cannot \emph{harm} its representation elsewhere.
Formally, we impose a no-negative-spillovers axiom (set monotonicity) and show it pins down the weight $w_A=0$, i.e., FPTP.

\begin{defn}\label{def:odm}
  A rule $R$ satisfies \emph{strong monotonicity} if for all profiles $\mathbf p,\mathbf p'$ with $\mathbf p'\ge \mathbf p$ componentwise,
  \[
    R(\mathbf p)\subseteq R(\mathbf p').
  \]
\end{defn}

Strong monotonicity says that raising Party~$A$'s support district-by-district cannot cause it to lose any district it previously won. 
\begin{proposition}\label{prop:fptp-char}
A misrepresentation-minimizing rule satisfies strong monotonicity if and only if $w_A=0$. In that case it coincides with first--past--the--post rule $R_{F}$.
\end{proposition}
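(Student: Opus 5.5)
We prove the two directions separately. For the ``if'' direction, set $w_A=0$. Then $\Phi(\mathbf x;0)=\Dist(\mathbf x)=\sum_d\big(x_d(1-p_d)+(1-x_d)p_d\big)$, which is separable across districts. Each term is minimized by choosing $x_d=1$ exactly when $1-2p_d<0$, that is, when $p_d>\tfrac12$; ties are excluded by the standing assumption. So the minimizer is unique and equals the first-past-the-post allocation with cutoff $t=\tfrac12$. This also agrees with Theorem~\ref{thm:cutoff_near_prop} at $w_A=0$. Because this cutoff does not depend on $\mathbf p$, raising $\mathbf p$ componentwise to $\mathbf p'$ preserves every inequality $p_d\ge \tfrac12$. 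Hence $R_F(\mathbf p)\subseteq R_F(\mathbf p')$, and strong monotonicity holds.

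For the ``only if'' direction, fix $w_A>0$ and build two profiles $\mathbf p\le \mathbf p'$ for which every misrepresentation-minimizing rule drops some district from $R(\mathbf p)$ at $\mathbf p'$. The idea is that raising support in a strong district increases $a$, and through the statewide term this can move the optimal seat total. We work with an allocation that is proportional at $\mathbf p$ and pick the two profiles so that the optimizer is unique at both; uniqueness removes any dependence on the tie-selection rule. Two routes are available.

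\emph{(a) Seat total rises.} Take $\mathbf p$ at which $A$ wins $S$ seats under the optimum, with the marginal district ranked $S+1$ just below the current cutoff. Then raise $A$'s support in a district it already wins, possibly up to $1$, which raises $a$. By Corollary~\ref{cor:WeightIntervals_linear} or footnote~\ref{fn:near}, once $a$ passes $S+1$ the cutoff shifts to $\tfrac{1-w_A}{2}$ and seat $S+1$ is added. Adding seats never removes districts, so this route gives no violation.

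\emph{(b) Seat total falls.} This is the route we use: engineer a case where the optimal seat total \emph{decreases}. When $A$ is overrepresented, the cutoff is $t_+(w_A)=\tfrac{1+w_A}{2}>\tfrac12$. Choose $\mathbf p$ so that $A$'s optimal seats are its top $S$ districts, and let the weakest of these, $d^*$, satisfy $p_{d^*}$ only slightly above $t_+(w_A)$. Choose the remaining districts so that $a$ is near $S$ or below it, making $A$ proportional or underrepresented. Then raise a \emph{different} district's share, which cannot lower $a$. This does not make $A$ more overrepresented, so it does not directly yield a violation.

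Route (b) therefore needs more care, and the cleaner construction goes through the underrepresented side. At $\mathbf p$, let $A$ be underrepresented and choose $w_A$ small enough that the cutoff is $\tfrac{1-w_A}{2}<\tfrac12$. Place a district $d^*$ with $p_{d^*}$ slightly above $\tfrac{1-w_A}{2}$ but below $\tfrac12$, so that $A$ wins $d^*$ only because of the proportionality correction. Now increase $A$'s support in districts $A$ already wins by a large margin, raising them toward $1$. This raises $a$ until $S$ reaches the proportional count $\lfloor a\rfloor$ or lies above it. At that point the optimality conditions in footnote~\ref{fn:near} switch branches, the bottom-ranked winning seat $d^*$ with $p_{d^*}<\tfrac12$ is no longer justified, and the optimal seat total stays at $S$. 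To force the drop, we instead increase the top districts \emph{and} insert one losing district whose share crosses above $p_{d^*}$. The top-$S$ allocation then takes that district in place of $d^*$, so $d^*\notin R(\mathbf p')$ even though $p'_{d^*}\ge p_{d^*}$.

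The main obstacle is this explicit construction: we must choose $N\ge3$ shares so that, at both $\mathbf p$ and $\mathbf p'$, the minimizer is unique under the optimality conditions of Lemma~\ref{lem:WeightIntervals} and footnote~\ref{fn:near}, and so that $d^*$ is won at $\mathbf p$ and lost at $\mathbf p'$. The construction must also work for every $w_A>0$, including large $w_A$, where the rule is $R_{PR}$. For large $w_A$ the argument is simpler: take $S_{PR}$ fixed and raise a losing district above $d^*$, so the top-$S_{PR}$ set swaps $d^*$ out. In fact this swap argument works for every $w_A>0$ whenever the seat total stays fixed and the winning set contains a district below $\tfrac12$. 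So we will choose $\mathbf p$ with $A$ underrepresented and $S^*(w_A)>S_F$, which is possible for all $w_A>0$ by Corollary~\ref{thm:SeatRuleCDF}, and perturb only a losing district slightly so that $a$ stays in the same regime.
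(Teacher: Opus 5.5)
Your ``if'' direction is fine. The gap is in the ``only if'' direction. You isolate the right mechanism: keep the seat total fixed while a losing district overtakes the marginal winner $d^*$. But you never produce profiles at which the seat total provably stays fixed. The class you settle on ($A$ underrepresented, $S^*(w_A)>S_F$, then raise a losing district $e$ above $p_{d^*}$) does not deliver it.

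Take the case $S^*+1\le a$. Uniqueness of $S^*$ at $\mathbf p$ requires $\Delta^+(S^*-1;w_A)=1-2p_{d^*}-w_A<0$. After the raise, $d^*$ is ranked $S^*+1$ and $S^*+1\le a\le a'$. So at $\mathbf p'$ you get $\Delta^+(S^*;w_A)=1-2p_{d^*}-w_A<0$. Forward differences are weakly increasing in $S$, so every minimizer at $\mathbf p'$ has at least $S^*+1$ seats and therefore contains $d^*$. In this regime the rule is locally the fixed cutoff $\frac{1-w_A}{2}$, which is monotone, so $e$ is added rather than swapped in. For example, take $N=8$, $w_A=0.1$ and $\mathbf p=(0.99,0.98,0.97,0.47,0.44,0.43,0.42,0.41)$. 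This has unique optimum $S^*=4$. Raising $0.44$ to $0.48$ gives unique optimum $5$, and the $0.47$ district still wins. Your intermediate paragraph also runs the wrong way: raising $a$ raises $\lfloor a\rfloor$ and makes $A$ \emph{more} underrepresented, so $S$ cannot ``reach'' the proportional count that way.

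The missing idea is that the seat total must be pinned by the statewide term. Choose the pair of profiles, depending on $w_A$, so that $a$ and $a'$ round to the same proportional count $S_{PR}$ and $S_{PR}$ is the unique minimizer at both. Then an extra seat is penalized by the statewide term, and a district that overtakes the marginal winner must replace it rather than join it. This needs no largeness of $w_A$. It only needs the two competing districts to have shares close enough to $\tfrac12$ relative to $w_A$, which is possible for every $w_A>0$.

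That is exactly the paper's construction: $\mathbf p=(\tfrac12+\varepsilon,\tfrac12-\varepsilon,0,\dots,0)$ and $\mathbf q=(\tfrac12+\varepsilon,\tfrac12+2\varepsilon,0,\dots,0)$ with $\varepsilon<\min\{w_A/8,1/12\}$. Here $a$ moves from $1$ to $1+3\varepsilon$, Lemma~\ref{lem:WeightIntervals} gives $S=1$ as the unique optimum at both profiles, and the single seat moves from district $1$ to district $2$. The dropped district there has share above $\tfrac12$, so your restriction $p_{d^*}<\tfrac12$ is not needed.
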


The intuition follows from the structure of the objective. When $w_A=0$ the criterion is district-separable and each district is awarded to its local majority, whereas any $w_A>0$ couples districts through the statewide term and necessarily induces cross-district reallocation of seats on some profiles, violating monotonicity. We prove this by constructing, for each $w_A>0$, a pair of profiles such that increasing Party~$A$’s support in one district leaves the optimal seat total at $S=1$ but changes the top-ranked district, so the unique awarded seat shifts to the improved district. As a result, a district can lose its seat even though its own vote share is unchanged (and is still above $\tfrac12$), which is ruled out under FPTP's fixed-threshold rule.

\paragraph{A gerrymandering--proof characterization of proportionality.} A rule is \emph{gerrymandering--proof} if no party can reshuffle their aggregate votes across districts and increase their seat share. Define the set profiles where Party $A$ obtains a vote share of $m$.
\[
  \mathcal P(m)
  :=
  \Big\{\mathbf q\in[0,1]^N:\ \bar{\mathbf{q}}=m\Big\}.
\]

If $\mathbf{q} \in \mathcal P(\bar{\mathbf{p}})$, $\mathbf q $ is a reshuffling of support across districts relative to $\mathbf p$. Gerrymandering is possible if this reshuffling changes the seat total in favor of some party. Requiring that no such reshuffling can increase a party's seat total implies that whenever $\mathbf{q} \in \mathcal P ({\bar{\mathbf{p}}})$, we must have $|R(\mathbf p)|\ge |R(\mathbf q)|$. Applying the same logic after swapping the roles of $\mathbf p$ and $\mathbf q$ yields the reverse inequality, so the seat totals must coincide whenever $\bar{\mathbf{p}}=\bar{\mathbf{q}}$, leading to the following definition.

\begin{defn}\label{def:gp}
$R$ is \emph{gerrymandering--proof} if for every
$\mathbf p$ and $\mathbf{q} \in \mathcal P ({\bar{\mathbf{p}}})$, the induced seat count is the same: \(  |R(\mathbf p)|=|R(\mathbf q)|\).
\end{defn}

To state the characterization within the optimal family, $\{R_{w_A}\}_{w_A\ge 0}$, it is convenient to include the limiting rule $R_{\infty}$.\footnote{An equivalent parametrization is to minimize $(1-\omega)\Dist(\cdot)+\omega\Agg(\cdot)$ with $\omega\in[0,1]$. For $\omega\in[0,1)$ this is equivalent to our objective after scaling, with $w_A=\omega/(1-\omega)$. The case $\omega=1$ corresponds to $w_A=\infty$.} At $w_A=\infty$, only the statewide term matters, so the rule is equivalent to $R_{PR}$: it chooses the seat total closest to $a$ and then awards Party~$A$ its top-$S$ districts

\begin{proposition}\label{prop:gp-infty} 
A misrepresentation-minimizing rule satisfies gerrymandering--proofness  if and only if $w_A=\infty$.  In that case it coincides with proportional rule $R_{PR}$. 
\end{proposition}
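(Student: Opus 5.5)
Both directions reduce to what the seat total of $R_{w_A}$ depends on. By Lemma~\ref{lem:TopS}, every misrepresentation-minimizing rule awards the top-$|R(\mathbf p)|$ districts, so gerrymandering-proofness is purely a condition on the selected seat total $S^*_{w_A}(\mathbf p)$.

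\emph{Sufficiency ($w_A=\infty$).} Here the rule minimizes $|a-S|$ alone, so it picks $S_{PR}=\max\arg\min_S |a-S|$. This quantity depends on $\mathbf p$ only through $a=N\bar{\mathbf p}$. Any $\mathbf q\in\mathcal P(\bar{\mathbf p})$ has the same $a$, hence the same $S_{PR}$, so $|R(\mathbf p)|=|R(\mathbf q)|$. The fixed max-selection makes the tie-break deterministic, so this direction is immediate.

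\emph{Necessity (any finite $w_A\ge 0$ fails).} I would build two profiles with the same mean but different optimal seat totals. By Corollary~\ref{thm:SeatRuleCDF} and Theorem~\ref{thm:cutoff_near_prop}, it suffices to find one profile whose unique optimal seat total is $S_{PR}$ and a reshuffling whose unique optimal total is some $S\ne S_{PR}$.

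\begin{itemize}
\item Fix a mean, say with $a$ well inside $(1,N-1)$, e.g.\ $\bar{\mathbf p}$ slightly below $1/2$ so that $a$ is not a half-integer.
\item Profile $\mathbf p$: all districts near $\bar{\mathbf p}$, perturbed slightly to be distinct and below $1/2$. Then $S_F=0$.
\item Profile $\mathbf q$: spread the support so that many districts sit just above $1/2$ and the rest are lower. This is possible for $N\ge3$; with $\bar{\mathbf p}$ just under $1/2$, one can get $S_F(\mathbf q)=N-1$, or at least $S_F(\mathbf q)>\lceil a\rceil$.
\end{itemize}

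Then compare the two profiles through $\Delta^+$. For $\mathbf q$, at $S\ge a$ we have $\Delta^+(S;w_A)=w_A+1-2q_{(S+1)}$. For $\mathbf p$, at $S\le a-1$ we have $\Delta^+=-w_A+1-2p_{(S+1)}$.

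The main obstacle is uniformity in $w_A$: for large finite $w_A$ the optimum is already $S_{PR}$ on both profiles unless the district terms exceed $w_A$. Since $|1-2p|\le1$, a district term can dominate only if $w_A<1$. This bounded-swing fact suggests that for $w_A\ge 1$ every optimum may be within one seat of $a$. I would therefore handle $w_A\ge1$ through the near-proportional case of footnote~\ref{fn:near}.

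\emph{Case $w_A\ge 1$.} Choose $a-\lfloor a\rfloor$ close to $1/2$, say $1/2-\varepsilon$, so that $S_{PR}=\lfloor a\rfloor$. The switch weight $w_{\lceil a\rceil}=\frac{1-2p_{(\lceil a\rceil)}}{2(a-\lfloor a\rfloor)-1}$ then has denominator $-2\varepsilon$ small in magnitude. In profile $\mathbf q$ put $q_{(\lceil a\rceil)}$ near $1$, which is feasible by concentrating support. This makes $\Phi(\lceil a\rceil)-\Phi(\lfloor a\rfloor)=(1-2q_{(\lceil a\rceil)})+w_A\cdot 2\varepsilon<0$ once $\varepsilon<1/(2w_A)$. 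So $\mathbf q$ selects $\lceil a\rceil$. Meanwhile $\mathbf p$, with $p_{(\lceil a\rceil)}<1/2$, selects $\lfloor a\rfloor$.

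\emph{Case $w_A<1$.} Use the far-from-proportionality cutoffs $t_\pm=(1\pm w_A)/2$, which are interior to $(0,1)$. Put several districts of $\mathbf q$ above $t_+$, beyond $a$, while keeping the mean fixed by pushing the others toward $0$; this produces overrepresentation. In $\mathbf p$, keep all districts between $t_-$ and $t_+$ near the mean, giving a different seat total.

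Choosing $\varepsilon$ depending on $w_A$ in the first case gives strict, hence tie-break-independent, differences. This contradicts Definition~\ref{def:gp} for every finite $w_A$.
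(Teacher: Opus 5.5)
\textbf{There is a genuine gap in your Case $0<w_A<1$.} Your sufficiency direction is the paper's. Your Case $w_A\ge 1$ uses the paper's mechanism: take $a$ within $\varepsilon$ of a half-integer and put the marginal district on opposite sides of $\tfrac12$.

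The Case $0<w_A<1$ construction needs a profile $\mathbf q$ whose optimal seat total $S$ is at least one full seat above $a$. If $S-1\ge a$, Lemma~\ref{lem:WeightIntervals} gives $\Delta^+(S-1;w_A)=1-2q_{(S)}+w_A\le 0$, so all top $S$ districts are at least $t_+(w_A)=(1+w_A)/2$. Hence $S(1+w_A)/2\le a\le S-1$, i.e.\ $S\ge 2/(1-w_A)$. Since $S\le N$, this is possible only if $w_A\le 1-2/N$. The mirror computation with $t_-$ gives the same bound for underrepresentation.

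So for $w_A\in(1-2/N,1)$, \emph{no} profile has an optimal seat total outside $\{\lfloor a\rfloor,\lceil a\rceil\}$; for example, when $N=3$ this covers all $w_A\in(1/3,1)$. There your far-from-proportionality construction produces nothing. Your extra requirement that $\mathbf p$ keep every district in $(t_-,t_+)$ at the same mean shrinks the range further; for $N=3$ it is empty for every $w_A\in(0,1)$. Your heuristic that a district term ``can dominate only if $w_A<1$'' is true for a single district. But moving a full seat away from $a$ needs $S$ such districts simultaneously, and the mean constraint then binds.

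\textbf{The repair is to run your rounding argument for every finite $w_A$, which is what the paper does.} In your Case $w_A\ge1$, the hypothesis $w_A\ge 1$ does only two things: it stops $\mathbf p$ from choosing fewer than $\lfloor a\rfloor$ seats, and it stops $\mathbf q$ from choosing more than $\lceil a\rceil$. You get both for free at any $w_A\ge0$ by placing the districts ranked above the marginal one at (or near) $1$ and those ranked below it at (or near) $0$. Then $\Delta^+(S;w_A)\approx -1-w_A<0$ for $S<\lfloor a\rfloor$ and $\Delta^+(S;w_A)\approx 1+w_A>0$ for $S\ge\lceil a\rceil$. The choice between $\lfloor a\rfloor$ and $\lceil a\rceil$ is then decided by the margin $\pm\delta$ of the marginal district, provided $\varepsilon$ is small relative to $\delta/w_A$.

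The paper's version is as follows. Take $a=N-\tfrac32+\varepsilon$, with the marginal district at $\tfrac12+\delta$ in $\mathbf p$ and at $\tfrac12-\delta$ in $\mathbf q$. Match the means by lowering the top districts slightly in $\mathbf p$ and raising the bottom district to $\delta+\varepsilon$ in $\mathbf q$, and choose $\varepsilon<\delta/(1+w_A)$. This yields seat totals $N-1$ versus $N-2$ with no case split.

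Separately, ``put $q_{(\lceil a\rceil)}$ near $1$'' is infeasible. The top $\lceil a\rceil$ districts sum to at most $a<\lceil a\rceil$, so $q_{(\lceil a\rceil)}\le a/\lceil a\rceil$. You only need $q_{(\lceil a\rceil)}=\tfrac12+\eta$ with $w_A\varepsilon<\eta$, which is feasible once $\lfloor a\rfloor\ge 1$.
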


For any finite $w_A$, the objective trades off within-district and statewide distortions, so the optimal seat total can depend on how support is distributed across districts even when $\mathbf{\bar{p}}$ is fixed.
Only in the limit $w_A=\infty$, where the statewide term alone determines the seat total, does the rule become mean-based and therefore gerrymandering--proof in the sense of Definition~\ref{def:gp}.

\subsection{Gerrymandering Cost}

Gerrymandering-proofness is a binary property: a rule is either immune to mean preserving redistricting or it is not.
This section instead quantifies vulnerability to gerrymandering. Fixing Party~$A$'s statewide vote share $\mathbf{\bar{p}}$, we measure how costly it is to change the outcome by shuffling support across districts while keeping the total support of the party fixed.

Given statewide vote share $m$, \emph{gerrymandering cost} is a continuous map $c:\mathcal P(m)\times\mathcal P(m)\to\mathbb R_+$
satisfying: (i) $c(\mathbf p,\mathbf p)=0$, and (ii) for every distinct $\mathbf p,\mathbf p'\in\mathcal P(m)$ and all $0\le \theta<\theta'\le 1$,
\[
  c\big(\mathbf p,(1-\theta)\mathbf p+\theta\mathbf p'\big)
  \;<\;
  c\big(\mathbf p,(1-\theta')\mathbf p+\theta'\mathbf p'\big).
\]

In particular, if $\mathbf p''$ lies in the interior of the segment between $\mathbf p \neq \mathbf p'$, then
$c(\mathbf p,\mathbf p'')<c(\mathbf p,\mathbf p')$.

\paragraph{The cost of inducing extra seats.}
Fix a baseline profile $\mathbf p$ and a baseline weight $w_A$ such that Party~$A$ is initially overrepresented
relative to proportionality: $|R_{w_A}(\mathbf p)| \ >\ a$.
Fix an integer $k>|R_{w_A}(\mathbf p)|$. Define the minimal gerrymandering cost to reach at least $k$ seats under the
misrepresentation-minimizing rule at weight $w$ as
\[
  C(w;\mathbf p,k)
  :=
  \inf\Big\{\,c(\mathbf p,\mathbf r):\ \mathbf r\in\mathcal P(\mathbf{\bar{p}})\ \text{and}\ |R_w(\mathbf r)|\ge k\,\Big\},
\]
with the convention $\inf\emptyset=\infty$.
Thus $C(w;\mathbf p,k)$ is the least cost of moving from $\mathbf p$ to another mean-preserving profile that yields at least $k$ seats
for Party~$A$ under $R_w$. The next result shows that, for an overrepresented party, raising the statewide weight makes it more costly to obtain additional seats.

\begin{proposition}\label{prop:no_plateaus_kappa}
$C(w;\mathbf p,k)$ is strictly increasing on the set
$\{\,w\ge w_A:\ C(w;\mathbf p,k)<\infty\,\}$.
\end{proposition}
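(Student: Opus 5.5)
The plan is to prove monotonicity in two layers. First I show that the feasible sets are nested in $w$. Then I get strictness from a slack argument along the segment joining $\mathbf p$ to a cost-minimizing target. Fix $w_A\le w<w'$ with $C(w';\mathbf p,k)<\infty$; the goal is $C(w;\mathbf p,k)<C(w';\mathbf p,k)$. Write
\[
F(v):=\{\mathbf r\in\mathcal P(\bar{\mathbf p}):\ |R_v(\mathbf r)|\ge k\}.
\]
All profiles in $\mathcal P(\bar{\mathbf p})$ share the same $a$, and hence the same $S_{PR}$. Since $k>|R_{w_A}(\mathbf p)|>a$, we have $k\ge \lfloor a\rfloor+1$. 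Moreover, by the characterization after Lemma~\ref{lem:WeightIntervals} the optimal seat total for an overrepresented party lies weakly above $S_{PR}$, so $k>S_{PR}$.

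\textbf{Step 1 (nesting).} Take $\mathbf r\in F(w')$. For this fixed profile, apply Corollary~\ref{thm:SeatRuleCDF}, relabeling parties so that the party above proportionality is the one whose seat total falls as the weight rises. As the weight increases, the optimal seat total moves monotonically from $S_F(\mathbf r)$ toward $S_{PR}$. Because $|R_{w'}(\mathbf r)|\ge k>S_{PR}$, the profile $\mathbf r$ is on the "above proportionality" side. Lowering the weight to $w$ therefore weakly raises the optimal seat total, so $|R_w(\mathbf r)|\ge k$ and $F(w')\subseteq F(w)$. Ties at switch points are handled by the selection convention, and Step 2 avoids them anyway. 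This gives $C(w)\le C(w')$.

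More concretely, I would rewrite membership via Corollary~\ref{cor:WeightIntervals_linear} and footnote~\ref{fn:near}. For $k-1\ge a$, the condition "seat total $\ge k$ is optimal at weight $v$" is essentially $2r_{(k)}-1\ge v$. For $k=\lceil a\rceil$, it is the footnote's linear inequality
\[
\bigl(2\lceil a\rceil-1-2a\bigr)v\le 2r_{(\lceil a\rceil)}-1 .
\]
In both cases the condition is a closed inequality $g_k(\mathbf r)\ge h_k(v)$. Here $g_k$ is continuous in $\mathbf r$, because order statistics are continuous, and $h_k$ is strictly increasing in $v$.

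\textbf{Step 2 (strictness).} The set $\mathcal P(\bar{\mathbf p})$ is compact and $c(\mathbf p,\cdot)$ is continuous. The closure $\overline{F(w')}\subseteq\{\mathbf r: g_k(\mathbf r)\ge h_k(w')\}$ is compact, so the infimum $C(w')$ is attained at some $\mathbf r^*$ in this closure. Note $\mathbf r^*\neq\mathbf p$, since $g_k(\mathbf p)<h_k(w_A)\le h_k(w')$ because $\mathbf p$ yields fewer than $k$ seats at $w_A$. At $\mathbf r^*$ we have the strict slack $g_k(\mathbf r^*)\ge h_k(w')>h_k(w)$. Consider $\mathbf r_\theta=(1-\theta)\mathbf p+\theta\mathbf r^*$, which lies in $\mathcal P(\bar{\mathbf p})$ by convexity. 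By continuity of $g_k$, there is $\theta<1$ with $g_k(\mathbf r_\theta)>h_k(w)$. By strict inequality the corresponding seat total is uniquely optimal, so no tie-breaking is involved, and $\mathbf r_\theta\in F(w)$. Property (ii) of the cost then gives
\[
C(w)\le c(\mathbf p,\mathbf r_\theta)<c(\mathbf p,\mathbf r^*)=C(w').
\]

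\textbf{Main obstacle.} The delicate point is the exact translation of "$|R_v(\mathbf r)|\ge k$" into a single continuous inequality $g_k(\mathbf r)\ge h_k(v)$ with $h_k$ strictly increasing. This matters in the near-proportional case $k=\lceil a\rceil$, where footnote~\ref{fn:near} must be used and the coefficient $2\lceil a\rceil-1-2a$ could vanish or be negative. My first attempt would be the following case split. If $S_{PR}=\lceil a\rceil$, then $k>S_{PR}$ forces $k-1\ge a$, so only the linear case arises. If $S_{PR}=\lfloor a\rfloor$, then $a-\lfloor a\rfloor\le \tfrac12$ makes the coefficient positive, and the footnote's inequality gives the needed strictly increasing $h_k$. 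I also need to confirm that seat totals strictly above $k$ are covered, that is, that "$\ge k$" is equivalent to the condition at the $k$th order statistic. This follows from unimodality (Lemma~\ref{lem:WeightIntervals}) applied to the forward differences.
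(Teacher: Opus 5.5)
Your proposal is correct and follows essentially the paper's argument. Since $k-1\ge a$, feasibility at weight $v$ reduces to $\Phi_{\mathbf r}(k;v)\le \Phi_{\mathbf r}(k-1;v)$, i.e.\ $2r_{(k)}-1\ge v$, and lowering the weight creates slack that lets you retreat along the segment toward $\mathbf p$ at strictly lower cost. Your compactness argument on the closure of the feasible set also supplies the attainment of the infimum, which the paper simply assumes.

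The near-proportional case $k=\lceil a\rceil$ you worry about cannot occur. Because $|R_{w_A}(\mathbf p)|$ is an integer exceeding $a$, we get $k\ge |R_{w_A}(\mathbf p)|+1\ge \lfloor a\rfloor+2$, hence $k-1>a$. So you can drop the footnote-based case split, and also Step 1, which Step 2 makes redundant.
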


Raising $w$ increases the marginal statewide penalty from adding a seat, so achieving $k$ seats requires a larger
mean preserving change in the profile, and therefore a higher gerrymandering cost.

\begin{corollary}\label{cor:gerrymandering_harder_from_fptp}
Suppose that a party is overrepresented under FPTP, that is, when $w_A = 0$. Within the misrepresentation-minimizing family $\{R_{w_A}\}_{w_A\ge 0}$, increasing the statewide weight makes it strictly more costly to obtain additional seats for Party $A$.
\end{corollary}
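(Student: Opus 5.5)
The corollary is Proposition~\ref{prop:no_plateaus_kappa} specialized to the baseline weight $w_A=0$. The plan is to check that its hypotheses hold there, restrict to the relevant set of weights, and then read off strict monotonicity.

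First, take the baseline weight $0$. Then $R_0=R_F$, so the hypothesis that Party~$A$ is overrepresented under FPTP is exactly the hypothesis $|R_{0}(\mathbf p)|=S_F>a$ of Proposition~\ref{prop:no_plateaus_kappa}. Fix any target $k>S_F$, meaning an additional seat beyond the FPTP outcome. The set of relevant weights is then $\{w\ge 0:\ C(w;\mathbf p,k)<\infty\}$.

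Next, invoke Proposition~\ref{prop:no_plateaus_kappa} with $w_A=0$. It gives, for any $0\le w<w'$ in this set, $C(w;\mathbf p,k)<C(w';\mathbf p,k)$. This is precisely the claim that moving along the family $\{R_{w}\}$ away from FPTP, i.e.\ raising the statewide weight, strictly raises the minimal cost of obtaining $k$ seats.

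The only delicate point is what happens at weights where $C=\infty$. Here I would argue that the feasible set is downward closed in $w$: if some $\mathbf r\in\mathcal P(\bar{\mathbf p})$ yields at least $k$ seats under $R_{w'}$, it also does so under $R_w$ for $w<w'$. The reason is that for $k>a$, Corollary~\ref{cor:WeightIntervals_linear} and the cutoff $t_+(w)=(1+w)/2$ show that the optimal seat total above proportionality is weakly decreasing in $w$. Consequently, finite costs occur on an initial interval of weights, and beyond it the cost is $+\infty$. This is consistent with "strictly more costly", with the usual convention $\infty>$ any finite number. Checking this downward closedness, together with tie-breaking at switch weights, is the main obstacle, though a mild one. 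I would handle ties by noting that the fixed deterministic selection does not affect the infimum, since the cost $c$ is continuous and nearby profiles break the ties.
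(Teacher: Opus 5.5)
Your proposal is correct and follows the paper's route: the corollary is Proposition~\ref{prop:no_plateaus_kappa} at baseline weight $w_A=0$, where $R_0=R_F$ turns the hypothesis $|R_0(\mathbf p)|>a$ into $S_F>a$. You also observe that the set of weights with finite cost is downward closed, because $k-1>a$ makes $\Delta^+_{\mathbf r}(k-1;w)=1-2r_{(k)}+w$ strictly increasing in $w$; this is a correct refinement that the paper leaves implicit.
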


Most proposals to combat gerrymandering focus on district boundaries---how maps are drawn and how they may be redrawn. Our results point to a complementary lever: the \emph{election rule} itself. Proposition~\ref{prop:no_plateaus_kappa} and Corollary \ref{cor:gerrymandering_harder_from_fptp} show that, holding statewide support fixed, increasing the weight on statewide misrepresentation makes it strictly more costly to engineer additional seats for an already overrepresented party. Thus one can make gerrymandering harder without abandoning district-based representation: rather than relying exclusively on map-drawing constraints, one can reduce vulnerability to gerrymandering by moving from FPTP toward another member of the misrepresentation-minimizing family.

\section{An Empirical Analysis of Misrepresentation}\label{sec:empirics}

In this section, we study the current first-past-the-post rule used in U.S. House of Representatives elections and its implications for the tradeoff between district level and statewide representation.
 For each state and election year, we observe a district vote
share profile and ask two related questions.
First, how much weight on statewide proportionality is consistent with the first-past-the-post rule being misrepresentation minimizing for the observed profile? Second, how large must the
weight be before proportional representation becomes optimal? These implied weights provide a
profile based summary of how the district vote distribution mediates the tradeoff between
district level and statewide representation.

\paragraph{Data and construction of district vote shares.} We use United States House general election results from 1990 to 2024. The data are provided by
the \cite{DVN/IG0UN2_2017} at the candidate district level and include candidate
names, party labels, and vote totals. We treat Republicans as Party~$A$ and Democrats as Party~$B$,
and construct district level two party vote shares $p_d$ for each state year. To align the data with the two party framework, we apply a set of preprocessing and validation rules. 

Briefly, we (i) reconcile alternative party endorsements with major party labels, (ii) remove blank and void votes, (iii) exclude state years where non major party candidates account for a large fraction of races, (iv) address uncontested and non-major party dominated races using a presidential
baseline measured on the same district map, and (v) restrict attention to state years with at least
eight congressional districts. Full details of this procedure are in Appendix~\ref{app:empirics_data_appendix}.

\paragraph{Cross state patterns in recent elections.} 

Figure~\ref{fig:weights} plots the U.S.\ House general-election sample (2020--2024) for states in which the same party is overrepresented relative to statewide vote shares in each of these three cycles.\footnote{We focus on states where a single party is consistently overrepresented relative to statewide vote share  in all three cycles. This is because when overrepresentation switches across cycles, it does not reflect a systematic pattern.} We report the cutoff weights from Corollary~\ref{cor:WeightIntervals_linear} that govern how the misrepresentation-minimizing rule trades off district and statewide objectives. Circles show $w_{S_F+1}$, the smallest $w_A$ at which the first-past-the-post total $S_F$ ceases to be optimal, and squares show $w_{PR}$, the smallest $w_A$ at which the proportional seat total $S_{PR}$ becomes optimal. Red circles indicate Republican overrepresentation and blue circles indicate Democratic overrepresentation. A lower cutoff means that only a narrow range of weights is consistent with first-past-the-post being optimal, whereas a higher cutoff means that first-past-the-post remains optimal even under a larger weight on statewide representation.

\begin{figure}[ht!]
    \centering
    \includegraphics[width=1\linewidth]{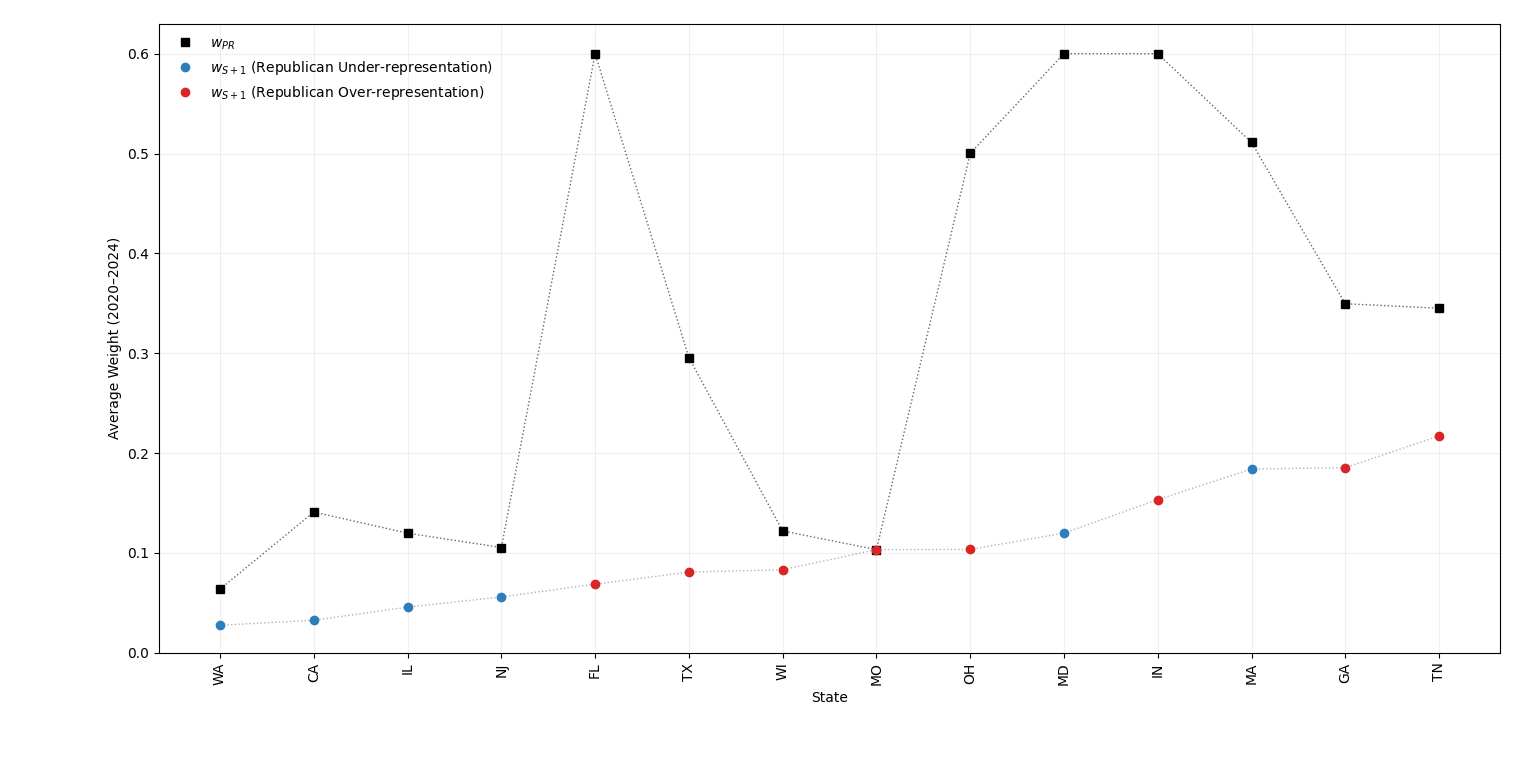}
    \caption{Aggregate weights needed to move away from FPTP and make PR optimal are averaged over the last three election cycles. Any weight displayed as $0.6$ is $\geq 0.6$ and capped at $0.6$ to aid in display of the much lower weights. Weights for each of these years are provided in Appendix~\ref{app:empirics_data_appendix}.}
    \label{fig:weights}
\end{figure} 

States are ordered by $w_{{S_F}+1}$. In leftmost states such as WA, CA, and IL, this threshold is low, and even a small weight on proportionality makes FPTP no longer optimal. In rightmost states such as MA, GA, and TN, the threshold is high and FPTP remains optimal even when proportionality is weighted heavily.

In particular, Texas and California both lie on the low end of $w_{S_F+1}$ spectrum, but California's threshold is roughly one-third of Texas'. This comparison is interesting in light of the renewed public attention to the interaction between maps and election rules brought by recent redistricting attempts in these states.\footnote{In 2025, Texas adopted a new congressional map widely described as designed to create several additional Republican-leaning seats; a three-judge federal panel initially enjoined its use for 2026, but the Supreme Court stayed that injunction, leaving the new map eligible for the 2026 elections while litigation proceeds \citep{kruzel2025scotusTexas}. California responded with Proposition~50, explicitly framed as a counter to Texas’s mid-cycle redistricting, and the Supreme Court declined to block its use for 2026 \citep{chung2026scotusCalifornia}.} Our analysis shows that California’s $w_{S_F+1}$ is roughly one-third of Texas’. As a result, for the observed district vote distribution, only a modest emphasis on statewide proportionality is needed before the misrepresentation-minimizing rule departs from the FPTP seat total, indicating that district-level majorities under the current map translate relatively poorly into statewide representation. Equivalently, the FPTP outcome is less robust to proportionality concerns in California than in Texas. This makes the Texas--California contrast a natural object to track going forward, as recent redistricting efforts and subsequent election cycles shift the underlying district vote profiles and therefore the implied cutoffs.

\paragraph{Intertemporal  Patterns  of FPTP Seat Totals.}

Figure~\ref{fig:avg_weights} plots the average of the first three seats switching weights over time.\footnote{Concretely, the series is $\tfrac{1}{3}\bigl(w_{S_F+1}+w_{S_F+2}+w_{S_F+3}\bigr)$, the average of the switching weights for the first three seat increases beyond the FPTP seat total $S_F$.} We focus on California, Texas, Maryland, Ohio, Pennsylvania, and North Carolina, both because they have been the subject of recurring disputes and map revisions.\footnote{These states are prominent in modern redistricting litigation and public debate. Ohio’s post-2020 cycle featured repeated state-court invalidations of enacted maps \citep{brennanOhioTimeline}. North Carolina and Maryland were central to the Supreme Court’s 2019 partisan-gerrymandering cases, decided together in \textit{Rucho v.\ Common Cause} \citep{scotusRucho2019}. Pennsylvania’s 2011 congressional plan was struck down under the state constitution in \textit{League of Women Voters v.\ Commonwealth} \citep{paLWV2018}.} Our results suggest this focus is reasonable: for most of these states, the implied weights are consistent with partisan overrepresentation that persists across multiple election cycles.

\begin{figure}[ht!]
    \centering
    \includegraphics[width=1\linewidth]{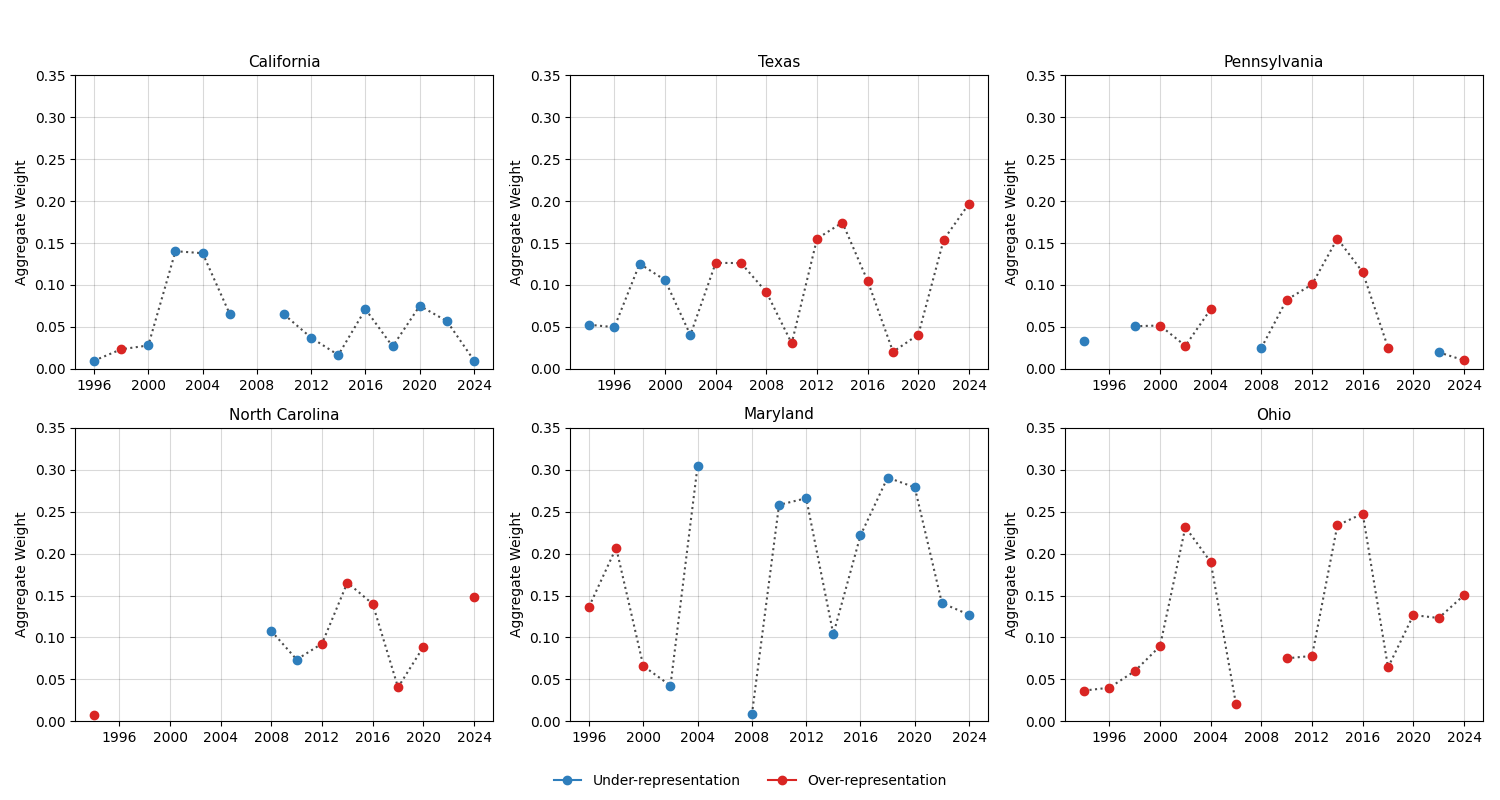}
    \caption{The average of the first three seats switching weights over election cycles is shown for selected states. The missing points indicate years where FPTP returns a seat count that corresponds to proportional representation.}
    \label{fig:avg_weights}
    
\end{figure}

North Carolina provides a useful illustration of how our monitoring statistic moves with map changes. The 2022 congressional election was conducted under a court-ordered plan drawn with the assistance of special masters, which is consistent with the profile in that cycle being closer to proportionality at the seat-total level (hence the missing point in the series). In the surrounding cycles, the implied switching weights and seat gaps point to outcomes that are farther from proportionality and in favor of the Republican party. A natural interpretation is that when map-drawing authority returns to the legislature, the enacted lines can shift the district vote distribution in ways that make first-past-the-post less aligned with proportional representation.\footnote{In particular, a legislature-enacted remedial plan governed 2016 and 2018; that plan was struck down in 2019 and replaced by another legislature-enacted remedial map for 2020; after the 2020 census, the legislature’s enacted plan(s) were rejected in 2022 and replaced by a court-ordered special-master plan for the 2022 election; and the legislature enacted new congressional lines again for 2024 after the state Supreme Court reversed course in 2023 \citep{allaboutredistrictingNC}.}

\section{Conclusion}

We formalize the tradeoff between district representation and statewide proportionality as a measurable tradeoff. Any district-based rule can be interpreted as choosing an implicit substitution rate between representing local majorities and matching the statewide popular vote. Described this way the misrepresentation-minimizing rules are simple. For any substitution rate, the optimal rule is a uniform vote-share cutoff applied across districts. The optimal rule corresponds to first-past-the-post when the weight on statewide representation is low enough. Increasing this relative weight shifts the rule continuously toward  the proportional representation  that aligns the proportional representation that aligns the party’s seat share with statewide vote shares. From observed state-year outcomes for U.S. House of Representative elections, we can recover the range of weights under which the realized allocation is misrepresentation-minimizing, providing a comparable measure of how strongly state elections and redistricting efforts prioritize proportionality relative to district majorities across states and over time.

Several extensions would broaden the scope of the framework and strengthen its connection to existing empirical and institutional approaches to measuring misrepresentation. The same approach can be applied to a broad class of distortion measures, including partisan bias, efficiency-gap, and discrepancies in the implied seat--vote curve. Moving beyond two parties raises which notion of aggregate representation is appropriate, and when an equally tractable cutoff-based characterization survives. Embedding the rule in a model of endogenous districting would unify boundary constraints and rule design as complementary anti-gerrymandering instruments. On the empirical side, the framework can be used for counterfactual evaluation of proposed maps by translating alternative district plans into implied weights under existing rules.

\bibliographystyle{jfe}
\bibliography{bib-matching}
\newpage
\appendix 
\section{Proofs}

\noindent 
\textbf{Proof of Lemma \ref{lem:TopS}.}
Fix $S$ and consider allocations $\mathbf x$ with $\sum_{d=1}^N x_d=S$.
For any such $\mathbf x$, $\Agg(\mathbf x)=|a-S|$ is constant, so it suffices to minimize 
\[
 \Dist(\mathbf x)=\sum_{d:\,x_d=1}(1-p_d)+\sum_{d:\,x_d=0}p_d
 =\sum_{d=1}^N p_d + S - 2\sum_{d:\,x_d=1} p_d.
\]
For any index set $I\subset\{1,\dots,N\}$ with $|I|=S$, $\sum_{i\in I} p_i \ \le\ \sum_{d=1}^S p_{(d)},$ with equality if and only if $I$ selects the $S$ largest vote shares. Therefore, the minimizers of  $\Phi(\cdot;w_A)$ subject to $\sum_d x_d=S$ are exactly the top-$S$ allocations.
\qed

\medskip
\noindent
\textbf{Proof of Lemma \ref{lem:WeightIntervals}.}  Consider the two terms of the forward difference $\Delta^+(S;w_A)$.
The term $1-2p_{(S+1)}$ is weakly increasing and $|a-(S+1)|-|a-S|$ is weakly increasing. Therefore, for each $w_A\ge 0$, the sequence $\Delta^+(\cdot;w_A)$ is weakly increasing in $S$.

Suppose $\Delta^+(S-1;w_A)\le 0 \le \Delta^+(S;w_A)$. By monotonicity of $\Delta^+(\cdot;w_A)$
\[
\Delta^+(L;w_A)\le 0\ \ \text{for all }L\le S-1,
\qquad
\Delta^+(L;w_A)\ge 0\ \ \text{for all }L\ge S.
\]
Hence, 
\[
\Phi(S';w_A)-\Phi(S;w_A)= \begin{cases} \sum_{L=S}^{S'-1}\Delta^+(L;w_A)\ge 0, & \text {if } S'>S,  \\
-\sum_{L=S'}^{S-1}\Delta^+(L;w_A)\ge 0  & \text {if } S'<S .
\end{cases}
\]
Thus, we conclude $S$ minimizes $\Phi(\cdot;w_A)$.

Conversely, suppose $S$ minimizes $\Phi(\cdot;w_A)$ on $\{0,1,\dots,N\}$.
If $S\ge 1$, then $\Phi(S;w_A)\le \Phi(S-1;w_A)$, i.e.\ $\Delta^+(S-1;w_A)\le 0$.
If $S\le N-1$, then $\Phi(S;w_A)\le \Phi(S+1;w_A)$, i.e.\ $\Delta^+(S;w_A)\ge 0$.
With the endpoint conventions, these two inequalities combine to $\Delta^+(S-1;w_A)\le 0 \le \Delta^+(S;w_A)$.
\qed

\medskip
\noindent 
\textbf{Proof of Corollary~\ref{cor:WeightIntervals_linear}.}
Fix $S\in\{1,\dots,N-1\}$ with $|a-S|\ge 1$. By Lemma~\ref{lem:WeightIntervals}, $S$ is optimal if and only if $\Delta^+(S-1;w_A)\le 0\le \Delta^+(S;w_A)$.

If $a\ge S+1$, then
$|a-(S+1)|-|a-S|=-1$ and $|a-S|-|a-(S-1)|=-1$, so
\[
\Delta^+(S;w_A)=1-2p_{(S+1)}-w_A,
\qquad
\Delta^+(S-1;w_A)=1-2p_{(S)}-w_A.
\]
Thus $\Delta^+(S-1;w_A)\le 0\le \Delta^+(S;w_A)$ is equivalent to
$1-2p_{(S)}\le w_A\le 1-2p_{(S+1)}$, yielding the first interval.

If $a\le S-1$, then
$|a-(S+1)|-|a-S|=1$ and $|a-S|-|a-(S-1)|=1$, so
\[
\Delta^+(S;w_A)=1-2p_{(S+1)}+w_A,
\qquad
\Delta^+(S-1;w_A)=1-2p_{(S)}+w_A.
\]
Thus $\Delta^+(S-1;w_A)\le 0\le \Delta^+(S;w_A)$ is equivalent to
$2p_{(S+1)}-1\le w_A\le 2p_{(S)}-1$, yielding the second interval.

Finally suppose $|a-S|<1$, i.e.\ $S-1<a<S+1$. In this near-proportional region, applying
Lemma~\ref{lem:WeightIntervals} to the corresponding values of
$|a-(S+1)|-|a-S|$ and $|a-S|-|a-(S-1)|$ yields the two linear inequalities
for $\mathcal W(S)$ stated in Footnote~\ref{fn:near}. The two subcases
$S-1<a\le S$ and $S\le a<S+1$ are handled analogously.
\qed

\medskip
\noindent
\textbf{Proof of Theorem~\ref{thm:cutoff_near_prop}.}
Fix $\mathbf p$ and $w_A$ and suppose $S^*$ is a minimizer of $\Phi(S;w_A).$  Assume $S_F\le S_{PR}$, 

\begin{lemma}\label{lem:s*_interior}
    $S_F \leq S^* \leq S_{PR}$.
\end{lemma}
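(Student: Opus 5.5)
We argue through the forward differences $\Delta^+(S;w_A)$ and the single-crossing characterization of Lemma~\ref{lem:WeightIntervals}. Since $\Phi(\cdot;w_A)$ is unimodal (its forward differences are weakly increasing in $S$), it suffices to show two things. First, $\Delta^+(S;w_A)\le 0$ for every $S<S_F$, so moving upward toward $S_F$ never hurts. Second, $\Delta^+(S;w_A)\ge 0$ for every $S\ge S_{PR}$, so moving above $S_{PR}$ never helps. With these two facts, any minimizer $S^*$ with $S^*<S_F$ can be replaced by $S_F$ without increasing $\Phi$. Unimodality then gives a minimizer in $[S_F,S_{PR}]$. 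To obtain the statement for \emph{every} minimizer, we need the slightly sharper strict inequalities described next.

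\textbf{Lower bound.} Take $S<S_F$. Then $p_{(S+1)}>\tfrac12$ by the definition of $S_F$ and the no-tie assumption, so the district term satisfies $1-2p_{(S+1)}<0$. For the statewide term, $|a-(S+1)|-|a-S|\le 0$ whenever $S+1\le a$. This holds here because $S+1\le S_F\le a$: each of the $S_F$ districts with $p_d>\tfrac12$ contributes more than $\tfrac12$, but that alone only gives $a>S_F/2$, which is not enough. I would therefore handle the remaining region directly. If $S_F$ happens to exceed $a$, the increment can be $+1$ or fractional, and I would use the maintained assumption $S_F\le S_{PR}$ to control it. For $S+1\le \lfloor a\rfloor$ the increment is $-1$. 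The delicate case is $S=\lfloor a\rfloor<S_F\le S_{PR}=\lceil a\rceil$. There the increment $1-2(a-\lfloor a\rfloor)$ is nonpositive exactly because rounding goes up, i.e.\ $a-\lfloor a\rfloor\ge\tfrac12$. In all cases $\Delta^+(S;w_A)<0$ strictly, since the district term is strictly negative. Hence $\Phi(S)>\Phi(S+1)$, and no $S<S_F$ is a minimizer.

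\textbf{Upper bound.} Take $S\ge S_{PR}$. Since $S_{PR}\ge S_F$, we get $S+1>S_F$, so $p_{(S+1)}<\tfrac12$ and the district term $1-2p_{(S+1)}>0$. For the statewide term, if $S\ge a$ the increment is $+1$. Otherwise $S=S_{PR}=\lfloor a\rfloor$, which by the rounding rule means $a-\lfloor a\rfloor<\tfrac12$, or $=\tfrac12$ with the $\max$ tie-break choosing $\lceil a\rceil$. In that case the increment $1-2(a-\lfloor a\rfloor)$ is $\ge 0$. Thus $\Delta^+(S;w_A)>0$, and no $S>S_{PR}$ is a minimizer.

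\textbf{Main obstacle.} The main difficulty is the boundary bookkeeping around $\lfloor a\rfloor$ and $\lceil a\rceil$. This means checking the sign of the fractional increment against the tie-breaking convention in $S_{PR}$, including the case $a-\lfloor a\rfloor=\tfrac12$, and treating the edge cases $S_F=0$ or $S_{PR}=N$ using the conventions $\Delta^+(-1)=-\infty$ and $\Delta^+(N)=+\infty$. I expect the $a-\lfloor a\rfloor=\tfrac12$ tie case to require care: there the statewide increment is $0$, and strictness must come entirely from the district term.
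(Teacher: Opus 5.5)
Your proposal is correct and follows the paper's proof. Both arguments show that $\Delta^+(S;w_A)<0$ for $S<S_F$ and $\Delta^+(S;w_A)>0$ for $S\ge S_{PR}$, because the district term $1-2p_{(S+1)}$ has the strict sign and the statewide increment is respectively $\le 0$ and $\ge 0$; no minimizer can then lie outside $[S_F,S_{PR}]$. Your explicit check of the case $S=\lfloor a\rfloor$ against the $\max$ tie-break in $S_{PR}$ spells out what the paper compresses into ``$S<S_{PR}$, hence $|a-(S+1)|-|a-S|\le 0$''. Under $S_F\le S_{PR}$ that increment is never $+1$, so your aside suggesting it could be is unnecessary but harmless.
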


\begin{proof}
    By Lemma~\ref{lem:WeightIntervals}, it suffices to inspect the signs of the forward difference
$\Delta^{+}(S;w_A)=\Phi(S+1;w_A)-\Phi(S;w_A)$.
For $S<S_F$, we have $p_{(S+1)}>\tfrac12$, so $1-2p_{(S+1)}<0$. Moreover, since
$S<S_F\le S_{PR}\in\{\lfloor a\rfloor,\lceil a\rceil\}$, we also have $S< S_{PR}$ and hence
$|a-(S+1)|-|a-S|\le 0$. Therefore $\Delta^{+}(S;w_A)<0$ for all $w_A\ge 0$, so no minimizer
can satisfy $S^*<S_F$.

Similarly, for $S\ge S_{PR}$ we have $p_{(S+1)}<\tfrac12$, so $1-2p_{(S+1)}>0$, and since
$S\ge S_{PR}$ we have $|a-(S+1)|-|a-S|\ge 0$. Hence $\Delta^{+}(S;w_A)>0$ for all $w_A\ge 0$,
so no minimizer can satisfy $S^*>S_{PR}$. We conclude that every minimizer satisfies
$S_F\le S^*\le S_{PR}$.
\end{proof}

\smallskip
\noindent
\textbf{Case 1.} Suppose $0\le w_A<w_{\lfloor a\rfloor}$. By definition of $w_{\lfloor a\rfloor}$, the seat total $\lfloor a\rfloor$ is not optimal on this range.
Also, by definition of $w_{PR}$, $S_{PR}$ is not optimal on $[0,w_{PR})$, hence not optimal on $[0,w_{\lfloor a\rfloor})\subseteq[0,w_{PR})$.
Thus $S^*\neq S_{PR}$ and $S^*\neq \lfloor a\rfloor$.

Since $S^*\le S_{PR}$ and $S_{PR}\in\{\lfloor a\rfloor,\lceil a\rceil\}$, we have
$S^*\le \lfloor a\rfloor-1$, hence $a\ge \lfloor a\rfloor\ge S^*+1$.
Therefore $|a-S^*|\ge 1$ and Corollary~\ref{cor:WeightIntervals_linear} applies in the underrepresentation case, giving $w_A\in\mathcal W(S^*)$ if and only if $p_{(S^*+1)}\le \frac{1-w_A}{2}\le p_{(S^*)}.$ Hence the linear cutoff $t_-(w_A):=(1-w_A)/2$ implements the optimal top-$S^*$ allocation throughout this region.

\smallskip
\noindent
\textbf{Case 2.} Suppose $w_{\lfloor a\rfloor}\le w_A<w_{PR}$. On this range, $S_{PR}$ is still not optimal by definition of $w_{PR}$. Thus $S^*\neq S_{PR}$.
Since $S^*\le S_{PR}\in\{\lfloor a\rfloor,\lceil a\rceil\}$ and $\lfloor a\rfloor$ becomes optimal at $w_{\lfloor a\rfloor}$,
we have $S^*=\lfloor a\rfloor$ throughout this region.

Because $\lfloor a\rfloor\le a<\lfloor a\rfloor+1$, the relevant near-proportionality description of $\mathcal W(\lfloor a\rfloor)$
is the second case in Footnote~\ref{fn:near}:
\[
\mathcal W(\lfloor a\rfloor)
=\Bigl\{w\ge 0:\ 
w \ge 1-2p_{(\lfloor a\rfloor)}
\ \ \text{and}\ \
(2\lfloor a\rfloor+1-2a)\,w \ge 2p_{(\lfloor a\rfloor+1)}-1
\Bigr\}.
\]
The first inequality is $w\ge w_{\lfloor a\rfloor}$, and the second inequality is $w\le w_{\lceil a\rceil}=w_{PR}$.
Hence $[w_{\lfloor a\rfloor},\,w_{PR})\subseteq \mathcal W(\lfloor a\rfloor)$, so we may implement an optimal allocation on this region
by fixing the cutoff at $t_-(w_{\lfloor a\rfloor})$.

\smallskip
\noindent
\textbf{Combine Cases 1 \& 2.}
Since $t_-(\cdot)$ is decreasing, the cutoff rule that follows $t_-(w_A)$ until it hits $t_-(w_{\lfloor a\rfloor})$ and then stays pinned is
\[
t(w_A)=\max\!\left\{t_-(w_A),\,t_-(w_{\lfloor a\rfloor})\right\}
=\max\!\left\{\frac{1-w_A}{2},\,\frac{1-w_{\lfloor a\rfloor}}{2}\right\},
\qquad \text{for } 0\le w_A<w_{PR}.
\]

\smallskip
\noindent
\textbf{Case 3.} Suppose $w_A\ge w_{PR}$. By definition of $w_{PR}$, the proportional seat total $S_{PR}$ is optimal for all $w_A\ge w_{PR}$.
The proportional cutoff $t_{PR}(\mathbf p)$ implements the top-$S_{PR}$ allocation by construction, hence is optimal on this region.
\qed

\medskip
\noindent 
\textbf{Proof of Corollary~\ref{thm:SeatRuleCDF}.}
Assume $S_F<S_{PR}$. Let $t(w_A)$ be the optimal cutoff from Theorem~\ref{thm:cutoff_near_prop}, and define the induced seat total $S^*(w_A):=\bigl|\{d:\ p_d \geq t(w_A)\}\bigr|$. For each $S\in\{S_F,\dots,S_{PR}-1\}$ define $w_{S_F}:=0$ and  $w_{S+1}:=\inf\{\,w_A\in[0,1]:\ t(w_A)\le p_{(S+1)}\,\}.$ $w_{PR}$ is defined in the text. Since $t(\cdot)$ is weakly decreasing, we have $0=w_{S_F}\le w_{S_F+1}\le\cdots\le w_{S_{PR}}.$

Fix $S\in\{S_F,\dots,S_{PR}-1\}$ and take $w_A\in (w_S,w_{S+1})$.
By definition of $w_{S+1}$ and monotonicity of $t$, we have $t(w_A)>p_{(S+1)}$.
Also, since $w_A>w_S$ and $t$ is weakly decreasing, we have $t(w_A)\le p_{(S)}$.
Hence $p_{(S+1)}<t(w_A)\le p_{(S)},$ so exactly the $S$ largest districts exceed the cutoff and therefore $S^*(w_A)=S$. 

If $w_A>w_{S_{PR}}=w_{PR}$, Theorem~\ref{thm:cutoff_near_prop} gives $t(w_A)=t_{PR}(\mathbf p)$, which implements $S_{PR}$ seats.
Thus $S^*(w_A)=S_{PR}$.

Fix $S\in\{S_F,\dots,S_{PR}-1\}$ and consider  \(  \Delta^+(S;w_A)\). For all $w<w_{S+1}$ sufficiently close to $w_{S+1}$ we have $S^*(w)=S$, hence  $\Delta^+(S;w_A)\ge 0$.
Similarly, for all $w>w_{S+1}$ sufficiently close to $w_{S+1}$ we have $S^*(w)=S+1$, hence $\Delta^+(S;w_A)\ge 0$.
Since $\Delta^+(S;w_A)$ is continuous in $w_A$, it follows that $\Delta^+(S;w_{(S+1)})=0$, so both $S$ and $S+1$ minimize at $w_A=w_{S+1}$.
\qed

\medskip
\noindent
\textbf{Proof of Proposition \ref{prop:frontier_scalarization}.}  Fix a profile $\mathbf p$ and let $S(\mathbf x):=\sum_{d=1}^N x_d$.
By Lemma~\ref{lem:TopS}, for each $S\in\{0,1,\ldots,N\}$ the top-$S$ allocation $\mathbf x^{\,S}$
minimizes $\Dist(\mathbf x)$ among allocations with seat total $S$, and $\Agg(\mathbf x)=|a-S|$ is constant
within that class. Hence any allocation $\mathbf y$ is (weakly) dominated in $(\Dist,\Agg)$-space by
$\mathbf x^{\,S(\mathbf y)}$. In particular, every Pareto-efficient allocation must be a top-$S$ allocation.

\smallskip
\noindent
\textbf{If.} Suppose $S^\star\in\arg\min_{s\in\{0,\ldots,N\}}\Phi(s;w_A)$ for some $w_A>0$ and let $\mathbf x^{\,S^\star}$ be the top-$S^\star$ allocation.
If there were an allocation $\mathbf x'$ with $\Dist(\mathbf x')\le \Dist(\mathbf x^{\,S^\star})$ and
$\Agg(\mathbf x')\le \Agg(\mathbf x^{\,S^\star})$ and at least one strict inequality, then
$\Phi(\mathbf x';w_A)<\Phi(\mathbf x^{\,S^\star};w_A)$, contradicting optimality. Thus $\mathbf x^{\,S^\star}$ is Pareto efficient.

\smallskip
\noindent
\textbf{Only if.}
Suppose $x^{S^*}$ is Pareto efficient. We claim there exists $w>0$ such that
\[
\Delta^+(S^*-1;w)\le 0 \le \Delta^+(S^*;w),
\]
which by Lemma~\ref{lem:WeightIntervals} implies $S^*\in\arg\min_S \Phi(S;w)$.

\begin{lemma}
    There exist $w_1>0$ with $\Delta^+(S^*-1;w_1)\le 0$ and $w_2>0$ with $\Delta^+(S^*;w_2)\ge 0$.
\end{lemma}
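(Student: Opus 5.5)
The plan is to work directly with the forward-difference representation
\[
\Delta^+(S;w)=d_S+w\,g_S,\qquad d_S:=1-2p_{(S+1)}=\Dist(S+1)-\Dist(S),\quad g_S:=|a-(S+1)|-|a-S|,
\]
which is affine in $w$. Pareto efficiency of $\mathbf x^{S^*}$ will be used only against the two neighbouring top-$S$ allocations $\mathbf x^{S^*-1}$ and $\mathbf x^{S^*+1}$. These are feasible allocations, so efficiency forbids either of them from weakly dominating $\mathbf x^{S^*}$ in $(\Dist,\Agg)$-space with one strict inequality. The two claims are symmetric, so I treat the first in detail and then mirror it.

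\textbf{Existence of $w_1$.} If $S^*=0$, the endpoint convention $\Delta^+(-1;w)=-\infty$ makes any $w_1>0$ work. Otherwise I set $d:=d_{S^*-1}=\Dist(S^*)-\Dist(S^*-1)$ and $g:=g_{S^*-1}=\Agg(\mathbf x^{S^*})-\Agg(\mathbf x^{S^*-1})$, and need some $w>0$ with $d+wg\le 0$. The cases are as follows.
\begin{enumerate}[label=(\roman*)]
\item If $d<0$, then for $g\le 0$ any $w$ works, and for $g>0$ any $w\in(0,-d/g]$ works.
\item If $d=0$ and $g\le 0$, any $w>0$ works.
\item If $d=0$ and $g>0$, then $\mathbf x^{S^*-1}$ has the same $\Dist$ and strictly smaller $\Agg$, contradicting efficiency.
\item If $d>0$ and $g\ge 0$, then $\mathbf x^{S^*-1}$ has strictly smaller $\Dist$ and weakly smaller $\Agg$, again contradicting efficiency.
\item If $d>0$ and $g<0$, then $w_1:=d/(-g)>0$ works.
\end{enumerate}

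\textbf{Existence of $w_2$.} The argument is the mirror image. If $S^*=N$, the convention $\Delta^+(N;w)=+\infty$ settles it. Otherwise let $d':=\Dist(S^*+1)-\Dist(S^*)$ and $g':=\Agg(\mathbf x^{S^*+1})-\Agg(\mathbf x^{S^*})$; I need $d'+wg'\ge 0$ for some $w>0$.
\begin{enumerate}[label=(\roman*)]
\item If $d'>0$, a small $w$ (or any $w$ when $g'\ge 0$) works.
\item If $d'=0$ and $g'\ge 0$, any $w$ works.
\item If $d'=0$ and $g'<0$, or if $d'<0$ and $g'\le 0$, then $\mathbf x^{S^*+1}$ dominates $\mathbf x^{S^*}$, which is impossible.
\item If $d'<0$ and $g'>0$, then $w_2:=-d'/g'$ works.
\end{enumerate}

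The only delicate point is the boundary case $d=0$ (respectively $d'=0$). Here a tie in $\Dist$ must be combined with the strict part of the Pareto definition through $\Agg$, and $w$ must be kept strictly positive. I will check carefully that the case split above is exhaustive, and that each offending sign pattern yields a genuine strict domination rather than a mere tie. Note that this lemma alone only produces two possibly different weights. The subsequent step of the surrounding proof, which combines $w_1$ and $w_2$ into a single $w$, will need the monotonicity of $\Delta^+$ in $S$ from Lemma~\ref{lem:WeightIntervals} together with the affine dependence on $w$, but that step is not part of the present statement.
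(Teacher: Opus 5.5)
Your proposal is correct and is essentially the paper's argument: both test Pareto efficiency of $\mathbf x^{S^*}$ only against the neighbouring top-$(S^*-1)$ and top-$(S^*+1)$ allocations, and both use that $\Delta^+$ is affine in $w$. The paper argues by contradiction (if $\Delta^+(S^*-1;w)>0$ for every $w>0$, the neighbour dominates). Your sign split on $(d,g)$ makes explicit the limiting step ($w\to 0$ and $w\to\infty$) and the endpoint cases $S^*\in\{0,N\}$, both of which the paper leaves implicit.
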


\begin{proof}
First, Pareto efficiency rules out domination by the adjacent top-$(S^*-1)$ allocation.
If $\Delta^+(S^*-1;w)>0$ for all $w>0$, then in particular
\[
\Phi(S^*;w)>\Phi(S^*-1;w)\qquad\text{for all }w>0,
\]
which is only possible if $\Dist(S^*)\ge \Dist(S^*-1)$ and $\Agg(S^*)\ge \Agg(S^*-1)$
with at least one strict inequality, contradicting Pareto efficiency. Hence there exists
$w_1>0$ with $\Delta^+(S^*-1;w_1)\le 0$.

Similarly, Pareto efficiency rules out domination by the adjacent top-$(S^*+1)$ allocation.
If $\Delta^+(S^*;w)<0$ for all $w>0$, then $\Phi(S^*+1;w)<\Phi(S^*;w)$ for all $w>0$,
which would imply $\Dist(S^*+1)\le \Dist(S^*)$ and $\Agg(S^*+1)\le \Agg(S^*)$ with at least
one strict inequality, again a contradiction. Hence there exists $w_2>0$ with
$\Delta^+(S^*;w_2)\ge 0$.
\end{proof}

If already $\Delta^+(S^*;w_1)\ge 0$, then
$\Delta^+(S^*-1;w_1)\le 0 \le \Delta^+(S^*;w_1)$ and we are done. Otherwise,
$\Delta^+(S^*;w_1)<0\le \Delta^+(S^*;w_2)$. Since $\Delta^+(S^*;\cdot)$ is 
continuous in $w$, there exists $w^*>0$ such that $\Delta^+(S^*;w^*)=0$. By the single-crossing
property, $\Delta^+(S^*-1;w)\le \Delta^+(S^*;w)$ for all $w\ge 0$, so
$\Delta^+(S^*-1;w^*)\le 0 \le \Delta^+(S^*;w^*)$. Lemma~\ref{lem:WeightIntervals} then yields
$S^*\in\arg\min_S \Phi(S;w^*)$ for some $w^*>0$.
\qed

\medskip
\noindent
\textbf{Proof of Lemma~\ref{lem:maj_fixedS} and Proposition~\ref{prop:maj_opt}.}
Fix $w_A\ge 0$ and $S\in\{0,\ldots,N\}$.
Let $\mathbf p,\mathbf q$ have the same mean $\mathbf{\bar{p}} = \mathbf{\bar{q}}$. 
By definition of $\Dist_\mathbf{r}(S),$ $\Dist_{\mathbf p}(S)-\Dist_{\mathbf q}(S)
=-2\Big(\sum_{i=1}^S p_{(i)}-\sum_{i=1}^S q_{(i)}\Big).$

If $\mathbf p$ majorizes $\mathbf q$, then $\sum_{i=1}^S p_{(i)}\ge \sum_{i=1}^S q_{(i)}$ for every $S$. So
$\Dist_{\mathbf p}(S)\le \Dist_{\mathbf q}(S)$. Then  $\Phi_{\mathbf p}(S;w_A)=\Dist_{\mathbf p}(S)+w_A|a-S|
\le
\Dist_{\mathbf q}(S)+w_A|N\mathbf{\bar{q}}-S|
=\Phi_{\mathbf q}(S;w_A)$, which proves Lemma~\ref{lem:maj_fixedS}.

Let $S^*$ denote the minimizer of $\min_S \Phi_{\mathbf q}(S;w_A).$ Then $\min_S \Phi_{\mathbf p}(S;w_A) \leq \Phi_{\mathbf p}(S^*;w_A)  \le \Phi_{\mathbf q}(S^*;w_A) = \min_S \Phi_{\mathbf q}(S;w_A) $ which proves Proposition~\ref{prop:maj_opt}. \qed

\medskip
\noindent
\textbf{Proof of Theorem \ref{thm:maj_char_weight_family_lambda}.}
Fix $w_A>0$ and $\lambda\ge 0$. Let $S_\lambda(\mathbf r)$ denote the number of Party A seats under the misrepresentation minimizing rule with weight $\lambda$ and vote profile $\mathbf{r}$.

\smallskip
\noindent
\textbf{If.}
Assume $\mathbf p$ majorizes $\mathbf q$.  If $\lambda=w_A$, then for every profile $\mathbf r$ we have
$S_\lambda(\mathbf r)\in\arg\min_S \Phi_{\mathbf r}(S;w_A)$, so \eqref{eq:MM} follows from
Proposition~\ref{prop:maj_opt}.

\smallskip
\noindent
\textbf{Only if.}
Assume \eqref{eq:MM} holds for all majorization pairs $(\mathbf p,\mathbf q)$. Apply the only if direction of Proposition \ref{prop:gp-infty} at weight $\lambda$. Then there exists profiles $\mathbf p,\mathbf q$ with $\mathbf{\bar{p}}=\mathbf{\bar{q}}=:m$ (such that $m$ is not half integer) and $S_\lambda(\mathbf p)\neq S_\lambda(\mathbf q)$.\footnote{See Equation \ref{eq:halfinteger} for an explicit construction of such $\mathbf{p}$ and $\mathbf{q}$.}
Let $\mathbf m:=(m,\ldots,m)$. Then every profile $\mathbf r$ with $\bar r=m$ majorizes $\mathbf m$.
If both $S_\lambda(\mathbf p)=S_\lambda(\mathbf m)$ and $S_\lambda(\mathbf q)=S_\lambda(\mathbf m)$ held, then
$S_\lambda(\mathbf p)=S_\lambda(\mathbf q)$, a contradiction. Hence, wlog $S_\lambda(\mathbf p)\neq S_\lambda(\mathbf m)$ and
set $\mathbf q:=\mathbf m$. Thus $\mathbf{\bar{p}}=\mathbf{\bar{q}}=m$, $\mathbf p$ majorizes $\mathbf q$, and
$S_\lambda(\mathbf p)\neq S_\lambda(\mathbf q)$.

Define $\mathbf r(u):=(1-u)\mathbf q+u\mathbf p$ for $u\in[0,1]$. Then $\bar r(u)=m$ for all $u$.
Moreover, for $u'>u$ we can write
\(
  \mathbf r(u)=\Bigl(1-\frac{u}{u'}\Bigr)\mathbf q+\frac{u}{u'}\,\mathbf r(u').
\)
Since $\mathbf r(u')$ majorizes $\mathbf q$ and the set of vectors majorized by a fixed profile is convex,
it follows that $\mathbf r(u')$ majorizes $\mathbf r(u)$.

\begin{lemma} \label{lem:indiff}
    There exists $u^* \in [0,1]$ and $S' \neq S_{\lambda(\mathbf{q})}$ such that $    \Phi_{\mathbf r(u^*)}(S';\lambda)=\Phi_{\mathbf r(u^*)}(S_\lambda(\mathbf q);\lambda) $.
\end{lemma}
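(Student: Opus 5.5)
We prove Lemma~\ref{lem:indiff} with an intermediate value argument along the segment $\mathbf r(u)$. Write $S_0:=S_\lambda(\mathbf q)$ and $S_1:=S_\lambda(\mathbf p)$, and recall $S_1\neq S_0$ by the construction above. For each fixed $S$, the map $u\mapsto \Phi_{\mathbf r(u)}(S;\lambda)$ is continuous in $u$. The reason is that $\Dist_{\mathbf r}(S)=S+Nm-2\sum_{i\le S} r_{(i)}$, that the sum of the $S$ largest coordinates is a continuous (indeed convex) function of $\mathbf r$, and that $\mathbf r(u)$ is affine in $u$. The aggregate term $\lambda|Nm-S|$ does not depend on $u$ at all, since $\bar r(u)=m$ for every $u$.

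Next define the gap $g(u):=\min_{S\neq S_0}\Phi_{\mathbf r(u)}(S;\lambda)-\Phi_{\mathbf r(u)}(S_0;\lambda)$. It is continuous, being a finite minimum of continuous functions minus a continuous function. At $u=0$ we have $g(0)\ge 0$, because $S_0$ minimizes $\Phi_{\mathbf q}(\cdot;\lambda)$.

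\textbf{Case $g(0)=0$.} Then $u^*=0$ works: some $S'\neq S_0$ ties with $S_0$ at $\mathbf q$.

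\textbf{Case $g(0)>0$.} Here we use the endpoint $u=1$. Since $S_1\neq S_0$ is a minimizer of $\Phi_{\mathbf p}(\cdot;\lambda)$, we have $\Phi_{\mathbf p}(S_1;\lambda)\le\Phi_{\mathbf p}(S_0;\lambda)$, so $g(1)\le 0$. The intermediate value theorem then gives $u^*\in(0,1]$ with $g(u^*)=0$. Taking $S'$ to be an argmin in the definition of $g$ at $u^*$, we get $\Phi_{\mathbf r(u^*)}(S';\lambda)=\Phi_{\mathbf r(u^*)}(S_0;\lambda)$, as required.

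It is worth refining this. Let $u^*$ be the infimum of the zeros of $g$ (a closed set). Then on $[0,u^*)$ the value $S_0$ is the unique optimum, and at $u^*$ it is tied with $S'$. This is the form the subsequent argument presumably needs, namely that at $u^*$ both $S_0$ and $S'$ minimize $\Phi_{\mathbf r(u^*)}(\cdot;\lambda)$. For that, I would check that at the first zero, $S_0$ is still a global minimizer. This holds because $g(u^*)=0$ means no $S\neq S_0$ is strictly better than $S_0$.

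The main obstacle is minor: justifying continuity of the order-statistic partial sums $\sum_{i\le S} r_{(i)}$ as the ranking of districts changes along the path. I would handle it by writing $\sum_{i\le S} r_{(i)}=\max_{|I|=S}\sum_{i\in I} r_i$, a maximum of finitely many linear functions, which is therefore continuous. No property beyond Lemma~\ref{lem:TopS} and the optimality of $S_\lambda$ at the two endpoints is needed.
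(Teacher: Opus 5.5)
Your proof is correct, but it takes a somewhat different route from the paper's. The paper sets $u^*:=\inf\{u\in[0,1]:S_\lambda(\mathbf r(u))\neq S_\lambda(\mathbf q)\}$ and gets optimality of $S_\lambda(\mathbf q)$ at $\mathbf r(u^*)$ by letting $u\uparrow u^*$. It then picks $u_n\downarrow u^*$ at which the rule selects something else, uses finiteness of $\{0,\dots,N\}$ to extract a constant $S'$ along a subsequence, and passes to the limit to get the reverse inequality. You instead apply the intermediate value theorem to the continuous gap $g(u)=\min_{S\neq S_0}\Phi_{\mathbf r(u)}(S;\lambda)-\Phi_{\mathbf r(u)}(S_0;\lambda)$, using $g(0)\ge 0$ and $g(1)\le 0$.

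Your version is tidier in three ways. It never refers to how $R_\lambda$ breaks ties in the interior. It justifies continuity of $\sum_{i\le S}r_{(i)}$ as a maximum of finitely many linear functions, which the paper only asserts. And your explicit case $g(0)=0$ covers a boundary situation the paper handles loosely: it asserts $u^*>0$, which does not follow from $S_\lambda(\mathbf p)\neq S_\lambda(\mathbf q)$, and then dismisses non-interior $u^*$ as immediate.

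What the paper's construction buys is that its $u^*$ is an actual switching point of the rule. For $u<u^*$ the rule selects $S_\lambda(\mathbf q)$, and $S'$ is a seat total that $R_\lambda$ really selects at points $u_n\downarrow u^*$. The next step in the proof of Theorem~\ref{thm:maj_char_weight_family_lambda} uses exactly this, when it picks $u_-<u_+$ near $u^*$ with $S_\lambda(\mathbf r(u_-))=S_\lambda(\mathbf q)$ and $S_\lambda(\mathbf r(u_+))=S'$. A zero of $g$ does not deliver this by itself. When $g(0)=0$, the tie at $\mathbf q$ need not be a switch, because the gap can turn positive right after $u=0$. Even at the first zero when $g(0)>0$, an arbitrary argmin $S'$ in the definition of $g$ need not be the seat total the rule selects just to the right of $u^*$.

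So your closing remark, that the later argument only needs both $S_0$ and $S'$ to minimize at $u^*$, understates what is used there. For the lemma as stated, though, your argument is complete.
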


\begin{proof}
Define $ u^*:=\inf\{u\in[0,1]:\,S_\lambda(\mathbf r(u))\neq S_\lambda(\mathbf q)\}\in[0,1],$ where $u^*>0$ because $S_\lambda(\mathbf p)\neq S_\lambda(\mathbf q)$. If $u^*$ is not interior then the conclusion is immediate. 

Otherwise, for each fixed seat total $S$, the map $u\mapsto \Phi_{\mathbf r(u)}(S;\lambda)$ is continuous.
For all $u<u^*$ we have $S_\lambda(\mathbf r(u))=S_\lambda(\mathbf q)$, hence $  \Phi_{\mathbf r(u)}(S_\lambda(\mathbf q);\lambda)\le \Phi_{\mathbf r(u)}(S;\lambda)$ for every $S$. Letting $u\uparrow u^*$ yields $S_\lambda(\mathbf q)\in\arg\min_S \Phi_{\mathbf r(u^*)}(S;\lambda)$.

By definition of $u^*$, pick a sequence $u_n\downarrow u^*$ with $u_n>u^*$ and
$S_\lambda(\mathbf r(u_n))\neq S_\lambda(\mathbf q)$ for all $n$.
Since $\{0,\ldots,N\}$ is finite, there exist $S'\neq S_\lambda(\mathbf q)$ and a subsequence (denoted $n'$) such that
$S_\lambda(\mathbf r(u_{n'}))=S'$ for all $n'$. Then for each $n'$, optimality of $S'$ at $\mathbf r(u_{n'})$ implies $\Phi_{\mathbf r(u_{n'})}(S';\lambda)\le \Phi_{\mathbf r(u_{n'})}(S_\lambda(\mathbf q);\lambda).$ Letting $n'\to\infty$ and using continuity gives
$\Phi_{\mathbf r(u^*)}(S';\lambda)\le \Phi_{\mathbf r(u^*)}(S_\lambda(\mathbf q);\lambda)$.

Combined with optimality of $S_\lambda(\mathbf q)$ at $u^*$, we obtain that there exist $S'\neq S_\lambda(\mathbf q)$ such that
\begin{equation*} 
    \Phi_{\mathbf r(u^*)}(S';\lambda)=\Phi_{\mathbf r(u^*)}(S_\lambda(\mathbf q);\lambda) 
\end{equation*}
\end{proof}

Note that $a=N\bar r(u^*)=Nm$, and Lemma \ref{lem:indiff} implies
\[
  \Dist_{\mathbf r(u^*)}(S')-\Dist_{\mathbf r(u^*)}(S_\lambda(\mathbf q))
  =-\lambda\big(|a-S'|-|a-S_\lambda(\mathbf q)|\big).
\]
Therefore,
\begin{align*}
  \Phi_{\mathbf r(u^*)}(S';w_A)-\Phi_{\mathbf r(u^*)}(S_\lambda(\mathbf q);w_A)
  &=
  (\Dist_{\mathbf r(u^*)}(S')-\Dist_{\mathbf r(u^*)}(S_\lambda(\mathbf q)))
  +w_A\big(|a-S'|-|a-S_\lambda(\mathbf q)|\big)\\
  &=
  (w_A-\lambda)\big(|a-S'|-|a-S_\lambda(\mathbf q)|\big).
\end{align*}
If $\lambda\neq w_A$, then the right-hand side is nonzero because $S'\neq S_\lambda(\mathbf q)$ and $a$ is not a half-integer,
so $|a-S'|\neq |a-S_\lambda(\mathbf q)|$. Hence $  \Phi_{\mathbf r(u^*)}(S';w_A)\neq \Phi_{\mathbf r(u^*)}(S_\lambda(\mathbf q);w_A).$

Assume wlog $\Phi_{\mathbf r(u^*)}(S';w_A)>\Phi_{\mathbf r(u^*)}(S_\lambda(\mathbf q);w_A)$.
By continuity of $u\mapsto \Phi_{\mathbf r(u)}(S';w_A)-\Phi_{\mathbf r(u)}(S_\lambda(\mathbf q);w_A)$, there exist
$u_-$ and $u_+$ arbitrarily close to $u^*$ such that $u_{-} < u_+$, $S_\lambda(\mathbf r(u_-))=S_\lambda(\mathbf q)$, and 
$S_\lambda(\mathbf r(u_+))=S'$, and $  \Phi_{\mathbf r(u_+)}(S';w_A)\;>\;\Phi_{\mathbf r(u_-)}(S_\lambda(\mathbf q);w_A).$ Since $u_+>u_-$, $\mathbf r(u_+)$ majorizes $\mathbf r(u_-)$.  However, 
\begin{align*}
  \Phi_{\mathbf r(u_+)}\!\big(S_\lambda(\mathbf r(u_+));w_A\big)
  &=\Phi_{\mathbf r(u_+)}(S';w_A)  > 
  \Phi_{\mathbf r(u_-)}(S_\lambda(\mathbf q);w_A)
  =
  \Phi_{\mathbf r(u_-)}\!\big(S_\lambda(\mathbf r(u_-));w_A\big),
\end{align*}
contradicting \eqref{eq:MM}. The opposite inequality case is analogous. Therefore, \eqref{eq:MM} cannot hold when
$\lambda\neq w_A$, and we conclude $\lambda=w_A$.
\qed

\medskip
\noindent
\textbf{Proof of Proposition \ref{prop:fptp-char}.} \ 

\smallskip
\noindent
\textbf{If.} Suppose $\mathbf p'\ge \mathbf p$ componentwise.  Denote first-past-the-post rule with $R_{\mathrm{F}}(\mathbf p)=\{d:\ p_d\geq\tfrac12\}$. Then
$p_d\geq\tfrac12$ implies $p'_d\geq\tfrac12$, so $R_{\mathrm{F}}(\mathbf p)\subseteq
R_{\mathrm{F}}(\mathbf p')$. Thus FPTP satisfies monotonicity. 

\smallskip
\noindent
\textbf{Only If.} Suppose $w_A>0$. choose $0<\epsilon< \min\Big\{\frac{w_A}{8},\ \frac{1}{12}\Big\}$ and consider the two profiles $\mathbf p,\mathbf q$ defined by
\[
  \mathbf p=\Big(\tfrac12+\varepsilon,\ \tfrac12-\varepsilon,\ 0,\dots,0\Big),
  \qquad
  \mathbf q=\Big(\tfrac12+\varepsilon,\ \tfrac12+2\varepsilon,\ 0,\dots,0\Big).
\]
Then $\mathbf q\ge \mathbf p$ componentwise, and only district $2$ is increased.

\smallskip
For profile $\mathbf p$, he ordered shares are $p_{(1)}=\tfrac12+\varepsilon$, $p_{(2)}=\tfrac12-\varepsilon$, and
$a_{\mathbf p}=\sum_d p_d=1$. Then
\[
\Delta^+_{\mathbf p}(0;w_A)=\Phi_{\mathbf p}(1;w_A)-\Phi_{\mathbf p}(0;w_A)
=\bigl(1-2\varepsilon\bigr)-\bigl(1+w_A\bigr)=-2\varepsilon-w_A<0,
\]
and
\[
\Delta^+_{\mathbf p}(1;w_A)=\Phi_{\mathbf p}(2;w_A)-\Phi_{\mathbf p}(1;w_A)
=\bigl(1+w_A\bigr)-\bigl(1-2\varepsilon\bigr)=2\varepsilon+w_A>0.
\]
Hence $\Delta^+_{\mathbf p}(0;w_A)\le 0\le \Delta^+_{\mathbf p}(1;w_A)$, so by Lemma~\ref{lem:WeightIntervals}
the unique minimizer is $S=1$, awarding the unique top-$1$ district (district $1$). Thus $R(\mathbf p)=\{1\}$.

\smallskip
For profile $\mathbf q$, the ordered shares are $q_{(1)}=\tfrac12+2\varepsilon$, $q_{(2)}=\tfrac12+\varepsilon$, and
$a_{\mathbf q}=\sum_d q_d=1+3\varepsilon$. Then
\[
\Delta^+_{\mathbf q}(0;w_A)=\Phi_{\mathbf q}(1;w_A)-\Phi_{\mathbf q}(0;w_A)=-(4\varepsilon+w_A)<0,
\]
and
\[
\Delta^+_{\mathbf q}(1;w_A)=\Phi_{\mathbf q}(2;w_A)-\Phi_{\mathbf q}(1;w_A)
=-2\varepsilon+w_A(1-6\varepsilon) \geq 2 \varepsilon  >0.
\]
Since $\varepsilon\le \tfrac{1}{12}$ implies $1-6\varepsilon\ge \tfrac12$ and $\varepsilon\le \tfrac{w_A}{8}$ implies
$w_A(1-6\varepsilon)\ge \tfrac{w_A}{2}\ge 4\varepsilon$, we have $\Delta^+_{\mathbf q}(1;w_A)\ge 2\varepsilon>0$.
Therefore $\Delta^+_{\mathbf q}(0;w_A)\le 0\le \Delta^+_{\mathbf q}(1;w_A)$, and Lemma~\ref{lem:WeightIntervals}
implies that the unique minimizer is $S=1$, awarding the unique top-$1$ district (district $2$). Thus $R(\mathbf q)=\{2\}$.

Since $\mathbf q\ge \mathbf p$ but $1\in R(\mathbf p)$ and $1\notin R(\mathbf q)$, strong monotonicity fails for every misrepresentation--minimizing rule at weight $w_A>0$.
\qed

\medskip
\noindent
\textbf{Proof of Proposition \ref{prop:gp-infty}.}  \

\smallskip
\noindent
\textbf{If.}  Suppose $w_A=\infty$. Then the seat total is $S_{PR}(\mathbf p)$, which depends only on $\mathbf{\bar{p}}$ by definition. Hence $R_\infty$ is gerrymandering--proof.

\smallskip
\noindent
\textbf{Only If.}  Suppose $w_A$ is finite.  Pick $\delta\in(0,\tfrac14)$ and $\varepsilon>0$ such that $  \varepsilon<\min\Bigl\{\frac{\delta}{1+w_A},\ \frac12-2\delta\Bigr\}.$ Construct the profiles
\begin{equation}\label{eq:halfinteger}
\mathbf p:=\biggl(\underbrace{1-\frac{\delta-\varepsilon}{N-2},\ldots,1-\frac{\delta-\varepsilon}{N-2}}_{N-2},\ \tfrac12+\delta,\ 0\biggr),
\qquad
\mathbf q:=\bigl(\underbrace{1,\ldots,1}_{N-2},\ \tfrac12-\delta,\ \varepsilon+\delta\bigr).
\end{equation}
By construction,  $a=N-2+\tfrac12+\varepsilon$ and $\mathbf{\bar{p}}=\mathbf{\bar{q}}=\frac{a}{N}$. Moreover, $\varepsilon<\delta$ implies $1-\frac{\delta-\varepsilon}{N-2}<1$, and $\delta<\tfrac14$ implies
$1-\frac{\delta-\varepsilon}{N-2}\ge 1-\delta>\tfrac12+\delta$. Finally, $\varepsilon<\tfrac12-2\delta$ implies $\tfrac12-\delta\ge \varepsilon+\delta$. Hence both profiles are nonincreasing.

For $\mathbf p$, elementary algebra yields
\[
\Phi_{\mathbf p}(N-1;w_A)-\Phi_{\mathbf p}(N-2;w_A)
= -2\delta-2w_A\varepsilon<0 < 1+w_A = \Phi_{\mathbf p}(N;w_A)-\Phi_{\mathbf p}(N-1;w_A) ,
\]
so $S_{w_A}(\mathbf p)=N-1$. For $\mathbf q$, the same computation yields
\[
\Phi_{\mathbf q}(N-1;w_A)-\Phi_{\mathbf q}(N-2;w_A) =2\delta-2w_A\varepsilon>0 > - (1+w_A) = \Phi_{\mathbf q}(N-2;w_A)-\Phi_{\mathbf q}(N-3;w_A),
\]
since $\varepsilon<\delta/(1+w_A)\le \delta/w_A$ when $w_A>0$ (and it is trivial when $w_A=0$).
Thus $S_{w_A}(\mathbf q)=N-2$.

We conclude that $|R_{w_A}(\mathbf p)|\neq |R_{w_A}(\mathbf q)|$. Therefore $R_{w_A}$ is not gerrymandering--proof.
\qed

\medskip
\noindent 
\textbf{Proof of Proposition~\ref{prop:no_plateaus_kappa}.} Fix the baseline profile $\mathbf p$.
By assumption $|R_{w_A}(\mathbf p)|>a$ and $k>|R_{w_A}(\mathbf p)|$, so $k-1\ge a$.
In particular, for any $\mathbf r\in\mathcal P(\bar{\mathbf{p}})$, if we increase the seat count from $k-1$ to $k$, the change in the aggregate term satisfies
\[
  |a-k|-|a-(k-1)|=1.
\]

Fix $w\ge w_A$ and $\varepsilon>0$ such that $C(w+\varepsilon;\mathbf p,k)<\infty$.
Let $\mathbf r^{+}\in\mathcal P(\bar{\mathbf{p}})$ attain $C(w+\varepsilon;\mathbf p,k)$.
By definition, $|R_{w+\varepsilon}(\mathbf r^{+})|\ge k$.
Since $\Phi_{\mathbf r^{+}}(\cdot;w+\varepsilon)$ is unimodal in the seat total (Lemma \ref{lem:WeightIntervals}) and ties are broken in favor of Party~$A$,
this implies
\[
  \Phi_{\mathbf r^{+}}(k;w+\varepsilon)\ \le\ \Phi_{\mathbf r^{+}}(k-1;w+\varepsilon).
\]
We claim the inequality must bind.
If instead it were strict, then by continuity there exists $\theta<1$ close to $1$ such that, for
\[
  \mathbf r_\theta:=(1-\theta)\mathbf p+\theta\mathbf r^{+},
\]
we still have $\Phi_{\mathbf r_\theta}(k;w+\varepsilon)<\Phi_{\mathbf r_\theta}(k-1;w+\varepsilon)$ and hence $|R_{w+\varepsilon}(\mathbf r_\theta)|\ge k$.
But then $c(\mathbf p,\mathbf r_\theta)<c(\mathbf p,\mathbf r^{+})$ by strict segment monotonicity, contradicting optimality of $\mathbf r^{+}$.
Therefore,
\[
  \Phi_{\mathbf r^{+}}(k;w+\varepsilon)\ =\ \Phi_{\mathbf r^{+}}(k-1;w+\varepsilon).
\]

Now compare weights $w$ and $w+\varepsilon$. Because every $\mathbf r\in\mathcal P(\bar{\mathbf{p}})$ has mean $\bar{\mathbf{p}}$ and $k-1\ge a$, decreasing the weigh from $w+\epsilon$ to $w$ lowers $\Phi_{\mathbf r}(k;\cdot)-\Phi_{\mathbf r}(k-1;\cdot)$ by exactly $\varepsilon$.
Using the equality above yields
\[
  \Phi_{\mathbf r^{+}}(k;w)\ <\ \Phi_{\mathbf r^{+}}(k-1;w),
\]
so $\mathbf r^{+}$ is feasible for $C(w;\mathbf p,k)$ with slack.
By continuity, for $\theta<1$ sufficiently close to $1$ the same strict inequality holds at $\mathbf r_\theta$, so $|R_w(\mathbf r_\theta)|\ge k$.
Since $c(\mathbf p,\mathbf r_\theta)<c(\mathbf p,\mathbf r^{+})$, we obtain
\[
  C(w;\mathbf p,k)\ \le\ c(\mathbf p,\mathbf r_\theta)\ <\ c(\mathbf p,\mathbf r^{+})\ =\ C(w+\varepsilon;\mathbf p,k).
\]
Thus $C(w+\varepsilon;\mathbf p,k)>C(w;\mathbf p,k)$.
Since $\varepsilon>0$ was arbitrary, $w\mapsto C(w;\mathbf p,k)$ is strictly increasing on $\{w\ge w_A:\ C(w;\mathbf p,k)<\infty\}$.

\qed

\section{Empirical Details and Additional Figures}\label{app:empirics_data_appendix}

This appendix records the data construction rules, the computation of implied weights, and
additional tables referenced in Section~\ref{sec:empirics}.

\subsection{Data construction and validation rules}\label{app:data_rules}

We begin from MIT Election Lab United States House general election results. For each state year,
we construct district level two party vote shares and retain state years that are compatible with
our two party, equal district mass framework.

\begin{enumerate}
  \item \textbf{Party labels and fusion endorsements.}
  If a candidate is endorsed by a major party (Democratic or Republican) and by a minor party in the
  same race, we treat the candidate as the major party candidate.

  \item \textbf{Vote totals.}
  The raw total vote count may include blank, void, or scattered votes. Since these votes cannot be
  assigned to a party, we remove them from the total before computing vote shares.

  \item \textbf{Non major party dominated races.}
  In some state years, a non major party candidate can be electorally relevant in a way that makes
  a two party approximation unsuitable. We exclude state years in which more than one quarter of
  districts are classified as non major party dominated. In our implementation, a district is
  classified as non major party dominated if either (i) no Democratic or Republican candidate
  appears on the general election ballot, or (ii) the district winner is not affiliated with either
  major party

  \item \textbf{Minimum number of districts.}
  After applying the preceding filters, we exclude state years with fewer than eight districts.
  With very small delegations, seat totals are coarse and the implied switching thresholds can be
  mechanically volatile.
\end{enumerate}

\subsection{Uncontested and missing two party races}\label{app:baseline_imputation}

Our model requires a two party vote share in every district. In uncontested races, or in races
where a major party candidate is absent, the observed district vote share is not informative about
latent partisan support. We use a baseline statewide race to proxy underlying party strength in such districts.

For any district that is uncontested or non major party dominated under the criteria above, we
identify the closest presidential election year with the same congressional district map. If two
presidential elections are equidistant and share the same map, we average their district level vote
shares.

\newpage
\subsection{Additional figures}

\begin{figure}[ht!]
    \centering
    \includegraphics[width=.9\linewidth]{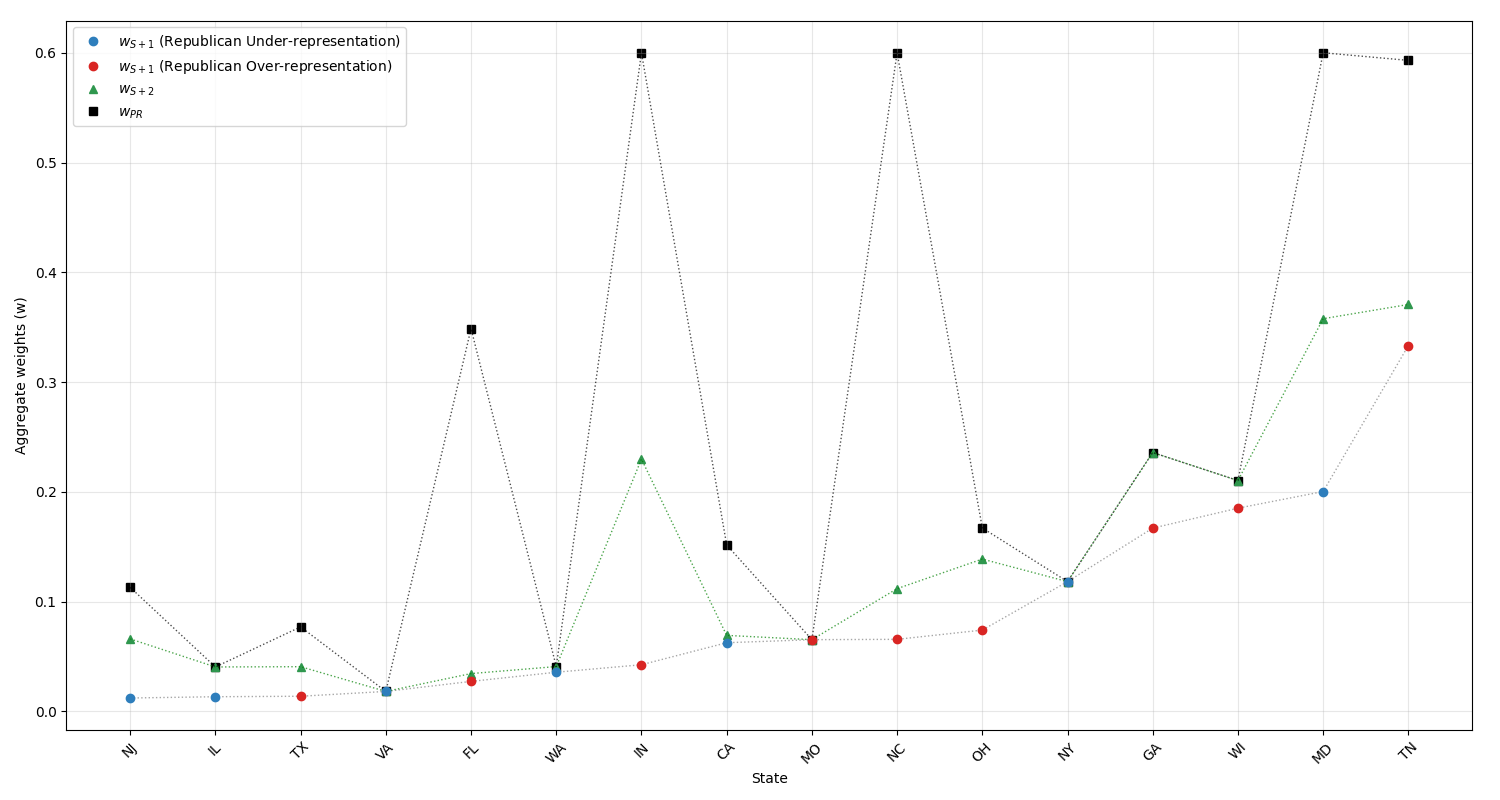}
    \caption{Aggregate weights required to flip the first, second, and last seat in 2020: The capped values of the optimal weight are $0.64$ for Indiana, $1.42$ for North Carolina and, $0.67$ for Maryland}
    \label{2020_weights}
\end{figure}

\begin{figure}[ht!]
    \centering
    \includegraphics[width=.9\linewidth]{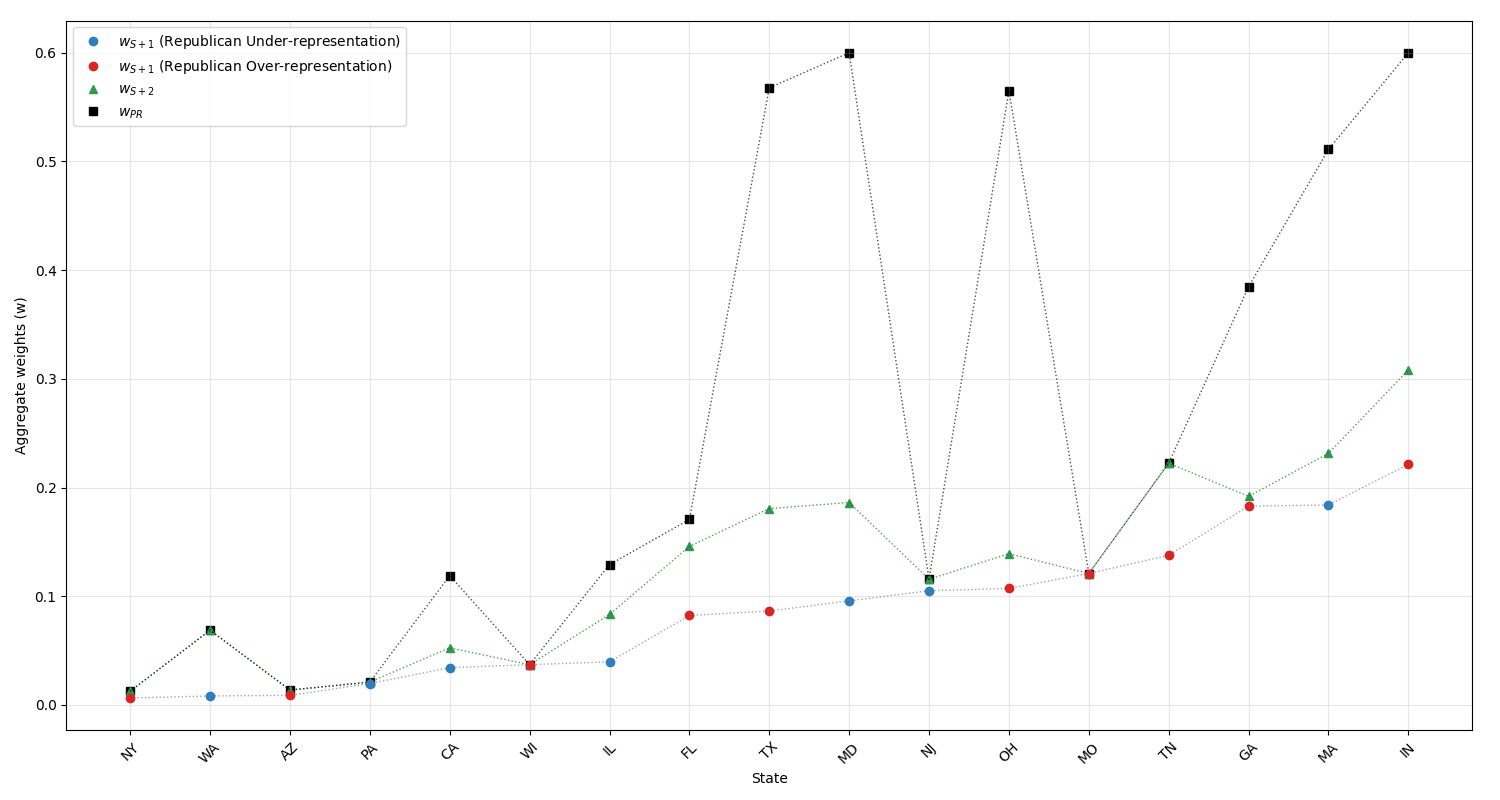}
    \caption{Aggregate weights required to flip the first, second, and last seat in 2022. The capped values of the optimal weight are $0.38$ for Maryland and $2.34$ for Indiana.}
    \label{2020_weights}
\end{figure}

\begin{figure}[ht!]
    \centering
    \includegraphics[width=.9\linewidth]{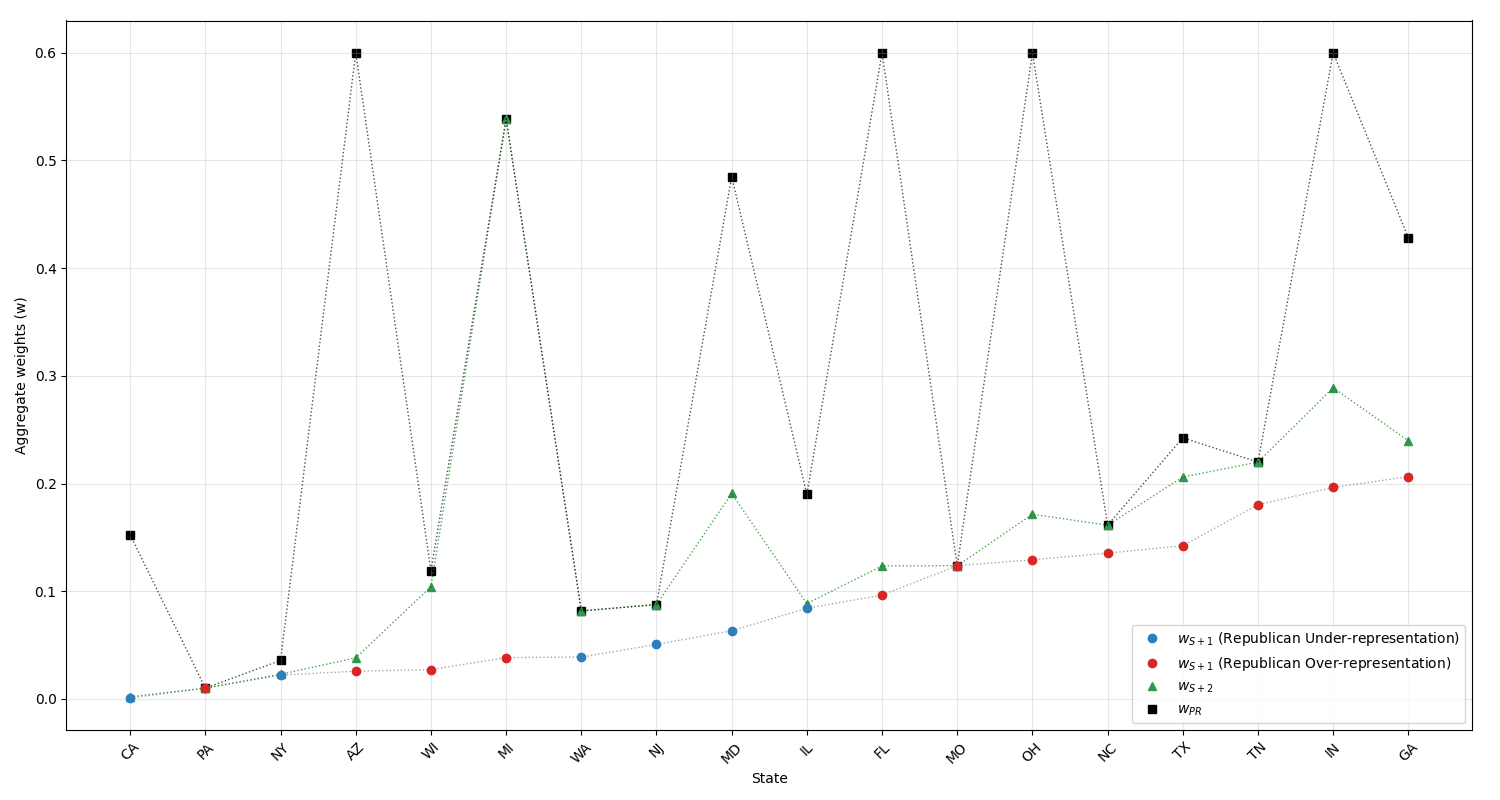}
    \caption{Aggregate weights required to flip the first, second, and last seat in 2024. The capped values of the optimal weight are $46.41$ for Arizona, $1.42$ for Florida, $0.77$ for Ohio, and $1.18$ for Indiana}
    \label{2020_weights}
\end{figure}

\end{document}